\newtheorem{assumption}{Assumption}
\newtheorem{cor}{Corollary}
\newtheorem{exa}{Example}
\newtheorem{lemma}{Lemma}
\newtheorem{thm}{Theorem}
\newtheorem{prop}{Proposition}
\newtheorem{remark}{Remark}
\DeclareMathOperator*{\Dg}{\mathsf{diag}}
\DeclareMathOperator*{\argmax}{argmax}
\newcommand{\csx}[1]{\textcolor{blue}{#1}}
\newcommand{\hma}[1]{\textcolor{black}{#1}}
\def\BibTeX{{\rm B\kern-.05em{\sc i\kern-.025em b}\kern-.08em
    T\kern-.1667em\lower.7ex\hbox{E}\kern-.125emX}}
\begin{document}
\title{\LARGE \bf
On Metzler positive systems on hypergraphs }

\author{Shaoxuan Cui$^{1,4}$, Guofeng Zhang$^{2}$, Hildeberto Jardón-Kojakhmetov$^{1}$ and Ming Cao$^{3}$ 
\thanks{$^{1}$ S. Cui, and H. Jard\'on-Kojakhmetov are with the Bernoulli Institute for Mathematics, Computer Science and Artificial Intelligence, University of Groningen, Groningen, 9747 AG Netherlands {\tt\small \{s.cui, h.jardon.kojakhmetov\}@rug.nl}}
\thanks{$^{2}$ G. Zhang is with the Department of Applied Mathematics, The Hong Kong Polytechnic University, Kowloon 999077, Hong Kong, China and The Hong Kong Polytechnic University Shenzhen Research Institute, Shenzhen, Guang Dong 518057, China {\tt\small guofeng.zhang@polyu.edu.hk}}
\thanks{$^{3}$ M. Cao is with the Engineering and Technology Institute Groningen, University of Groningen, Groningen, 9747 AG Netherlands {\tt\small m.cao@rug.nl}}
\thanks{$^{4}$ S.Cui was supported by China Scholarship Council. The work of Cao was supported in part by the Netherlands Organization for Scientific Research (NWO-Vici-19902).}
}

\maketitle

\begin{abstract}
In graph-theoretical terms, an edge in a graph connects two vertices while a hyperedge of a hypergraph connects any more than one vertices. If the hypergraph's hyperedges further connect the same number of vertices, it is said to be \emph{uniform}. In algebraic graph theory, a graph can be characterized by an adjacency matrix, and similarly, a uniform hypergraph can be characterized by an adjacency \emph{tensor}.  This similarity enables us to extend existing tools of matrix analysis for studying dynamical systems evolving on graphs to the study of a class of polynomial dynamical systems evolving on hypergraphs utilizing the properties of tensors. To be more precise, in this paper, we first extend the concept of a Metzler matrix to a Metzler tensor and then describe some useful properties of such tensors. Next, we focus on positive systems on hypergraphs associated with Metzler tensors. More importantly, we design control laws to stabilize the origin of this class of Metzler positive systems on hypergraphs. In the end, we apply our findings to two classic dynamical systems: a higher-order Lotka-Volterra population dynamics system and a higher-order SIS epidemic dynamic process. The corresponding novel stability results are accompanied by ample numerical examples.
\end{abstract}

\begin{IEEEkeywords}
Hypergraphs, Higher-order interactions, Polynomial systems, H-eigenvalues, Perron-Frobenius Theorem, Stability, Feedback control
\end{IEEEkeywords}

\section{Introduction}
Polynomial dynamical systems \cite{craciun2019polynomial} are powerful tools to characterize a wide range of real-world complex systems, including biological, chemical, medical, engineering, and social systems. Chemical reactions \cite{donnell2013local,craciun2006understanding,angeli2009tutorial,ji2021autonomous} take place in energy, environmental, biological, and many other natural systems, and the inference of the reaction networks is of great significance for the understanding and design of chemical processes in engineering and life sciences \cite{ji2021autonomous}. In biology, the classic Lotka-Volterra model regarding the evolution of the species' population also belongs to the class of polynomial systems \cite{takeuchi1978stability,sb2010,goh1976global,goh1979stability}, where researchers treat the species \emph{pair} as a fundamental unit and only capture \emph{pairwise} interactions. The \emph{pairwise} interactions are associated with direct and additive effects of one species on another. In the classical model \cite{takeuchi1978stability,sb2010,goh1976global,goh1979stability}, one assumes interaction is always pairwise. Since the adjacency matrix of a graph usually represents the pairwise correspondence of species, the whole model can be abstracted as a system on a graph. Matrices and graphs are very useful tools for analysis of the Lotka-Volterra model \cite{sb2010}. Regarding epidemics and information diffusion in social networks, networked compartment models such as SIS\cite{liu2020stability,cui2022discrete,pare2018analysis,liu2019analysis}, SIR \cite{zhang2023analysis,zhang2023estimation}, SIRS \cite{liu2016node,liu2021optimal} are all polynomial systems. In the classical setting, the interaction among people (one sick person infects another person) is set to be pairwise as well. Thus, matrices and graphs also play an important role in the analysis of these systems \cite{liu2020stability,cui2022discrete,liu2019analysis,pare2018analysis,zhang2023analysis,zhang2023estimation,liu2016node,liu2021optimal}. 

Pairwise interactions and their representations as a graph may not always be appropriate because they exclude the possibility of group-wise interactions i.e. \emph{higher-order} interactions. In a recent study in ecology \cite{mayfield2017higher}, \emph{HOIs} (Higher-order Interactions) are introduced to represent non-additive effects among species and the existence of which is supported by empirical evidence, e.g. the one on natural plant communities \cite{mayfield2017higher}. The paper 
\cite{letten2019mechanistic} brings  HOIs into the Lotka–Volterra competition model and then demonstrates the effect of HOIs, by performing simulations with empirical data, showing that HOIs appear under almost all assumptions and help to improve the accuracy of predictions. In parallel, for compartment epidemic models, \cite{iacopini2019simplicial} argues that the social contagion processes, like opinion formation, and information diffusion, may depend more on group-wise interactions, which can be captured by HOIs. Following this idea, several novel epidemic models \cite{iacopini2019simplicial,cisneros2021multigroup,cui2023general,st2021universal,li2022competing} are developed by considering HOIs. While the pairwise interactions refer to a graph, the higher-order interactions usually correspond to the higher-order network \cite{bick2023higher} (sometimes referred to as simplicial complexes and hypergraphs). However, for the analysis of the higher-order Lotka-Volterra and epidemic models, there still remains an open question on how the hypergraph will influence the system evolving on it. Most current research \cite{iacopini2019simplicial,st2021universal,li2022competing,letten2019mechanistic} either relies on simulations or performs model reduction. Thus, a full analysis has not been carried out.

From a network perspective, a coupled cell system is a network of dynamical systems, where the fundamental units, ``cells'', are coupled together. Such systems are widely considered in control engineering \cite{lin2008distributed}, and mathematics \cite{bick2023higher,stewart2003symmetry,golubitsky2002patterns}. They can be represented schematically by a directed network whose nodes correspond to cells and whose edges represent couplings \cite{stewart2003symmetry}. A system composed of many agents interacting with each other can be naturally considered as a coupled cell system \cite{lin2008distributed}. A typical example of a coupled cell system with polynomial coupling is chemical reaction networks \cite{lin2008distributed}. Coupled cell
systems on hypergraphs are proposed in \cite{bick2023higher} by considering higher-order interactions and introducing the concept of hypergraphs. However, there is no general way to analyze such hypergraph systems.

It is known that the behavior of a system on a (conventional) graph is closely related to its corresponding graph, which is captured, for example, by an adjacency matrix. Among the matrices frequently used to model interesting real-life behavior, one of the most important matrices is the Metzler matrix \cite{FB-LNS}, whose off-diagonal elements are all non-negative. Metzler matrices are commonly associated with cooperative systems \cite{sb2010}, such as the cooperative Lotka-Volterra \cite{sb2010}, and single-virus systems \cite{liu2020stability,cui2022discrete,pare2018analysis}. For the analysis of such systems, the Perron–Frobenius theorem \cite{FB-LNS} is a fundamental tool, which indicates that an irreducible nonnegative matrix has a positive eigenvalue corresponding to a positive eigenvector. Analogously, it is shown that a hypergraph can be represented by an adjacency tensor \cite{bick2023higher,gallo1993directed}. Furthermore, the tensor version of a Perron–Frobenius theorem is also available \cite{chang2013survey,chang2008perron,yang2010further,yang2011further}, which serves as a potential suitable tool to study systems on hypergraphs. However, to the best of our knowledge, the Perron–Frobenius theorem of a tensor has never been used to study a dynamical system.

Besides the analysis of a networked system, one uses matrix theory to design control laws. In modern control theory, the $H_\infty$ controller \cite{gershon2005h} and LQR-controller \cite{okyere2019lqr} are two typical examples. Moreover, \cite{rantzer2011distributed} proposes a distributed control law for a class of linear positive systems with the help of the properties of a Metzler matrix. Then, a natural question arises, namely whether we can further utilize the properties of a tensor to design a control law for a class of systems on a hypergraph.

The contributions of this paper are briefly summarized as follows. Firstly, we introduce the concept of a Metzler tensor, develop the corresponding Perron–Frobenius theorem of an irreducible Metzler tensor and explain its relationship with $\mathcal{M}$-tensors studied in e.g.  \cite{zhang2014m,ding2013m,ding2016solving}. Secondly, we use the Perron–Frobenius theorem to construct a Lyapunov-like function to study a class of positive systems on hypergraphs. To illustrate our contributions, we apply our techniques to two real systems on hypergraphs, a higher-order Lotka-Volterra and a higher-order SIS model. Thirdly, we develop a feedback control strategy to stabilize the origin of a dynamical system on a hypergraph, and further study the case of a constant control input. Finally, all the theoretical results are supported by numerical examples.

\emph{Notation:} $\mathbb{C}$ and $\mathbb{R}$ denote the set of complex and real numbers, respectively. $\mathbb{R}_+$ denotes the set of positive real numbers. For a matrix $M \in \mathbb{R}^{n \times r}$ and a vector $a \in \mathbb{R}^n$, $M_{ij}$ and $a_{i}$ denote the element in the $i$th row and $j$th column and the $i$th entry, respectively. Given a square matrix $M \in \mathbb{R}^{n \times n}$, $\rho(M)$ denotes the spectral radius of $M$, which is the largest absolute value of the eigenvalues of $M$. The notation $|M|$ denotes the matrix whose entry $|M|_{ij}$ is the absolute value of $M_{ij}$. For any two vectors $a, b \in \mathbb{R}^n$, $a > (<) b$ represents that $a_i >(<) b_i$, for all $i=1,\ldots,n$; and $a \geq (\leq) b$ means that $a_i \geq (\leq) b_i$, for all $i=1,\ldots,n$. These component-wise comparisons are also used for matrices or tensors with the same dimension. The vector $\mathbf{1}$ ($\mathbf{0}$) represents the column vector or matrix of all ones (zeros) with appropriate dimensions. The previous notations have straightforward extensions to tensors. In the following section, we introduce the preliminaries on tensors and some further notations regarding tensors. \csx{Furthermore, the notations frequently used in this paper are listed in the table \ref{tab:notation}.}

\begin{table}[]
\caption{Notations frequently used in this paper}
\label{tab:notation}
\begin{tabular}{@{}ll@{}}
\hline
$\mathbf{1}$ &The all-one tensor (matrix, vector) with appropriate dimensions   \\ 
$\mathbb{R}$ & Set of real numbers  \\
$\mathbf{0}$ & The all-zero tensor (matrix, vector) with appropriate dimensions\\
$|M|$& The tensor (matrix) whose entry is the absolute value of \\
&the entry of $M$.\\
$\rho(M)$ & The spectral radius of a cubical tensor (square matrix) $M$ \\ 
$\mathbb{R}^{[n,k]}$ & $n$ is the dimension of a tensor and $k$ is the order.\\
$A x^{{k}-1}$ & Tensor-vector product to the power of $k-1$.\\
$x^{[{k}-1]}$ & Vector-vector product to the power of $k-1$.\\
\hline
\end{tabular}
\end{table}

\section{Preliminaries on tensors and hypergraphs}

A tensor $T\in\mathbb{R}^{n_1\times n_2 \times \cdots \times n_k}$ is a multidimensional array, where the order is the number of its dimension $k$ and each dimension $n_i$, $i=1,\cdots,k,$ is a mode of the tensor. A tensor is cubical if every mode has the same size, that is $T\in\mathbb{R}^{n\times n \times \cdots \times n}$. We further write a $k$-th order $n$ dimensions cubical tensor as $T\in\mathbb{R}^{n\times n \times \cdots \times n}=\mathbb{R}^{[n,k]}$. A cubical tensor $T$ is called supersymmetric if $T_{j_1 j_2 \ldots j_k}$ is invariant under any permutation of the indices. For the rest of the paper, a tensor always refers to a cubical tensor unless specified otherwise.

The identity tensor $\mathcal{I}=\left(\delta_{i_1 \cdots i_m}\right)$ is defined as
\begin{equation*}
   \delta_{i_1 \cdots i_m}= \begin{cases}1 & \text { if } i_1=i_2=\cdots=i_m \\ 0 & \text { otherwise }\end{cases} .
\end{equation*}
The diagonal entries of a tensor are the entries with all the same index, for example, $A_{iiiiii}$. All other entries are called off-diagonal entries.

We then consider the following notation: the tensor-vector product to the power $k-1$, where $k$ is the order of the tensor, $A x^{k-1}$
  and the vector-vector product to the power $k-1$: $x^{[k-1]}$ are vectors, whose $i$-th components are
$$
\begin{aligned}
\left(A x^{{k}-1}\right)_i & =\sum_{i_2, \ldots, i_{{k}}=1}^n A_{i, i_2 \cdots i_{{k}}} x_{i_2} \cdots x_{i_{{k}}}, \\
\left(x^{[{k}-1]}\right)_i & =x_i^{{k}-1}.
\end{aligned}
$$
From the definition, we see that the product $Ax^{k-1}$ can capture an arbitrary homogeneous polynomial vector field since each entry of the product is a homogeneous polynomial. For example,
\begin{equation}\label{eq:example}
    \begin{split}
        \dot{x}_1&=-x_1^2+x_1 x_2,\\ \dot{x}_2&=x_1^2+x_1 x_2-x_2^2
    \end{split}
\end{equation}
can be written as $\dot{x}=Ax^{k-1}$, where $A_{111}=-1, A_{112}=1, A_{211}=1, A_{212}=1, A_{222}=-1$ and all other entries of $A$ are zero. In general, the choice of the tensor $A$ is not unique. If we choose $A_{111}=-1, A_{112}=0.5, A_{121}=0.5, A_{211}=1, A_{212}=1, A_{222}=-1$ and all other entries of $A$ are zero, then we get the same vector field \eqref{eq:example}.

Recall that $A x^{{k}-1} \text{ and }  x^{[{k}-1]}$ are both vectors. For a tensor, consider the following homogeneous polynomial equation:
\begin{equation}\label{eq:eigenproblem}
A x^{{k}-1}=\lambda x^{[{k}-1]},
\end{equation}
where if there is a real number $\lambda$ and a nonzero real vector $x$ that are solutions of \eqref{eq:eigenproblem}, then $\lambda$ is called an H-eigenvalue of $A$ and $x$ is the H-eigenvector of $A$ associated with $\lambda$ \cite{qi2005eigenvalues,zhang2014m,chang2013survey}. {The number and distribution of H-eigenvalues of a real supersymmetric tensor are studied by \cite{qi2005eigenvalues}. It will be briefly introduced in the later section \ref{sec:heig}.}
Throughout this paper, the words eigenvalue and eigenvector as well as H-eigenvalue and H-eigenvector are used interchangeably. It is also worth mentioning that there are still some other kinds of definitions of the eigenvalues and eigenvectors of a tensor, e.g. Z-eigenvalues \cite{chang2013survey,qi2005eigenvalues} and U-eigenvalues \cite{zhang2020iterative}. However, in this paper, we will always refer to H-eigenvalues and H-eigenvectors as defined above.

It is straightforward to check that $A x^{{k}-1}$ and $x^{[{k}-1]}$ satisfy the following
: if $Ax^{k-1}=k_1 x^{[k-1]}$ and $Bx^{k-1}=k_2 x^{[k-1]}$, then $Ax^{k-1}+Bx^{k-1}=(A+B)x^{k-1}=k_1 x^{[k-1]}+k_2 x^{[k-1]}=(k_1+k_2)x^{[k-1]}$.

The spectral radius of the tensor $A$ is defined as
$
\rho(A)=\max \{|\lambda|: \lambda \text { is an eigenvalue of } A\}.
$
A tensor $\mathcal{C}=\left(c_{{i_1} \ldots {i_m}}\right)$ of order $m$ dimension $n$ is called reducible if there exists a nonempty proper index subset $I \subset\{1, \ldots, n\}$ such that
$$
c_{i_1 \cdots i_m}=0 \quad \forall i_1 \in I, \quad \forall i_2, \ldots, i_m \notin I .
$$
If $\mathcal{C}$ is not reducible, then we call $\mathcal{C}$ irreducible. A tensor with all non-negative entries is called a non-negative tensor. Let $A$ be an $m$-order and $n$-dimensional tensor. The tensor $A$ is called an $\mathcal M$-tensor if there exist a nonnegative tensor $B$ and a positive real number $\eta \geq \rho(B)$ such that
$$
A=\eta \mathcal{I}-B
$$
If $\eta>\rho(B)$, then $A$ is called a non-singular $\mathcal M$-tensor. {Moreover, a tensor ${A}$ is called diagonally dominant if
$$
\left|A_{i i \ldots i}\right| \geq \sum_{\left(i_2, \ldots, i_m\right) \neq(i, \ldots, i)}\left|A_{i i_2 \ldots i_m}\right| \quad \text { for all } i=1,2, \ldots, n;
$$
and is called strictly diagonally dominant if
$$
\left|A_{i i \ldots i}\right|>\sum_{\left(i_2, \ldots, i_m\right) \neq(i, \ldots, i)}\left|A_{i i_2 \ldots i_m}\right| \quad \text { for all } i=1,2, \ldots, n.
$$

We now recall the Perron–Frobenius Theorem for irreducible nonnegative tensors:
\begin{lemma}[Theorem 3.6 \cite{chang2013survey}]\label{lem:perron}
If $A$ is a nonnegative irreducible tensor of order $m$ dimension $n$ and $\rho(A)$ is its spectral radius, then the following hold:
\begin{itemize}
    \item [(1)] $\rho(A)>0$.
\item[(2)] There is a strictly positive eigenvector $x >\mathbf{0}$ corresponding to $\rho(A)$.
\item[(3)] If $\lambda$ is an eigenvalue with nonnegative eigenvector, then $\lambda=\rho(A)$. Moreover, the nonnegative eigenvector is unique up to a multiplicative positive constant.
\item[(4)] If $\lambda$ is an eigenvalue of $A$, then $|\lambda| \leq \rho(A)$.
\end{itemize}
\end{lemma}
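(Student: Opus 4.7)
The plan is to adapt the classical matrix Perron--Frobenius argument to the nonlinear tensor setting, exploiting the fact that although the map $x \mapsto Ax^{m-1}$ is not linear, its componentwise $(m-1)$-th root $x \mapsto (Ax^{m-1})^{[1/(m-1)]}$ is positively homogeneous of degree one. This restores the projective scale-invariance that lets a topological fixed-point theorem play the role of the spectral theorem used in the matrix case.

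First, on the simplex $\Delta = \{x \in \mathbb{R}^n : x \geq \mathbf{0},\ \mathbf{1}^\top x = 1\}$, I would consider, for a small $\varepsilon > 0$, the continuous self-map
\[
F_\varepsilon(x) = \frac{(Ax^{m-1}+\varepsilon\mathbf{1})^{[1/(m-1)]}}{\mathbf{1}^\top (Ax^{m-1}+\varepsilon\mathbf{1})^{[1/(m-1)]}},
\]
where the perturbation ensures that the denominator stays strictly positive on $\Delta$. Brouwer's fixed-point theorem yields $x_\varepsilon \in \Delta$ and $\lambda_\varepsilon > 0$ with $Ax_\varepsilon^{m-1}+\varepsilon\mathbf{1} = \lambda_\varepsilon x_\varepsilon^{[m-1]}$, and compactness of $\Delta$ together with uniform boundedness of $\lambda_\varepsilon$ lets me pass to a limit as $\varepsilon \to 0^+$, producing $x^\star \in \Delta$ and $\lambda \geq 0$ with $A(x^\star)^{m-1} = \lambda (x^\star)^{[m-1]}$. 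To upgrade $x^\star \geq \mathbf{0}$ to $x^\star > \mathbf{0}$, suppose $I = \{i : x^\star_i = 0\}$ were a nonempty proper subset of $\{1,\ldots,n\}$: for $i \in I$ the equation forces $A_{i,i_2\cdots i_m}\, x^\star_{i_2}\cdots x^\star_{i_m} = 0$ term by term, so $A_{i,i_2\cdots i_m}=0$ whenever $i \in I$ and $i_2,\ldots,i_m \notin I$, contradicting irreducibility. Hence $x^\star > \mathbf{0}$, and irreducibility also rules out the degenerate case $A(x^\star)^{m-1} = \mathbf{0}$, giving $\lambda > 0$ and thereby parts (1) and (2).

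For parts (3) and (4), I would establish the Collatz--Wielandt identity
\[
\lambda = \max_{x > \mathbf{0}}\ \min_i\ \frac{(Ax^{m-1})_i}{x_i^{m-1}} = \min_{x > \mathbf{0}}\ \max_i\ \frac{(Ax^{m-1})_i}{x_i^{m-1}}.
\]
If $\lambda'$ is an eigenvalue with a nonnegative eigenvector $x'$, the support-propagation argument again forces $x' > \mathbf{0}$, so $\lambda' = (A(x')^{m-1})_i/(x'_i)^{m-1}$ is independent of $i$, and substituting $x'$ into the two halves of the identity yields $\lambda' \leq \lambda \leq \lambda'$; the uniqueness of the positive eigenvector up to positive scaling then follows from a Hilbert-projective-metric contraction argument that exploits the strict monotonicity of the renormalized map on the interior of $\mathbb{R}^n_+$. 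For a complex eigenpair $(\mu,y)$, taking entrywise absolute values in $Ay^{m-1} = \mu y^{[m-1]}$ yields $|\mu|\,|y|^{[m-1]} \leq A|y|^{m-1}$, and plugging $|y|$ (with a small positivity perturbation if needed) into the max--min half of the identity gives $|\mu| \leq \lambda = \rho(A)$, proving part (4).

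The main obstacle throughout is the loss of linearity: Gelfand's spectral-radius formula and Jordan-form arguments have no tensor analogue, and every classical step must be rebuilt from homogeneity, compactness, and the support-propagation consequences of irreducibility. The most fragile step is the zero-perturbation limit $\varepsilon \to 0$, where one must confirm that $\lambda_\varepsilon$ stays bounded away from zero and that $x_\varepsilon$ does not drift into a coordinate subspace on which the eigenequation degenerates; the uniqueness claim in part (3) is the other delicate piece, because unlike in the matrix case one cannot form linear combinations of eigenvectors and must instead invoke a nonlinear contraction argument.
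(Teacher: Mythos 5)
You are not comparing against anything in the paper here: the paper does not prove this lemma at all, it imports it verbatim as Theorem 3.6 of the cited survey of Chang et al. So the only fair benchmark is the standard proof in that literature (Chang--Pearson--Zhang, Yang--Yang), and your skeleton is indeed that proof: Brouwer's theorem applied to a normalized, $\varepsilon$-perturbed power-type map on the simplex, passage to the limit $\varepsilon\to 0^+$ with $\lambda_\varepsilon$ bounded, and support propagation via irreducibility to force $x^\star>\mathbf{0}$ and $\lambda>0$. That part, i.e.\ claims (1)--(2), is essentially sound. Two of your later steps, however, would fail as written.

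First, the uniqueness claim in (3): you invoke a Hilbert-projective-metric \emph{contraction}. Under mere irreducibility the map $x\mapsto (Ax^{m-1})^{[1/(m-1)]}$ is order-preserving and homogeneous of degree one, hence only nonexpansive in Hilbert's metric; it need not be a contraction nor strongly monotone on the interior. Already for $m=2$ a permutation matrix is irreducible and induces a Hilbert isometry, yet uniqueness still holds -- so contraction is the wrong mechanism. The correct argument is an equality-case/support-propagation one: if $Ax^{m-1}=\lambda x^{[m-1]}$ and $Ay^{m-1}=\lambda y^{[m-1]}$ with $x,y>\mathbf{0}$, set $t=\min_i y_i/x_i$, so $y\geq tx$ with $y_j=tx_j$ for some $j$; then $\lambda y_j^{m-1}=(Ay^{m-1})_j\geq t^{m-1}(Ax^{m-1})_j=\lambda y_j^{m-1}$ forces every term with $A_{j,i_2\cdots i_m}>0$ to satisfy $y_{i_l}=tx_{i_l}$ for all $l$, and if the set $J=\{i:y_i=tx_i\}$ were a proper subset this vanishing pattern would exhibit reducibility; hence $y=tx$. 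Second, in (4) the parenthetical ``small positivity perturbation'' does not repair the max--min formula when $|y|$ has zero entries: at an index with $|y_i|=0$ the ratio $\bigl(A(|y|+\eta\mathbf{1})^{m-1}\bigr)_i/\eta^{m-1}$ admits no lower bound of order $|\mu|$, and -- unlike for exact nonnegative eigenvectors -- zeros of $|y|$ do not propagate from the inequality $|\mu|\,|y|^{[m-1]}\leq A|y|^{m-1}$, so you cannot force $|y|>\mathbf{0}$. The standard fix bypasses the identity: compare $|y|$ directly with the strictly positive eigenvector $\delta$ via $t=\max_i |y_i|/\delta_i$, attained at some $j$ with $|y_j|>0$, giving $|\mu|\,|y_j|^{m-1}\leq (A|y|^{m-1})_j\leq t^{m-1}\lambda\,\delta_j^{m-1}=\lambda |y_j|^{m-1}$. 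The same two-sided comparison against $\delta$ is also how one proves the Collatz--Wielandt identity that you left as a bare assertion, so (3)--(4) really rest on that one comparison trick plus your fixed-point construction, not on contraction arguments.
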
\bigskip

\begin{remark}
    There is also a weak version of the Perron–Frobenius Theorem (Theorem 3.5 \cite{chang2013survey}), which is applicable to nonnegative tensors with a nonnegative eigenvector. Since we shall concentrate on positive systems, the version in Lemma \ref{lem:perron} is more useful for our purposes. Note that Lemma \ref{lem:perron} is only valid for tensors' H-eigenvalues.
\end{remark}

Furthermore, for nonnegative tensors, we have the following properties.

\begin{lemma}[Theorems 2.19 and 2.20 \cite{yang2011further}]\label{lem:2}
   Let $A,\,B$ be nonnegative tensors. If $0 \leq B \leq A$, then $\rho(B) \leq \rho(A)$. If $0 \leq B \leq A$, where $A$ is irreducible and $A \neq B$, then $\rho(A)>\rho(B)$.
\end{lemma}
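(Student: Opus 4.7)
The proof rests on the Perron--Frobenius theorem (Lemma \ref{lem:perron}) together with a Collatz--Wielandt-style monotonicity argument applied to the positive Perron eigenvector of $A$. The plan is to first obtain the non-strict inequality from a direct pointwise comparison with that eigenvector, and then derive the strict inequality by a contradiction argument that forces $A$ and $B$ to agree on every entry.

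For the non-strict statement I would first handle the case when $A$ is irreducible. By Lemma \ref{lem:perron} there exists $x > \mathbf{0}$ with $Ax^{k-1} = \rho(A)\,x^{[k-1]}$. Since $0 \le B \le A$ and $x > \mathbf{0}$, componentwise monotonicity of the tensor-vector product yields $Bx^{k-1} \le Ax^{k-1} = \rho(A)\,x^{[k-1]}$. I would then invoke the Collatz--Wielandt upper bound for nonnegative tensors, namely that for any $y > \mathbf{0}$ and any nonnegative tensor $C$ one has $\rho(C) \le \max_i (Cy^{k-1})_i/y_i^{k-1}$; applied with $y = x$ and $C = B$ this gives $\rho(B) \le \rho(A)$. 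When $A$ is only nonnegative and possibly reducible, I would approximate by the strictly positive (hence irreducible) tensor $A_\varepsilon := A + \varepsilon\,\mathbf{1}$, apply the irreducible case to $B \le A \le A_\varepsilon$, and let $\varepsilon \to 0^+$ using continuity of $\rho$ in the tensor entries.

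For the strict inequality I would argue by contradiction: suppose $A$ is irreducible, $B \ne A$, $0 \le B \le A$, and $\rho(B) = \rho(A)$. The weak Perron--Frobenius theorem for general nonnegative tensors (Theorem~3.5 of \cite{chang2013survey}, as noted in the remark following Lemma \ref{lem:perron}) supplies a nonzero nonnegative eigenvector $y$ of $B$ satisfying $By^{k-1} = \rho(B)\,y^{[k-1]} = \rho(A)\,y^{[k-1]}$, and componentwise monotonicity gives $Ay^{k-1} \ge By^{k-1} = \rho(A)\,y^{[k-1]}$. At this point I would invoke the maximality characterization of the Perron value for an irreducible nonnegative tensor: any nonzero nonnegative $y$ obeying $Ay^{k-1} \ge \rho(A)\,y^{[k-1]}$ must in fact satisfy $y > \mathbf{0}$ and $Ay^{k-1} = \rho(A)\,y^{[k-1]}$ (a standard consequence of Lemma \ref{lem:perron}(3) and the irreducible Collatz--Wielandt characterization). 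Hence $(A-B)\,y^{k-1} = \mathbf{0}$; with $A-B \ge \mathbf{0}$ and $y > \mathbf{0}$, every summand of the nonnegative sum $\sum_{i_2,\dots,i_k}(A-B)_{i\,i_2\cdots i_k}\,y_{i_2}\cdots y_{i_k}$ must vanish, forcing $(A-B)_{i\,i_2\cdots i_k} = 0$ for all index tuples, and hence $A = B$, a contradiction.

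The main obstacle is precisely this last step: one has no irreducibility assumption on $B$, so Lemma \ref{lem:perron} cannot be applied to $B$ directly. This forces reliance on the weak Perron--Frobenius form to extract a nonnegative eigenvector of $B$, and on the uniqueness/maximality portion of Lemma \ref{lem:perron} applied to the irreducible tensor $A$ to collapse the chain of inequalities into the identity $A = B$.
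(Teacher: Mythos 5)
The paper never proves this lemma: it is imported verbatim as Theorems 2.19 and 2.20 of \cite{yang2011further}, so there is no in-paper argument to compare yours against. Your reconstruction follows the standard route of the tensor Perron--Frobenius literature and is sound in outline: the weak inequality by testing a Collatz--Wielandt upper bound at the Perron vector of $A$ (with the perturbation $A+\varepsilon\mathbf{1}$ and a limit when $A$ is reducible), and the strict inequality by contradiction through a nonnegative eigenvector of $B$ at $\rho(B)=\rho(A)$, ending with the correct cancellation $(A-B)y^{k-1}=\mathbf{0}$, $y>\mathbf{0}$, hence $A=B$.

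The caveat is that your pivotal step is under-justified as stated. The claim that any nonzero $y\ge\mathbf{0}$ with $Ay^{k-1}\ge\rho(A)\,y^{[k-1]}$ must be strictly positive and satisfy equality is \emph{not} a consequence of Lemma \ref{lem:perron}(3), which only concerns genuine nonnegative eigenvectors, nor does the max--min Collatz--Wielandt formula alone deliver it (the formula is consistent with a super-eigenvector having some strict components); it is a separate, true but nontrivial, lemma whose proof requires an irreducibility-propagation argument, and it appears as an auxiliary result in \cite{yang2010further,chang2008perron}. Once equality is in hand, Lemma \ref{lem:perron}(3) does give $y>\mathbf{0}$ and the rest goes through. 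Similarly, in the reducible case of the weak inequality you rely on the Collatz--Wielandt upper bound $\rho(B)\le\max_i (Bx^{k-1})_i/x_i^{k-1}$ for a general nonnegative tensor and on the continuity $\rho(A+\varepsilon\mathbf{1})\to\rho(A)$; both are theorems of \cite{yang2010further} rather than facts stated in this paper, and the continuity must be obtained by the limiting-eigenvector argument so as not to presuppose the monotonicity being proved. So your proposal is a correct sketch of the literature proof provided these external ingredients are cited precisely instead of being attributed to Lemma \ref{lem:perron}.
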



In the next, we illustrate some definitions regarding hypergraphs.

The concept of a hypergraph is defined in \cite{gallo1993directed}. Here, we utilize a set of tensors to represent the information of the weights of a hypergraph. A weighted and directed hypergraph is a triplet\csx{ $\mathscr{H}=(\mathcal{V},\mathcal{E}, \Tilde{A})$}. The set $\mathcal{V}$ denotes a set of vertices and $\mathcal{E}=\{E_1, E_2, \cdots,E_n \}$ is the set of hyperedges. A hyperedge is an ordered pair $E=(\mathcal{X},\mathcal{Y})$ of disjoint subsets of vertices; $\mathcal{X}$ is the tail of $E$ and $\mathcal{Y}$ is the head.
As a special case, a weighted and undirected hypergraph is a triplet $\mathscr{H}=(\mathcal{V},\mathcal{E}, \Tilde{A})$, where $\mathcal{E}$ is now a finite collection of non-empty subsets of $\mathcal{V}$. If all hyperedges of the hypergraph contain the same number of (tails, heads) nodes, then the hypergraph is uniform. For details, see \cite{chen2021controllability} for undirected uniform hypergraphs and see \cite{xie2016spectral} for directed uniform hypergraphs. {The connection between undirected hyperedges and directed hyperedges and their physical meanings are introduced in \cite{gallo2022synchronization}. Briefly speaking, for example, an undirected hyperedge of 3 nodes $i,j,k$ which can capture the overall interaction effect when $i,j,k$ are interacting can be decomposed by three kinds of directed hyperedges, namely, $j,k$s' influence on $i$ ($A_{ijk},A_{ikj}$), $i,j$s' influence on $k$ ($A_{kij},A_{kji}$), $i,k$s' influence on $j$ ($A_{jik},A_{jki}$).} For modeling purposes, regarding the nodal dynamics, one directed hyperedge usually denotes the joint influence of a group of agents on one agent. Thus, it suffices to deal with hyperedges with one single tail, and we assume that each hyperedge has only one tail but one or multiple ($\geq 1$) heads. This setting is consistent with \cite{xie2016spectral} and has the advantage that a directed uniform hypergraph can be represented by an adjacency tensor.  In section \ref{sec:modeling}, we will show why this setting is sufficient for modeling nodal dynamics (evolution of nodes). \csx{For now, we mention that from a modeling perspective, the hyperedges with multiple tails may be helpful when one wants to capture the influence of one group of agents on another one. If the interactions of the model are supersymmetric, then an undirected hypergraph can be adopted instead of a directed one. For example, the influence of $i,j$ on $k$ is the same as $j,k$ on $i$ and $i,k$ on $j$.} Generally, an undirected uniform hypergraph can be represented by a supersymmetric adjacency tensor. 
We now introduce the set of tensors ${\Tilde{A}}=\{A_2, A_3,\cdots\}$ to collect the weights of all hyperedges, where $A_2=[A_{ij}]$ denotes the weights of all second-order hyperedges, $A_3=[A_{ijk}]$ denotes the weights of all third-order hyperedges, and so on. For instance, $A_{ijkl}$ denotes the weight of the hyperedge where $i$ is the tail and $j,k,l$ are the heads. For simplicity, in this paper, we also use the weight (for example, $A_{\bullet}$) to denote the corresponding hyperedge. If all hyperedges only have one tail and one head, then the network is a standard directed and weighted graph.

For convenience, throughout this paper, we define a multi-index notation $I=i_2,\cdots,i_k$, where $k$ is the order of the associated tensor.

\section{Perron– Frobenius Theorem of a Metzler Tensor}

Similar to the definition of a Metzler matrix \cite{narendra2010hurwitz}, we define a Metzler tensor as a tensor whose off-diagonal entries are non-negative. Then, it is straightforward to see that any Metzler tensor $A$ can be represented as $A=B-s\mathcal{I}$, where $s$ is a real scalar and $B$ is a nonnegative tensor.

\begin{prop}\label{prop:Metzler}
A Metzler tensor $A=B-s\mathcal{I}$ always has an eigenvalue that is real and equal to $\lambda=\lambda(B)-s$, where $\lambda(B)$ is an eigenvalue of a nonnegative tensor $B$.
\end{prop}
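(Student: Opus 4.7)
The plan is to produce the claimed eigenvalue of $A$ by exhibiting an explicit eigenvector, reusing an eigenvector of $B$. Concretely, I would start from any H-eigenvalue $\lambda(B)$ of the nonnegative tensor $B$ with corresponding H-eigenvector $x$, i.e.\ $Bx^{k-1}=\lambda(B)\,x^{[k-1]}$. Since $B$ is nonnegative, the existence of such a real $\lambda(B)$ (for instance $\lambda(B)=\rho(B)$) together with a nonnegative eigenvector $x$ is guaranteed by the Perron--Frobenius machinery already recalled in Lemma \ref{lem:perron} and its weak version mentioned in Remark 1, so the hypothesis ``$\lambda(B)$ is an eigenvalue of a nonnegative tensor $B$'' is nonvacuous.

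Next I would verify the elementary identity $\mathcal{I}x^{k-1}=x^{[k-1]}$. This follows directly from the definition of the identity tensor: $(\mathcal{I}x^{k-1})_i=\sum_{i_2,\dots,i_k}\delta_{i\,i_2\cdots i_k}x_{i_2}\cdots x_{i_k}=x_i^{k-1}=(x^{[k-1]})_i$. In particular, every nonzero vector $x$ is an H-eigenvector of $\mathcal{I}$ with eigenvalue $1$.

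Then I would invoke the additivity property of tensor-vector products already stated in the preliminaries: if $Bx^{k-1}=k_1 x^{[k-1]}$ and $Cx^{k-1}=k_2 x^{[k-1]}$, then $(B+C)x^{k-1}=(k_1+k_2)x^{[k-1]}$. Applying this to $B$ and to $C=-s\mathcal{I}$ on the common eigenvector $x$ yields
\begin{equation*}
Ax^{k-1}=(B-s\mathcal{I})x^{k-1}=Bx^{k-1}-s\,\mathcal{I}x^{k-1}=\lambda(B)\,x^{[k-1]}-s\,x^{[k-1]}=\bigl(\lambda(B)-s\bigr)x^{[k-1]},
\end{equation*}
so $\lambda=\lambda(B)-s$ is an H-eigenvalue of $A$ with H-eigenvector $x$. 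Choosing $\lambda(B)$ to be a real eigenvalue of $B$ (which is always possible, e.g.\ $\lambda(B)=\rho(B)$) makes $\lambda$ real, finishing the argument.

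There is no substantive obstacle here; the statement is essentially a translation of the scalar shift $B\mapsto B-s\mathcal{I}$ into the multilinear setting. The only thing one has to be careful about is that for tensors of order $k\geq 3$ the linearity of the map $x\mapsto Ax^{k-1}$ in $A$ (for fixed $x$) is what makes the shift work, while the map is \emph{not} linear in $x$; fortunately this is exactly what the additivity property recorded in the preliminaries encapsulates, so I would simply cite that property rather than redo the calculation componentwise.
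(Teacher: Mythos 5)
Your proposal is correct and follows essentially the same route as the paper: take an H-eigenpair of $B$, use $\mathcal{I}x^{k-1}=x^{[k-1]}$ and the additivity of the tensor-vector product to get $Ax^{k-1}=(\lambda(B)-s)x^{[k-1]}$. Your extra remarks (explicitly checking the identity-tensor identity and invoking Perron--Frobenius to ensure a real eigenvalue of $B$ exists) only make explicit steps the paper leaves implicit.
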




\begin{proof}
We remind the readers of the multi-index notation $I=i_2,\cdots,i_k$.
For a nonnegative tensor $B$, we have 
$$
\begin{aligned}
B x^{k-1}&=\lambda(B) x^{[k-1]},\\
\left(B x^{k-1}\right)_i & =\sum_{i_2, \ldots, i_k=1}^n B_{i, I} x_{i_2} \cdots x_{i_k}, \\
\lambda(B)\left(x^{[k-1]}\right)_i & =\lambda(B) x_i^{k-1}.
\end{aligned}
$$
Thus, it follows that 
{$$
\begin{aligned}
Ax^{k-1}&=(B-s\mathcal{I}) x^{k-1}=(\lambda(B)-s) x^{[k-1]},\\
\end{aligned}
$$}

Finally, we see indeed that $\lambda=\lambda(B)-s$.
\end{proof}

Next, we show the Perron-Frobenius theorem for an irreducible Metzler tensor. 

We have the following results.

\begin{thm}\label{thm:perron}
If $B$ is an irreducible nonnegative tensor of order $m$ dimension $n$ with $\rho(B)$ its spectral radius, then $A=B-s\mathcal{I}$ is an irreducible Metzler tensor of order $m$ dimension $n$. Moreover, $A$ has a particular eigenvalue $\lambda(A)$ given by $\lambda(A)=\rho(B)-s$ and called \emph{the Perron-H-eigenvalue of $A$}. Furthermore the following hold:
\begin{itemize}
\item[(1)] There is a strictly positive eigenvector $x >\mathbf{0}$ corresponding to $\lambda(A)$.
\item[(2)] If $\lambda$ is an eigenvalue of $A$ with nonnegative eigenvector, then $\lambda=\lambda(A)$. Moreover, such a nonnegative eigenvector is unique up to a positive multiplicative constant.
\end{itemize}
\end{thm}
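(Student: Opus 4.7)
The plan is to reduce everything to Lemma \ref{lem:perron} via the shift identity $A = B - s\mathcal{I}$. The key observation, already made implicit in Proposition \ref{prop:Metzler}, is that subtracting a scalar multiple of the identity tensor only modifies the diagonal entries $A_{ii\ldots i}$; the off-diagonal entries of $A$ and $B$ coincide. Since the definition of irreducibility only involves off-diagonal entries (the zero pattern appearing in the condition $c_{i_1\cdots i_m}=0$ with $i_1\in I$, $i_2,\ldots,i_m\notin I$ never touches the purely diagonal indices), the irreducibility of $A$ is inherited from that of $B$ with no extra work.

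Next I would invoke Lemma \ref{lem:perron} applied to $B$: this supplies a positive spectral radius $\rho(B)>0$ together with a strictly positive H-eigenvector $x>\mathbf{0}$ satisfying $Bx^{m-1} = \rho(B)\, x^{[m-1]}$. Applying Proposition \ref{prop:Metzler} (or simply repeating the one-line calculation $Ax^{m-1} = Bx^{m-1} - s\mathcal{I}x^{m-1} = (\rho(B)-s)x^{[m-1]}$, noting that $\mathcal{I}x^{m-1} = x^{[m-1]}$ by the definition of $\mathcal{I}$) immediately yields that $\lambda(A) := \rho(B) - s$ is an H-eigenvalue of $A$ with the same strictly positive eigenvector $x$. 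This establishes the existence statement in (1).

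For part (2), suppose $y\geq \mathbf{0}$, $y\neq \mathbf{0}$, and $Ay^{m-1} = \lambda\, y^{[m-1]}$ for some real $\lambda$. Rewriting this as $By^{m-1} = (\lambda + s)\, y^{[m-1]}$ shows that $\lambda + s$ is an H-eigenvalue of the irreducible nonnegative tensor $B$ admitting the nonnegative eigenvector $y$. By part (3) of Lemma \ref{lem:perron}, this forces $\lambda + s = \rho(B)$, i.e.\ $\lambda = \lambda(A)$, and it also forces $y$ to coincide with the Perron eigenvector of $B$ up to a positive multiplicative constant — which is precisely the same $x$ exhibited above. Hence the nonnegative eigenvector associated with $\lambda(A)$ is unique up to positive scaling.

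There is no genuine obstacle here: the theorem is a direct transfer of Lemma \ref{lem:perron} across the diagonal shift, and the only thing worth emphasizing in the write-up is that irreducibility and the eigenvector structure are preserved because $\mathcal{I}x^{m-1} = x^{[m-1]}$ and because the irreducibility condition ignores diagonal entries. If desired, one can additionally remark that the spectral radius of $A$ itself need not equal $|\lambda(A)|$ (since $s$ may be larger than $\rho(B)$), which is why the statement isolates $\lambda(A)$ as a distinguished "Perron-H-eigenvalue" rather than characterizing it as the largest in modulus.
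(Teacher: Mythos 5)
Your proposal is correct and follows essentially the same route as the paper, which simply notes $\lambda(A)=\rho(B)-s$ and applies Lemma \ref{lem:perron}; you merely spell out the details (irreducibility is untouched by the diagonal shift, the eigenvector transfers via $\mathcal{I}x^{m-1}=x^{[m-1]}$, and part (2) reduces to item (3) of Lemma \ref{lem:perron}).
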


\begin{proof}
The proof is straightforward. Notice that $\lambda(A)=\rho(B)-s$ and apply Lemma \ref{lem:perron}.
\end{proof}

\begin{remark}
    Throughout the rest of this paper, $\lambda(A)$ shall always refer to the Perron-H-eigenvalue of the tensor $A$. {If and only if $A$ is a Metzler tensor and $\lambda(A)\leq 0$ ($\lambda(A)< 0$), then $-A$ is an (non-singular) $\mathcal{M}$-tensor. Hence, Metzler tensors and $\mathcal{M}$-tensors are very similar concepts.}
    The properties of  $\mathcal{M}$-tensors are studied by\cite{zhang2014m,ding2013m,ding2016solving}. Metzler tensors and $\mathcal{M}$-tensors are very similar concepts. In the following, considering the connection between a Metzler tensor and a $\mathcal{M}$-tensor, we further utilize the properties of $\mathcal{M}$-tensors to achieve some of the main results. Speaking of the origin, as for matrices, Metzler matrices are frequently used in the context of system and control, while $\mathcal{M}$-matrices or tensors arise frequently in scientific computations. Since the concept of a Metzler tensor is missing, it is still meaningful to give a formal definition and discover the properties of a Metzler tensor in our paper. {There is also a similar concept of $\mathcal{Z}$-tensor \cite{zhang2014m}
    : $A$ is a $\mathcal{Z}$-tensor if and only if $-A$ is a Metzler tensor. To the best of our knowledge, none of these concepts have been previously exploited to analyze dynamical systems as we propose in this manuscript.}
\end{remark}

{\section{Polynomial dynamical systems on a hypergraph}\label{sec:modeling}}
{In this section, we give a brief introduction to the modeling framework of nodal dynamics on a hypergraph. Proposed by \cite{bick2023higher}, coupled cell
systems \cite{stewart2003symmetry} on a hypergraph are in the form of 
\begin{equation}\label{eq:cell}
\begin{split}
    \dot{x}_i&=F\left(x_i\right)+\sum_{j=1}^N (A_2)_{ij} G_i\left(x_i, x_j\right)\\
    &+\sum_{j, l=1}^N (A_3)_{ij l} G_i^{(3)}\left(x_i, x_j, x_l\right)+\cdots,
\end{split}
\end{equation}
where $F$ is a function that describes the intrinsic coupling of the node, the adjacency matrix $A_2$ together with the coupling functions $G_i$ represent the pairwise network interactions, and the coefficients of $A_s$ (which are adjacency tensors) and coupling function $G_i^{(s)}$ (with $s \geq 3$ ) are called higher-order network interactions. For example, $(A_3)_{i j l}$ and $G_i^{(3)}\left(x_k, x_j, x_i\right)$ describe the joint influence of nodes $l, j$ on node $i$, which can be captured by a directed hyperedge with a single tail. Since we focus on nodal dynamics, using a hypergraph with all edges with one single tail is sufficient. Then, we further assume that the intrinsic coupling $F$ can be represented by some self-arc $(A_s)_{i\cdots i}x_i^{s-1}$ and the coupling function $G_k^{(s)}(x_i, x_j, x_l,\cdots)=x_i x_j x_l\cdots$. Note that this multiplicative interaction has some important physical interpretations. For example, the probability of some independent event occurring at the same time is captured by the multiplicative interaction.
Then, it is straightforward to see that \eqref{eq:cell} can be written in a tensor form $\dot{x}=A_{k} x^{k-1}+A_{k-1} x^{k-2}+\cdots +A_{2} x$. It is worthwhile mentioning that any polynomial system, where the powers are positive integers, can be written in this form. Especially, any homogeneous polynomial system is in the form of $\dot{x}=A x^{k-1}$, which is described by a uniform hypergraph.}

{We further emphasize that polynomial coupling functions are widely used in many real-world systems on hypergraphs. For example, an epidemic model on a hypergraph and a generalized Lotka-Volterra model adopt polynomial coupling functions. We will introduce both systems later in sections \ref{sec:sis} and \ref{sec:lv} in detail. We further observe that system \eqref{eq:cell} is based on a network with a coupled cell structure. That is a network that describes how nodes are influenced and interact with each other. Sometimes, this hypergraph has a direct physical meaning, such as social networks (epidemic model) \cite{cui2023general} and ecological networks (Lotka-Volterra) \cite{letten2019mechanistic,cui2023species}. In addition, notice that a system may be represented by different networks. For example, for a chemical reaction system with $X+Y\rightarrow Z+R$, the evolution of $X$ is described by $\Dot{x}_1=-k_1 x_1 x_2$. According to the chemical reaction $X+Y\rightarrow Z+R$, it is straightforward to define a hyperedge where $X,Y$ are heads and $Z,R$ are tails. However, if one uses the concept of a coupled cell network, the evolution of $X$ is only related to itself and $Y$. Thus, we just need one hyperedge with a single tail $X$ and heads $X,Y$. Since the system dynamics are just the same, these are two different representations of the given model.
}

\csx{In the section \ref{sec:uni},} we focus on the particular case where the tensor has a Metzler structure and will utilize Theorem \ref{thm:perron} to analyze the stability of some positive systems on a uniform hypergraph.

\section{Hypergraph spectrum, strong connectivity, and H-eigenvector centrality}

\csx{In this section, we introduce the concept of hypergraph spectrum, strong connectivity, and H-eigenvector centrality. We know that a uniform hypergraph with hyperedges of $k$ nodes can be captured by an adjacency tensor of order $k$. In the literature \cite{benson2019three,cooper2012spectra}, all these concepts (spectrum, centrality) are for a signless hypergraph (all weights of its edges are non-negative). To the best of our knowledge, there is no literature about these concepts on a signed hypergraph with both positive and negative weights. In this paper, we deal with a tensor or a hypergraph with a Metzler structure. We know that $A=B-s\mathcal{I}$, where $s\in \mathbb{R}$, $A$ is a Metzler tensor and $B$ is non-negative. Instead of investigating $\mathscr{H}(A)$, we can study a signless hypergraph $\mathscr{H}(B)$. Later, we will show that this manipulation has very little influence on the properties of a hypergraph, i.e., $\mathscr{H}(A)$ and $\mathscr{H}(B)$ have some similar properties, where $\mathscr{H}(A)$ denotes a hypergraph with the adjacency tensor $A$.}

\csx{First of all, throughout the section, to make the choice of $B,s$ unique given a Metzler tensor $A$, we choose the smallest $s$ such that $B$ is non-negative. If $A$ is already non-negative, then $A=B, s=0$.}

\csx{In \cite{cooper2012spectra}, the spectrum of a signless uniform hypergraph $\mathscr{H}(B)$ is defined as the spectrum (spectral radius) of the corresponding adjacency tensor $\rho(B)$. According to Theorem \ref{thm:perron}, the Perron-eigenvalue of $A$ is a shift on the spectrum of $B, \mathscr{H}(B)$, i.e., $\lambda(A)=\rho(B)-s$. In fact, in the next section, we will show that the stability of a Metzler dynamical system on a uniform hypergraph is indeed related to $\rho(B)$ and $s$.}

\csx{Next, a $k$-uniform, $n$-node hypergraph with adjacency tensor ${{T}}$ is strongly connected if the graph induced by the $n \times n$ matrix ${M}$ obtained by summing the modes of ${{T}}, M_{i j}=\sum_{j_3 \ldots \ldots, j_k} {T}_{i, j, j_3, \ldots, j_k}$, is strongly connected. As a consequence, a strongly connected hypergraph has an irreducible adjacency tensor \cite{benson2019three,qi2017tensor}.}

\csx{Eigenvector centrality is a standard network analysis tool for determining the importance of entities in a strongly connected system that is represented by a network. In \cite{benson2019three}, an H-eigenvector centrality (HEC) of a uniform signless hypergraph is proposed. Notice that although the main results in \cite{benson2019three} are based on the setting of an unweighted hypergraph, they further claim that the unweighted setting is not necessary and that all of the proposed methods work seamlessly if the hypergraph
is weighted. In the next, we give further details regarding the weighted case. Let us start with an example of a $3$-uniform hypergraph. Generally, a centrality of an $n$-node $k$-uniform hypergraph $\mathscr{H}(B)$ is defined in the following way:}

\csx{1. Some function $f$ of the centrality of node $u, f\left(c_u\right)$, should be proportional to the weighted sum of some function $g$ of the centrality score of its neighbors, the weights are the same as the weights of an adjacency tensor. In a 3-uniform hypergraph, this means that for some positive constant $\lambda$,
\begin{equation}
    f\left(c_u\right)=\frac{1}{\lambda} \sum_{u,v,w} B_{uvw} g\left(c_v, c_w\right).
\end{equation}}

\csx{2. The centrality scores should be positive, i.e., $c=(c_1,\cdots,c_n)^\top>\mathbf{0}$.}

\csx{Then, we may choose $f\left(c_u\right)=c_u^2$ and $g\left(c_v, c_w\right)=c_v c_w$, which gives $c_u^2=\frac{1}{\lambda} \sum_{u, v, w} B_{uvw} c_v c_w, \Longleftrightarrow B {c}^2= \lambda {c}^{[2]}$. This coincides with the H-eigenproblem \eqref{eq:eigenproblem}. As $c>\mathbf{0}$, the centrality vector $c$ is the Perron-H-eigenvector of an adjacency tensor $B$. The physical meaning of $f\left(c_u\right), g\left(c_v, c_w\right)$ is that $f,g$ should be at the same degree and the contribution $g$ of the centralities of two nodes in a 3-node hyperedge is multiplicative for the third.}
\csx{Generally, let $\mathscr{H}$ be a strongly connected $k$-uniform hypergraph with adjacency tensor $B$. Then the H-eigenvector centrality vector for $\mathscr{H}$ is the positive real vector ${c}$ satisfying 
\begin{equation}\label{eq:centrality}
    B {c}^{k-1}=\lambda {c}^{[k-1]} \text{\quad and \quad} \|{c}\|_1=1
\end{equation}
for some eigenvalue $\lambda>0$. Clearly, $c$ is the normalized ($\|{c}\|_1=1$) Perron-H-eigenvector. Recall that if $A=B-s\mathcal{I}$ is some shift on the $B$, then according to Theorem \ref{thm:perron}, they have the same centrality vector.}

\begin{remark}
    \csx{By leveraging the same idea of defining an eigenvector centrality in \cite{bonacich2007some} for a signed graph, here we may directly define the centrality of a signed uniform hypergraph $\mathscr{H}(A)$ as $A {c}^{k-1}=\lambda {c}^{[k-1]}$ and $\|{c}\|_1=1$, where each entry of $A$ may take a different sign, $\lambda$ is the largest eigenvalue of $A$ and $c$ is the corresponding eigenvector. However, according to Theorem \ref{thm:perron}, the centralities of a Metzler structure defined in both ways (the way of \eqref{eq:centrality} and the way in this remark) are the same. }
\end{remark}

\begin{figure}
        \centering
        \includegraphics[height=3cm]{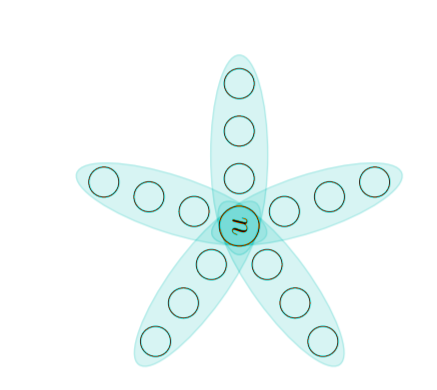}
        \caption{\csx{Example of an 4-uniform, 5-petal(the number of petal) sunflower undirected hypergraph with a singleton core $\{u\}$.}}
        \label{fig:sunflower}
\end{figure}

\csx{In \cite{benson2019three}, a topology named sunflower hypergraph is studied under an unweighted setting. Here, we perform a further analysis under a weighted setting. A sunflower hypergraph has a hyperedge set $E$ with a common pairwise intersection. Formally, for any hyperedges (called petals) $E_i, E_j \in E, E_i \cap E_j=\Tilde{E}$. The common intersection $\Tilde{E}$ is called \emph{the core}. The sunflower hypergraph is similar to the star graph, see figure \ref{fig:sunflower}. In \cite{benson2019three}, the centrality of a $k$-uniform $r$-petal undirected sunflower hypergraph is calculated. Let the central node be $u$ and some other arbitrary node be called $v$. For an unweighted case, $\frac{c_u}{c_v}= r^{\frac{1}{k}}$. Now, let us consider an extreme case of a weighted hypergraph. We set one petal $R$ with an extremely large weight; all other weights remain $0$ or $1$ as in the unweighted case. We can observe that the value of the centrality $c$ will be largely influenced by $R$ since it dominates the H-eigenproblem \eqref{eq:eigenproblem}. In the unweighted case, all the non-central nodes have the same centrality. In the extreme case we consider, the centrality of non-central nodes also has different values, mainly determined by whether the node is in the petal $R$. Thus, we can see that in an unweighted (or equally weighted case), the centrality is mainly determined by the topology of a hypergraph, while in the extremely weighted case (the weights of each edge are very different), the centrality is largely determined by the weights.}

\csx{For a non-uniform hypergraph, there is very limited related literature addressing the concepts of spectrum and centrality. However, a non-uniform hypergraph can be regarded as a multi-layer network, where each layer is a uniform hypergraph \cite{ferraz2021phase}, illustrated in Figure \ref{fig:ill}. We suggest that we utilize the spectrum and centrality of each layer to analyze the whole hypergraph. We will use this idea later to study a system on a non-uniform hypergraph. Alternatively, we can also project a hypergraph into a graph, for graph projection see \cite{ferraz2021phase} for details. It is important to notice that graph projection brings a higher-order hypergraph into a graph (of order two). This will lead to a loss of information. Indeed, by utilizing the graph projection, \cite{ferraz2021phase} is only able to capture the local stability of a system on a hypergraph. Our work will fill this gap and talk about the global one.}

\begin{figure}
        \centering
        \includegraphics[height=5cm]{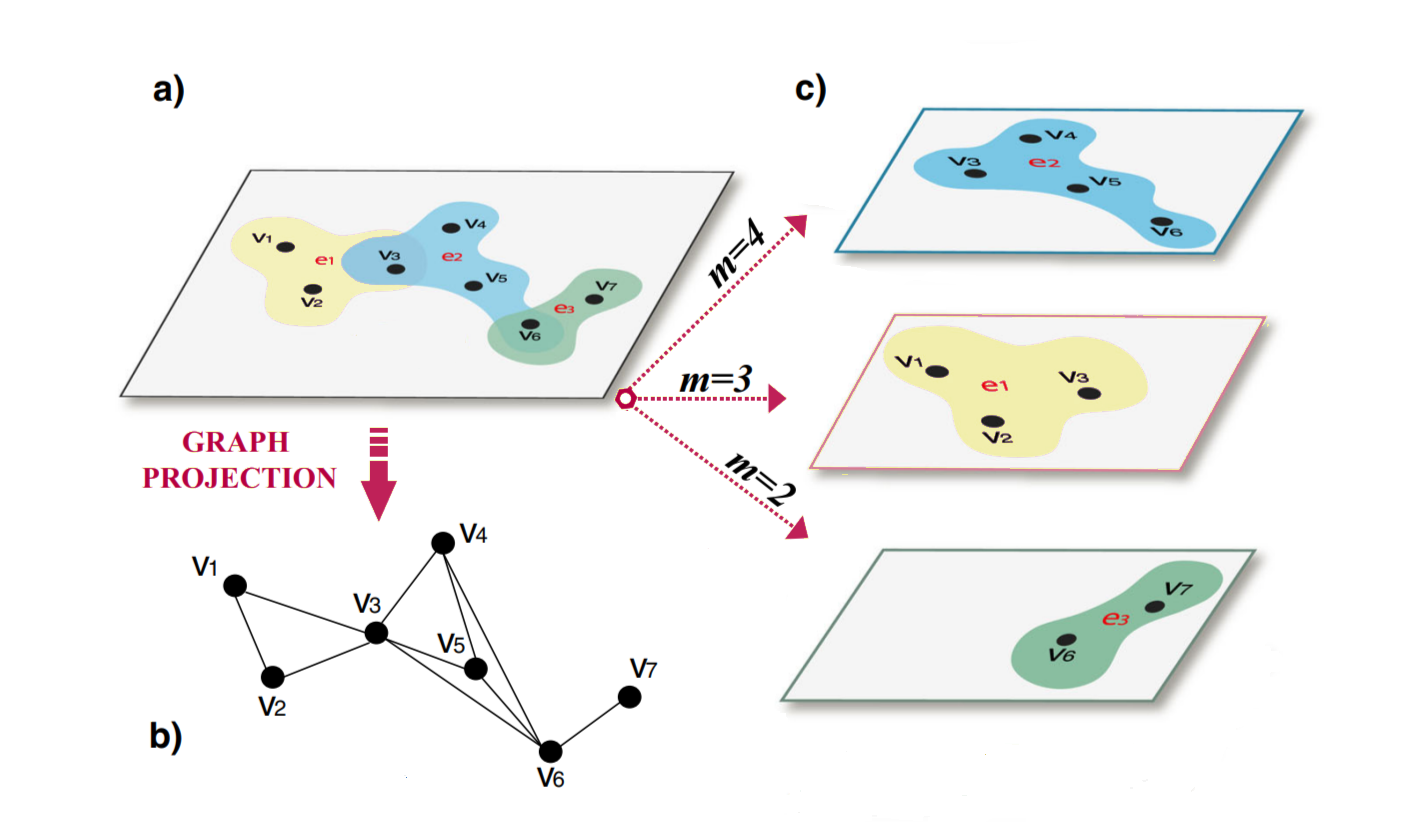}
        \caption{\csx{Illustration figure of a non-uniform undirected hypergraph. The original hypergraph is illustrated in a). The decomposition into different layers with each layer a uniform hypergraph is illustrated in b). The Hypergraph can be projected as a graph as c).}}
        \label{fig:ill}
\end{figure}

\section{Positive Metzler-tensor-based systems on a uniform hypergraph}\label{sec:uni}
Here, we consider a positive Metzler-tensor-based system on a uniform hypergraph of $n$ nodes. The dynamical system is given by
\begin{equation}\label{eq:sys1}
    \dot{x}=Ax^{k-1},
\end{equation}
where $A$ is an irreducible Metzler order $k$ dimension $n$ tensor, and $x\in\mathbb{R}^n$ is the state variable. Component-wise, \eqref{eq:sys1} reads as
\begin{equation}
    \dot{x}_i=\sum_{i_2, \ldots, i_k=1}^n A_{i, I} x_{i_2} \cdots x_{i_k}.
\end{equation}

Firstly, we show that \eqref{eq:sys1} is a positive system.
\begin{lemma}
     The positive orthant $\mathbb R^n_+$ is positively invariant with respect to the flow of \eqref{eq:sys1}.
\end{lemma}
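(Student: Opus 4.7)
The plan is to invoke the standard sub-tangential (Nagumo-type) condition for invariance of a closed convex set under a locally Lipschitz vector field. For the positive orthant $\mathbb{R}^n_+$, this reduces to checking that at every boundary point the vector field does not point outward, i.e., whenever $x_i=0$ and $x_j\geq 0$ for all $j\neq i$, one has $\dot x_i\geq 0$. Since the right-hand side of \eqref{eq:sys1} is polynomial, it is $C^\infty$ and in particular locally Lipschitz, so Nagumo's theorem applies.

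The concrete verification proceeds by splitting the sum
\[
\dot x_i=\sum_{i_2,\ldots,i_k=1}^n A_{i,I}\,x_{i_2}\cdots x_{i_k}
\]
into two parts according to whether the index $i$ appears in the multi-index $I=(i_2,\ldots,i_k)$. First I would observe that if $i\in\{i_2,\ldots,i_k\}$, then the monomial $x_{i_2}\cdots x_{i_k}$ contains $x_i=0$ as a factor and the corresponding term vanishes; in particular this absorbs the diagonal entry $A_{i\cdots i}x_i^{k-1}$, whose sign is unconstrained by the Metzler hypothesis.

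The remaining terms are indexed by tuples $I$ with $i_j\neq i$ for all $j=2,\ldots,k$. Every such entry $A_{i,I}$ is off-diagonal (the first index is $i$ while none of the other indices equals $i$), so by the Metzler property it satisfies $A_{i,I}\geq 0$. Together with $x_{i_j}\geq 0$ for $j=2,\ldots,k$, each remaining term is a product of non-negative factors and is therefore non-negative. Summing gives $\dot x_i\geq 0$ at any such boundary point, which by Nagumo's theorem yields positive invariance of $\mathbb{R}^n_+$.

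There is essentially no technical obstacle here; the only point to be a little careful about is the indexing bookkeeping, namely that the only index configurations not already killed by the factor $x_i=0$ are genuinely off-diagonal, so that the sign hypothesis on $A$ can be applied. No use of irreducibility or of Theorem~\ref{thm:perron} is required at this stage.
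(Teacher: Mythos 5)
Your proposal is correct and follows essentially the same route as the paper: the paper's proof also reduces to checking $\dot x_i\geq 0$ on the face $x_i=0$, noting that every surviving term has an off-diagonal coefficient $A_{i,I}\geq 0$ multiplied by nonnegative factors. You simply make explicit the Nagumo/sub-tangentiality step and the observation that all monomials containing $x_i$ (including the diagonal one) vanish, details the paper leaves implicit.
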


\begin{proof}
    If $x_i=0$, then $\dot{x}_i=\sum_{i_2, \ldots, i_m\neq i} A_{i, I} x_{i_2} \cdots x_{i_k}\geq 0,$ $\forall x_{i_j}\geq0$.
\end{proof}

Now, we are ready to discuss the stability of the origin. Recall that a Metzler tensor can be written as $A=B-s\mathcal{I}$, with $B$ a nonnegative tensor.
\begin{thm}\label{thm:main1}
    The origin is always an equilibrium of \eqref{eq:sys1}. Moreover, in $\mathbb R^n_+$, the origin is: a) { a unique equilibrium and} globally asymptotically stable if  $\lambda(A)=\rho(B)-s<0$; 
    or b) { a unique equilibrium and} unstable if $\lambda(A)=\rho(B)-s> 0$. {Moreover, for the stable case when $k>2$, the bound of convergence rate increases as $|\lambda(A)|$ increases. For the unstable case when $k>2$, the solution diverges to infinity in finite time.}
\end{thm}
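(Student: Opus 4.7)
\medskip

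\noindent\textbf{Proof proposal.} The plan is to combine the Perron--Frobenius result for irreducible Metzler tensors (Theorem~\ref{thm:perron}) with a weighted max-type Lyapunov function built from the positive Perron H-eigenvector.

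First, observe that $A\mathbf{0}^{k-1}=\mathbf{0}$, so the origin is always an equilibrium. For uniqueness in $\mathbb{R}^n_+$, suppose $x^*\in\mathbb{R}^n_+\setminus\{0\}$ satisfies $A(x^*)^{k-1}=0$. Writing $A=B-s\mathcal{I}$, this says $B(x^*)^{k-1}=s(x^*)^{[k-1]}$; irreducibility of $A$ gives irreducibility of $B$, and since $x^*\ge\mathbf{0}$ is a nonnegative H-eigenvector, Lemma~\ref{lem:perron}(3) forces $s=\rho(B)$, contradicting $\lambda(A)=\rho(B)-s\neq 0$. (The positivity of components of $x^*$ in the orthant is handled by applying Perron--Frobenius on the support and using irreducibility to rule out partial supports.) Hence the origin is the unique nonnegative equilibrium in both cases.

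For the stability claim in (a), let $v>\mathbf{0}$ be the Perron H-eigenvector from Theorem~\ref{thm:perron}, so $Av^{k-1}=\lambda(A)v^{[k-1]}$. I would use
\begin{equation*}
V(x)=\max_{i=1,\ldots,n}\frac{x_i}{v_i},\qquad x\in\mathbb{R}^n_+,
\end{equation*}
as a Lyapunov-like function. Fix $x\in\mathbb{R}^n_+$ with $V(x)=c>0$, let $j$ be any index where the max is attained (so $x_j=cv_j$ and $x_i\le cv_i$ for all $i$). Using $A_{j,I}\ge 0$ for every off-diagonal index tuple $I\neq(j,\ldots,j)$, together with $x_{i_2}\cdots x_{i_k}\le c^{k-1}v_{i_2}\cdots v_{i_k}$, I bound
\begin{equation*}
(Ax^{k-1})_j\le A_{j\cdots j}(cv_j)^{k-1}+c^{k-1}\!\!\sum_{I\neq(j,\ldots,j)}\!\!A_{j,I}v_{i_2}\cdots v_{i_k}=c^{k-1}(Av^{k-1})_j=\lambda(A)c^{k-1}v_j^{k-1}.
\end{equation*}
A standard Danskin/Dini-derivative argument then yields $D^+V(x)\le \lambda(A)\,V(x)^{k-1}v_j^{k-2}\le -|\lambda(A)|\,\underline{v}^{\,k-2}V(x)^{k-1}$, where $\underline{v}=\min_i v_i>0$ (and the $v_j^{k-2}$ factor equals $1$ when $k=2$). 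Integrating this scalar differential inequality gives $V(x(t))\to 0$: for $k=2$ one gets exponential decay with rate $|\lambda(A)|$, while for $k>2$ one obtains the explicit bound
\begin{equation*}
V(x(t))\le\bigl(V(x(0))^{-(k-2)}+(k-2)|\lambda(A)|\,\underline{v}^{\,k-2}\,t\bigr)^{-1/(k-2)},
\end{equation*}
which converges to $0$ and whose rate is manifestly monotone in $|\lambda(A)|$. Since $V$ is equivalent to the standard norm on $\mathbb{R}^n_+$, this yields global asymptotic stability and the convergence-rate statement.

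For the instability claim in (b), I mirror the argument with the function $U(x)=\min_i x_i/v_i$. For $x$ in the interior of $\mathbb{R}^n_+$ with $U(x)=c>0$, attained at index $j$, the inequalities flip to $x_{i_2}\cdots x_{i_k}\ge c^{k-1}v_{i_2}\cdots v_{i_k}$ on the off-diagonal (nonneg) entries, giving $(Ax^{k-1})_j\ge \lambda(A)c^{k-1}v_j^{k-1}$ and hence $D^-U(x)\ge |\lambda(A)|\,\underline{v}^{\,k-2}U(x)^{k-1}$ (after noting $\lambda(A)>0$). For $k>2$ the ODE comparison $\dot u=\alpha u^{k-1}$ blows up in finite time $t^*=\bigl((k-2)\alpha u(0)^{k-2}\bigr)^{-1}$ for any $u(0)>0$; a density argument (trajectories starting arbitrarily close to the origin but in the interior) then shows the origin is unstable and every interior trajectory escapes to infinity in finite time.

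The main technical obstacle is handling the non-smoothness of $V$ and $U$ via Dini derivatives and, more subtly, the uniform lower bound on $v_j^{k-2}$ that makes the comparison ODE truly scalar; once the Metzler sign pattern is exploited to keep the inner products controlled by $Av^{k-1}=\lambda(A)v^{[k-1]}$, the remaining work is essentially a comparison-principle computation.
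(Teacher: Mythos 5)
Your proposal is correct and its core is the same as the paper's argument: a max-type function weighted by the positive Perron H-eigenvector, the Metzler sign pattern to bound $(Ax^{k-1})_j$ by $\lambda(A)c^{k-1}v_j^{k-1}$ at the maximizing index, a Dini-derivative/comparison step to handle non-smoothness, and the mirrored min-type function for instability. The small divergences are worth noting: you prove uniqueness of the equilibrium directly from Lemma~\ref{lem:perron}(3) (a nonzero nonnegative equilibrium would make $s$ an H-eigenvalue of $B$ with nonnegative eigenvector, forcing $\lambda(A)=0$), which is cleaner and covers both cases at once, whereas the paper deduces uniqueness from the fact that $\{x:\dot V_m(x)=0\}$ is the origin; for instability you run the blow-up comparison on $U=\min_i x_i/v_i$ and a density argument instead of invoking the Chetaev theorem as the paper does; and you obtain the convergence-rate bound by integrating the differential inequality for $V$ itself rather than for $x_m$, which yields the same polynomial decay and the same monotonicity in $|\lambda(A)|$ (your extra factor $\underline{v}^{\,k-2}$ is just the eigenvector normalization). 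Like the paper, your finite-time-divergence claim is only justified for interior initial conditions; boundary starts would need an extra remark that irreducibility pushes trajectories into the interior, but this does not put you below the paper's own level of rigor.
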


\begin{proof}
    The first claim is straightforward.
    Next, we show that the origin is globally asymptotically stable if  $\lambda(A)=\rho(B)-s<0$. Let $\delta=(\delta_1,\cdots,\delta_n)^\top$ be the Perron-eigenvector corresponding to $\lambda(A)$. Define a function as $V_m=\max_i \left(\frac{x_i}{\delta_i}\right)^{k-1}$. For simplicity, we define $m=\argmax_i\left(\frac{x_i}{\delta_i}\right)^{k-1}$. 

    \hma{Since \eqref{eq:sys1} is a positive system, and the Perron-eigenvector is strictly positive,} it holds that $V_m>0$ for any $x\neq 0$, and $V_m=0$ if and only if $x_m=0$. One can see that $V_m$ is continuous, radially unbounded but not continuously differentiable.

    Furthermore, for any $i$, we have
    \begin{equation}
        x_i\leq \max_i \left(\frac{x_i}{\delta_i}\right)^{\frac{k-1}{k-1}}\delta_i=V_m^{\frac{1}{k-1}}\delta_i.
    \end{equation}

Let $T_i=(k-1) x_i^{k-2}$ for any $i$, then we get 
{\small\begin{equation}
    \begin{split}
        \dot{V}_m &= \frac{T_m\dot{x}_m}{\delta_m^{k-1}}=\frac{T_m}{\delta_m^{k-1}}\left(  \sum_{i_2, \ldots, i_k=1}^n A_{m, I} x_{i_2} \cdots x_{i_k} \right)\\
        &= \frac{T_m}{\delta_m^{k-1}}\left(  \sum_{i_2, \ldots, i_k\neq i} A_{m, I} x_{i_2} \cdots x_{i_k} + A_{m,m,\cdots,m}\quad x_m^{k-1}\right)\\
        & \leq T_m\left( \sum_{i_2, \ldots, i_k\neq i} A_{m, I} V_m^{\frac{1}{k-1}} \delta_{i_2} \cdots V_m^{\frac{1}{k-1}} \delta_{i_k} \frac{1}{\delta_m^{k-1}} \right.\\  
        &+  A_{m,m,\cdots,m}\quad V_m \left.\right)\\
        &= T_mV_m \left(A_{m, I} \frac{\delta_{i_2}\cdots \delta_{i_k}}{\delta_i^{k-1}}+ A_{m,\cdots ,m}\right)\\
        &= T_mV_m  \lambda(A)
    \end{split}
\end{equation}}

At the time $x_i=x_m$ and $x_j=x_m$ swap, $\dot{V}_m$ may not exist. However, its Dini-derivatives \eqref{eq:cond} exist and are also non-positive. In light of Lemmas \ref{lem:i1} and \ref{lem:i2} in the appendix, we do not need to deal with such deficiency here.
Therefore, if $\lambda(A)<0$, then $\dot{V}_m\leq0$. If moreover $x=\mathbf{0}$, it holds that $\dot{V}=0$. We further have that $\dot{V}_m\leq T_mV  \lambda(A) \leq 0$. Then, we can obtain that $T_mV_m\lambda(A)=0$ if and only if $x_m=0$. Since $m=\argmax_i\left(\frac{x_i}{\delta_i}\right)^{k-1}$, $x_m=0$ implies $x=\mathbf{0}$. Thus, the set $\{x\,|\,\dot{V}_m(x)=0\}$ only contains the origin, and furthermore the origin is globally asymptotically stable. {Suppose that there is another equilibrium $x^*$. When $x=x^*$, it holds that 
$\dot{x}_m(x)=0$ and thus $\dot{V}_m(x)=0$. This contradicts the fact that the set $\{x\,|\,\dot{V}_m(x)=0\}$ only contains the origin. Thus, the origin is a unique equilibrium.}

{Next, for $k>2$, we can see that the convergence rate to the origin is governed by $\dot{V}_m=\frac{T_m\dot{x}_m}{\delta_m^{k-1}} \leq  T_m V_m  \lambda(A)$. The formula is equivalent to $\dot{x}_m \leq \lambda(A) x_m^{k-1}$, which we can integrate. This leads to $x_m(t)=\left( \frac{1}{2-k}(\lambda(A)t+C_0)\right)^{\frac{1}{2-k}}$, where $C_0= \frac{1}{2-k} x_m^{ \frac{1}{2-k}}(0)$. From this solution, it follows that $\lim_{t\to\infty}x_m(t)=0$ and that the convergence rate increases as $|\lambda(A)|$ increases.
}

Finally, we show the instability of the origin when $\lambda(A)>0$ by Chetaev instability theorem \cite{9778188}.
It is clear that $Ax^{k-1}$ is a polynomial function and thus must satisfy the local Lipschitz condition. Now define a function $\nu_m$ as $\nu_m=\min_i \left( \frac{x_i}{\delta_i}\right)^{k-1}$. We see that $\nu>0$ for any $x> 0$. For simplicity, we now define $r=\arg \min_i\left(\frac{x_i}{\delta_i}\right)^{k-1}$ 

Furthermore, for any $i$, we have
    \begin{equation}
        x_i\geq \min_i \left(\frac{x_i}{\delta_i}\right)^{\frac{k-1}{k-1}}\delta_i=\nu_m^{\frac{1}{k-1}}\delta_i.
    \end{equation}

Then, we get 
\begin{equation}
    \begin{split}
        \dot{\nu}_m &= \frac{T_r\dot{x}_r}{\delta_r^{k-1}}=\frac{T_r}{\delta_r^{k-1}}\left(  \sum_{i_2, \ldots, i_k=1}^n A_{r, I} x_{i_2} \cdots x_{i_k} \right)\\
        & \geq T_r\left( \sum_{i_2, \ldots, i_k\neq m} A_{r, I} \nu_m^{\frac{1}{k-1}} \delta_{i_2} \cdots \nu_m^{\frac{1}{k-1}} \delta_{i_k} \frac{1}{\delta_r^{k-1}} \right.\\  
        &+  A_{r,i,\cdots,i}\quad \nu_m \left.\right)\\
        &= T_r\nu_m \left(A_{r, I} \frac{\delta_{i_2}\cdots \delta_{i_k}}{\delta_r^{k-1}}+ A_{r,\cdots ,r}\right)\\
        &= T_r\nu_m  \lambda(A)\geq 0
    \end{split}
\end{equation}
Moreover, $\dot{\nu}_m=0$ if and only if $x=\mathbf{0}$ by the same argument with the case $\lambda(A)<0$ and $\dot{V}_m\leq0$. By using the Chetaev instability theorem (Theorem 2.1 in \cite{9778188}), one obtains the instability result. {Similar to the stable case, the origin is a unique equilibrium and the solution will diverge to infinity with the divergence rate governed by the best case of $\dot{x}_m = \lambda(A) x_m^{k-1}$. This leads to $x_m(t)=\left( \frac{1}{2-k}(\lambda(A)t+C_0)\right)^{\frac{1}{2-k}}$, where $C_0= \frac{1}{2-k} x_m^{ \frac{1}{2-k}}(0)$ and $k>2$. Since $\lambda(A)>0$, the solution of $x_m$ diverges to infinity within a finite time $t=\frac{-C_0(2-k)}{\lambda(A)}$.}
\end{proof}

\begin{remark}
    The analytical results of Theorem \ref{thm:main1} are similar to those in \cite{chen2022explicit}. However, we emphasize that those results {in \cite{chen2022explicit}} apply to homogeneous polynomial systems on an undirected uniform hypergraph and rely on tensor orthogonal decomposition. However, not every tensor has such decomposition, {for details see \cite{chen2022explicit}.} By contrast, our results apply to any Metzler-tensor-based system (a class of homogeneous polynomial systems) on a directed uniform hypergraph. The stability criterion just requires calculating the Perron-eigenvalue, which is more practical for further application. The Lyapunov-like function $V_m=\max_i \left(\frac{x_i}{\delta_i}\right)^{k-1}$ is very similar to the Lyapunov function used in \cite{rantzer2011distributed,8638525}. In fact, if the order $k=2$ for the matrix case, the Lyapunov function used in \cite{rantzer2011distributed,8638525} is directly recovered. {Concluding, Theorem \ref{thm:main1} shows that the spectrum of the adjacency tensor of a uniform hypergraph plays a decisive role in the stability of a dynamical system based on it. Moreover, the convergence and the divergence rate are also closely related to the spectrum of the adjacency tensor of a uniform hypergraph.} \csx{From a network perspective, the convergence and the divergence rate is closely related to the spectrum of the uniform hypergraph. Moreover, the term $\frac{x_i}{\delta_i}$  denotes a normalized state variable by considering the centrality measure of the node.}
\end{remark}

\begin{exa}\label{exa:1}
    Let the tensor $A$ be a cubical tensor of order $4$ and dimension $4$ with all off-diagonal entries equal to $1$ and all diagonal entries equal to $-64$. One can confirm that $A$ is an irreducible Metzler tensor with Perron-H-eigenvalue $-1$ and Perron-H-eigenvector $\mathbf{1}$. Now consider \eqref{eq:sys1} with this setup. From figure \ref{fig:fig1}, we see that the solution of  \eqref{eq:sys1} indeed converges to the origin from an arbitrary initial condition.
    Then, we let the tensor $\Tilde{A}$ be the tensor 
 of order $4$ and dimension $4$ with all off-diagonal entries equal to $1$ and all diagonal entries equal to $-62$. One can confirm that $\Tilde{A}$ is an irreducible Metzler tensor with Perron-H-eigenvalue $1$ and Perron-H-eigenvector $\mathbf{1}$. A similar simulation verifies that the solution of the system $\Dot{x}=\Tilde{A}x^3$ diverges to infinity. {The simulation is shown in figure \ref{fig:figd}.}
    \begin{figure}
        \centering
        \includegraphics[height=4cm]{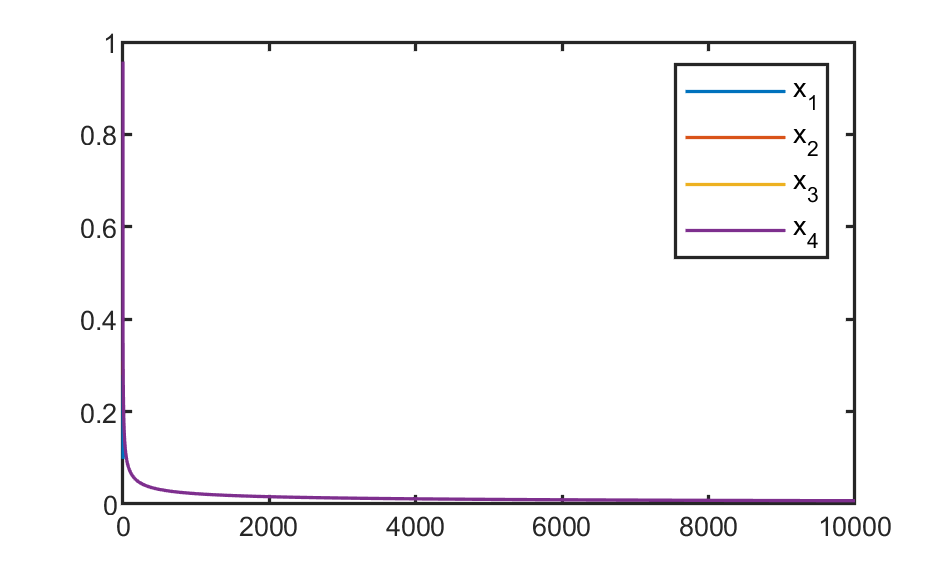}
        \caption{Simulation of system \eqref{eq:sys1} with $\lambda(A)<0$.}
        \label{fig:fig1}
    \end{figure}

    \begin{figure}
        \centering
        \includegraphics[height=4cm]{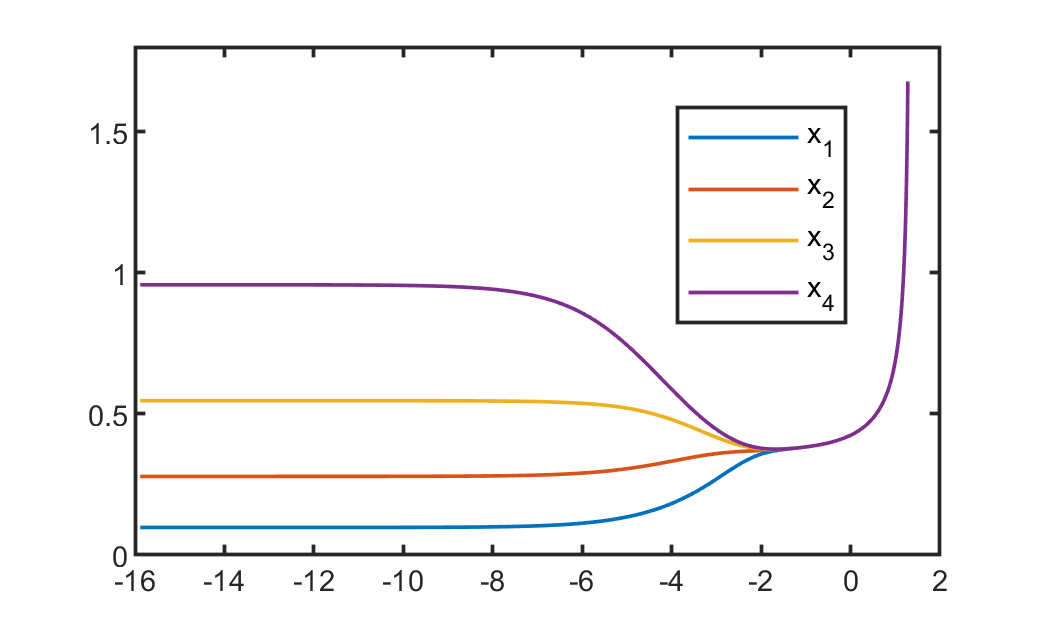}
        \caption{{Simulation of system \eqref{eq:sys1} with $\lambda(A)>0$. The time is presented on a logarithmic scale. The divergence rate is faster than exponential.}}
        \label{fig:figd}
    \end{figure}
\end{exa}

\section{Feedback control strategies}\label{sec:controller}
In this section, we develop feedback control strategies for a positive Metzler-tensor-based system on a uniform hypergraph. The general control goal is to stabilize the origin. Consider system \eqref{eq:sys1} with some control inputs, namely
\begin{equation}\label{eq:control}
\dot{x}=Ax^{k-1}+u. 
\end{equation}

Firstly, we design a feedback control as $u_i={q} x_i^{k-1}$ ($u={q} x^{[k-1]}$). Then, the closed loop system reads as $\dot{x}=Ax^{k-1}+{q}x^{[k-1]}=(A+{q}\mathcal{I})x^{k-1}$. According to Theorems \ref{thm:perron} and \ref{thm:main1}, $\lambda(A+{q}\mathcal{I})=\lambda(A)+{q}$ and if $\lambda(A)+{q}<0$, then we drive the solution of  \eqref{eq:control} asymptotically to the origin. From a perspective of distributed control of a networked system, this control law only requires self-information, which is truly an advantage. \csx{Furthermore, the controller does not change the centrality measure of the nodes. 
} However, the cost to stabilize the system, $q$ (the scalar one needs to manipulate), may be large.

\begin{remark}
    Example \ref{exa:1} already shows the feasibility of this control strategy. Let $A$ and $\Tilde{A}$ be the tensors of Example \ref{exa:1}. The solution of $\Dot{x}=\Tilde{A}x^3$ diverges to infinity. Then, with $u_i=-2x_i^3$, the closed-loop system becomes $\Dot{x}=\Tilde{A}x^3+u=Ax^3$. Thus, the origin is stabilized.
\end{remark}

Next, let us consider a more general feedback controller $u={D}x^{k-1}$ with ${D}$ a non-positive tensor. Then, the closed-loop system reads as $\dot{x}=Ax^{k-1}+{D}x^{k-1}=(A+{D})x^{k-1}$. We know that a Metzler tensor can be written as $A={B}+s\mathcal{I}$, where ${B}$ is a nonnegative tensor and we assume $s<0$ such that $A$ may have negative eigenvalue. Therefore, $A+{D}={B}+{D}+s\mathcal{I}$. Now, suppose $|{D}|\leq {B}$ such that ${D}+{B}$ is still a non-negative tensor and ${D}+{B}\leq {B}$. Thus, we have $\lambda(A+{D})=\rho({D}+{B})+s\mathcal{I}\leq \lambda(A)=\rho(A)+s\mathcal{I}$ according to Lemma \ref{lem:2}. We can see that the Perron-eigenvalue of the closed loop is actually smaller than that of the open loop. Since ${D}+{B}$ is still a non-negative tensor, $\lambda({D}+{B})\geq 0$. Then, according to Lemma \ref{lem:2}, if we increase the entries of $|{D}|$, then $\rho({D}+{B})$ is decreasing but greater than zero. Thus, with a large enough $|{D}|$, we can have $\lambda({D}+{B})+s<0$ and thus $\lambda(A+{D})<0$, indicating that the origin of the closed loop \eqref{eq:control} is globally asymptotically stable. \csx{Notice, in contrast to the previous case, that using the controller potentially changes the centrality measure of the nodes.}

\begin{exa}
    Again, we consider the tensors of example \ref{exa:1} and the system $\Dot{x}=\Tilde{A}x^3$. Now, let the control law be $u=Dx^3$, where $D$ is a tensor of order $4$ and dimension $4$. All off-diagonal entries of the tensor $D$ are $-\frac{1}{2}$ and its diagonal entries are $0$. The closed-loop system is thus $\dot{x}=(\Tilde{A}+D)x^3$. One can see that $\Tilde{A}+D$ is also an irreducible Metzler tensor with Perron-H-eigenvalue $-30.5$ and Perron-H-eigenvector $\mathbf{1}$. Thus, the origin of the closed-loop system $\dot{x}=(\Tilde{A}+D)x^3$ is asymptotically stable.
\end{exa}

\begin{remark}
{We emphasize that the structure ${D}$ is rather arbitrary. This indicates that the proposed control law can be chosen either based on local information or on global information. }
    The term $A+{D}$ can be seen as a modification of the tensor, which can be further regarded as a manipulation of the hypergraph. For example, $A_{ijk}+{D}_{ijk}<A_{ijk}$, may be interpreted as measures taken to reduce the joint influence of $j,k$ to $i$ and it further corresponds to reducing the weight of the hyperedge. In the extreme case, $A_{ijk}+{D}_{ijk}=0$, which means that this hyperedge is removed from the hypergraph. Furthermore, deleting all hyperedges involving an agent is equivalent to removing the corresponding node in the hypergraph. {Thus, we can manipulate the spectrum of the hypergraph by deleting the node or hyperedge and changing some weights of the edges. Then, we are able to further stabilize the system.}
\end{remark}

Finally, we consider the feedback controller $u={D}x^{p-1}$, and $p\neq k$. This leads to the closed loop system $\dot{x}=Ax^{k-1}+{D}x^{p-1}$. The system is now based on a general hypergraph consisting of one layer of a $k$-th order uniform hypergraph and another layer of a $p$-th order uniform hypergraph. Since the closed-loop system is no longer a uniform hypergraph, our analytical results may no longer be applicable but the general problem remains an interesting open problem for future work. The main challenge is how to deal with the combination of tensors with different orders and usually, these two tensors could have different eigenvalues and eigenvectors. However, in the following, we discuss some special cases when the problem can be transformed into a problem on a uniform hypergraph.

\section{Constant control input}

In this section, we consider system \eqref{eq:control} with a constant control input $u=b$ with $b$ a positive vector. This leads to the affine system
\begin{equation}\label{eq:affine}
    \dot{x}=Ax^{k-1}+b.
\end{equation}
This is the simplest in-homogeneous polynomial case. It is easy to check that \eqref{eq:affine} is still a positive system since $b$ is a positive vector. If $b$ is no longer a positive vector, the system may not be a positive system.
Recall that if a Metzler tensor $A$ of order $k$ satisfies $\lambda(A)< 0$, then $-A$ is a non-singular $\mathcal{M}$-tensor.

\begin{lemma}[Theorem 3.2\cite{ding2016solving}]\label{lem:sol}
    If $A$ is a nonsingular $\mathcal{M}$-tensor of order $m$, then for every positive vector $b$ the multilinear system of equations $Ax^{m-1}=b$ has a unique positive solution.
\end{lemma}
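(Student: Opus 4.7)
The plan is to convert the multilinear equation into a contractive-like fixed point problem using the $\mathcal{M}$-tensor splitting, establish existence by a monotone iteration, and obtain uniqueness by a max-ratio comparison argument. Write $A=\eta\mathcal{I}-B$ with $B\geq 0$ and $\eta>\rho(B)$. The equation $Ax^{m-1}=b$ is equivalent to
\begin{equation*}
    x^{[m-1]}=\tfrac{1}{\eta}\bigl(Bx^{m-1}+b\bigr),
\end{equation*}
which suggests defining the map $T:\mathbb{R}_+^n\to\mathbb{R}_+^n$ componentwise by $T(x)_i=\bigl(\tfrac{1}{\eta}(Bx^{m-1}+b)_i\bigr)^{1/(m-1)}$. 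A positive fixed point of $T$ is exactly a positive solution of the system.

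For existence, I would first check that $T$ is monotone on the positive orthant: since $B\geq 0$, if $0\leq x\leq y$ then $Bx^{m-1}\leq By^{m-1}$ componentwise, and the $(1/(m-1))$-th power preserves the ordering on $\mathbb{R}_+$. Start the iteration at $x^{(0)}=\mathbf{0}$; because $b>\mathbf{0}$, $x^{(1)}=T(\mathbf{0})>\mathbf{0}$, and monotonicity gives an increasing sequence $x^{(k+1)}=T(x^{(k)})$. The main technical obstacle is boundedness, and this is where I would invoke the Perron--Frobenius Lemma \ref{lem:perron} applied to $B$: let $v>\mathbf{0}$ be the Perron eigenvector of $B$, so $Bv^{m-1}=\rho(B)v^{[m-1]}$. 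For any $\alpha>0$, $T(\alpha v)_i=\bigl(\tfrac{1}{\eta}(\alpha^{m-1}\rho(B)v_i^{m-1}+b_i)\bigr)^{1/(m-1)}$, and since $\rho(B)/\eta<1$, one can choose $\alpha$ large enough so that $T(\alpha v)\leq \alpha v$ componentwise. Then $\alpha v$ is a supersolution sitting above all $x^{(k)}$, so the iteration converges to a positive fixed point $x^\star$, which solves $Ax^{\star\,m-1}=b$.

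For uniqueness, suppose $x,y>\mathbf{0}$ both satisfy the equation. Let $r=\max_i x_i/y_i$, attained at some index $i_0$, and assume for contradiction $r>1$. Then $x\leq ry$ componentwise with equality in coordinate $i_0$, so
\begin{equation*}
    \eta x_{i_0}^{m-1}=(Bx^{m-1})_{i_0}+b_{i_0}\leq r^{m-1}(By^{m-1})_{i_0}+b_{i_0}=r^{m-1}\bigl(\eta y_{i_0}^{m-1}-b_{i_0}\bigr)+b_{i_0}.
\end{equation*}
Using $x_{i_0}^{m-1}=r^{m-1}y_{i_0}^{m-1}$ the left-hand side cancels with the leading term on the right, leaving $0\leq(1-r^{m-1})b_{i_0}$. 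Since $b_{i_0}>0$ and $r>1$, this is a contradiction, forcing $x\leq y$; swapping roles gives $y\leq x$, hence $x=y$.

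I expect the delicate step to be the choice of the scaling $\alpha$ producing a supersolution $\alpha v$; this is where the strict inequality $\eta>\rho(B)$ (nonsingularity of the $\mathcal{M}$-tensor, as opposed to merely $\eta\geq\rho(B)$) is essential, and it is also the only place in the argument where the Perron--Frobenius structure of $B$ enters. The monotonicity of $T$, the positivity of the first iterate, and the max-ratio uniqueness argument are then routine consequences of the $\mathcal{M}$-tensor splitting and the positivity of $b$.
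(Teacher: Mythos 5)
The paper itself offers no proof of this lemma---it is imported verbatim as Theorem~3.2 of \cite{ding2016solving}---so your argument has to stand on its own. Most of it does: the reformulation $x^{[m-1]}=\tfrac{1}{\eta}(Bx^{m-1}+b)$, the monotonicity of $T$, the strictly positive first iterate, the passage to the limit, and the max-ratio uniqueness argument (where the strict positivity of $b_{i_0}$ does the work) are all correct, and the splitting-plus-iteration philosophy is close in spirit to the scheme analyzed in \cite{ding2016solving}. The genuine gap sits exactly at the step you yourself flag as delicate: you invoke Lemma~\ref{lem:perron} to produce a strictly positive eigenvector $v$ of $B$ with $Bv^{m-1}=\rho(B)v^{[m-1]}$, but Lemma~\ref{lem:perron} requires $B$ to be \emph{irreducible}, while the statement you are proving assumes only that $A=\eta\mathcal{I}-B$ is a nonsingular $\mathcal{M}$-tensor, with no irreducibility at all. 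For reducible $B$ the eigenvector associated with $\rho(B)$ may have zero entries (already for $m=2$: $B=\begin{pmatrix}2&1\\0&1\end{pmatrix}$ admits only multiples of $(1,0)^\top$), and then $\alpha v$ is not a supersolution in any coordinate with $v_i=0$, since there $T(\alpha v)_i\geq(b_i/\eta)^{1/(m-1)}>0=\alpha v_i$; your boundedness argument, and hence the existence half, collapses. The distinction is not cosmetic: the paper explicitly relies on this lemma without irreducibility (see the remark following Theorem~\ref{thm:nonuni2}).

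The repair is easy but must be made. Either (i) replace $v$ by any strictly positive $w$ with $Aw^{m-1}>\mathbf{0}$ (semi-positivity of nonsingular $\mathcal{M}$-tensors, i.e.\ Lemma~\ref{lem:posxmt}, itself a result of \cite{ding2016solving} that you would have to prove or cite separately); then $T(\alpha w)\leq\alpha w$ for large $\alpha$ because $b_i\leq\alpha^{m-1}(Aw^{m-1})_i$ eventually, and this is in fact the only property of $B$ you need. Or (ii) perturb: set $B_\epsilon=B+\epsilon\mathcal{E}$ with $\mathcal{E}$ the all-ones tensor, which is irreducible, take its Perron eigenvector $v_\epsilon>\mathbf{0}$, and use $Bv_\epsilon^{m-1}\leq\rho(B_\epsilon)v_\epsilon^{[m-1]}$; this requires $\rho(B_\epsilon)<\eta$ for small $\epsilon$, which does not follow from the monotonicity in Lemma~\ref{lem:2} alone but from the continuity of the spectral radius of nonnegative tensors as $\epsilon\downarrow 0$ (Yang--Yang), so that must be cited. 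With either patch the existence argument is complete, and your uniqueness argument needs no change; as written, however, you have only established the lemma under the extra hypothesis that $B$ is irreducible, which is strictly weaker than the stated result.
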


Let $\dot{x}=0$ in \eqref{eq:affine}, then we get that $-Ax^{k-1}=b$. Thus, Lemma \ref{lem:sol} directly indicates that the system \eqref{eq:affine} has a unique positive equilibrium if $-A$ is a non-singular $\mathcal{M}$-tensor. 

\begin{lemma}
    Consider system \eqref{eq:affine} with $-A$ being an irreducible non-singular $\mathcal{M}$-tensor. The set $\{x|x\geq x^*\}$, where $x^*$ is the unique positive equilibrium, is a positively invariant set.
\end{lemma}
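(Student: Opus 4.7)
The plan is to apply the standard subtangent (Nagumo-type) criterion for positive invariance of a closed set defined by component-wise inequalities: the set $S=\{x\mid x\geq x^{*}\}$ is forward invariant if, at every boundary point where $x_{i}=x^{*}_{i}$ (and $x_{j}\geq x^{*}_{j}$ for all $j\neq i$), we have $\dot{x}_{i}\geq 0$. Since $x^{*}>\mathbf{0}$ by Lemma \ref{lem:sol}, the set $S$ lies entirely in the positive orthant, so all monomials involved are nonnegative and the comparison arguments below are unambiguous.

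To verify the boundary condition, I would subtract the equilibrium equation $Ax^{*\,k-1}+b=\mathbf{0}$ from \eqref{eq:affine} to eliminate $b$ and obtain, component-wise,
\begin{equation*}
\dot{x}_{i} \;=\; \sum_{i_{2},\ldots,i_{k}=1}^{n} A_{i,I}\bigl[x_{i_{2}}\cdots x_{i_{k}} - x^{*}_{i_{2}}\cdots x^{*}_{i_{k}}\bigr].
\end{equation*}
I would then split this sum into the purely diagonal index $(i,\ldots,i)$ and the remaining off-diagonal multi-indices. On the boundary piece where $x_{i}=x^{*}_{i}$, the purely diagonal term vanishes identically (its sign, positive or negative, is irrelevant because $x_{i}^{k-1}-x^{*\,k-1}_{i}=0$). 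For every off-diagonal multi-index $I$ one has $A_{i,I}\geq 0$ by the Metzler property, and because $x\geq x^{*}>\mathbf{0}$, the monomial comparison $x_{i_{2}}\cdots x_{i_{k}}\geq x^{*}_{i_{2}}\cdots x^{*}_{i_{k}}$ holds. Hence each term in the above sum is nonnegative, and therefore $\dot{x}_{i}\geq 0$ at the boundary, which establishes the subtangent condition and thus positive invariance.

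The only real subtlety, and the step I would take the most care with, is the handling of the diagonal entry $A_{ii\cdots i}$, whose sign is unrestricted for a Metzler tensor. The trick of subtracting the equilibrium equation makes this concern disappear automatically, because the diagonal contribution enters the difference as $A_{ii\cdots i}(x_{i}^{k-1}-x^{*\,k-1}_{i})$ and this is precisely zero on the relevant boundary face. With that observation in place the proof reduces to the elementary monotonicity of positive monomials under the coordinatewise order, and no further structural properties (irreducibility of $-A$ or the $\mathcal{M}$-tensor hypothesis beyond guaranteeing the positive equilibrium $x^{*}$ via Lemma \ref{lem:sol}) are needed for the invariance claim itself.
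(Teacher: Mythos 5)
Your proposal is correct and is essentially the paper's own argument in different clothing: the paper changes coordinates to $y=x-x^*$ and shows the error dynamics form a positive system, while you verify the Nagumo boundary condition directly, but both reduce to the same computation — use the equilibrium identity $Ax^{*\,k-1}+b=\mathbf{0}$ to absorb $b$, observe that on the face $x_i=x_i^*$ the diagonal contribution exactly equals its equilibrium value (so its sign is irrelevant), and invoke nonnegativity of the off-diagonal entries together with monotonicity of positive monomials to conclude $\dot{x}_i\geq 0$. Your side remark that irreducibility plays no role in the invariance argument itself (only in guaranteeing the positive equilibrium via Lemma \ref{lem:sol}) is likewise consistent with the paper's proof.
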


\begin{proof}
    Firstly, we derive an error dynamic by a change of coordinate $y=x-x^*$: $\dot{y}=A(y+x^*)^{k-1}+b$. Component-wise, we have $\dot{y}_i=\sum_{i_2, \ldots, i_k=1}^n A_{i, I} (y_{i_2}+x_{i_2}^*) \cdots (y_{i_k}+x_{i_k}^*)+b_i$. Now, the statement is equivalent to showing that the error dynamic is a positive system.

    
    Clearly, if $y_i=0$ for an arbitrary $i$, $\dot{y_i}=\sum_{i_2, \ldots, i_k=1} A_{i, I} (y_{i_2}+x_{i_2}^*) \cdots (y_{i_k}+x_{i_k}^*)+b_i$. Since $-A$ is an $\mathcal{M}$-tensor, the only negative term corresponds to the diagonal entry $A_{ii\cdots i}$. Consider the only negative term is $A_{ii\cdots i}(x_i^*)^{k-1}$. We further have that $\sum_{i_2, \ldots, i_k=1}^n A_{i, I} x^*_{i_2} \cdots x^*_{i_k}+b_i=0$ and the only negative term is compensated by some non-negative terms. All the rest of the terms are still non-negative. Thus, $\dot{y_i}\geq 0$, and the error dynamic is a positive system. 
\end{proof}

Now, we investigate the stability of the positive equilibrium.

\begin{thm}\label{thm:main2}
    Consider \eqref{eq:affine} with $-A$ being an irreducible non-singular $\mathcal{M}$-tensor. The unique positive equilibrium $x^*$ is asymptotically stable with a domain of attraction $\{x|x\geq x^*\}$. Furthermore, From almost all initial conditions in $\{x|\mathbf{0}<x< x^*\}$, the solution converges to the unique positive equilibrium $x^*$. 
\end{thm}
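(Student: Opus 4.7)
The plan is to adapt the Lyapunov-function approach of Theorem~\ref{thm:main1}, but re-centred at the positive equilibrium $x^*$ rather than at the origin. For the domain-of-attraction claim, I would use $V(x)=\max_i(x_i/x^*_i)^{k-1}$, which on the invariant set $\{x\ge x^*\}$ (already established by the preceding lemma) satisfies $V\ge 1$ with equality iff $x=x^*$. For the second claim, the symmetric candidate $W(x)=\min_i(x_i/x^*_i)^{k-1}$ on $\{0<x\le x^*\}$ satisfies $W\le 1$ with equality iff $x=x^*$.

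The same algebraic manoeuvre drives both bounds. At $m=\argmax_i(x_i/x^*_i)$, the estimate $x_i\le(x_m/x^*_m)\,x^*_i$ combined with the Metzler sign pattern $A_{m,I}\ge 0$ for $I\ne(m,\ldots,m)$ lets one replace each off-diagonal product $\prod_\ell x_{i_\ell}$ by its upper bound $(x_m/x^*_m)^{k-1}\prod_\ell x^*_{i_\ell}$, while the diagonal term gives the exact identity $x_m^{k-1}=(x_m/x^*_m)^{k-1}(x^*_m)^{k-1}$. Summing and invoking the equilibrium relation $Ax^{*,k-1}=-b$ yields
\[
(Ax^{k-1})_m \;\le\; -\bigl(x_m/x^*_m\bigr)^{k-1} b_m,
\]
whence $\dot x_m\le b_m\bigl(1-(x_m/x^*_m)^{k-1}\bigr)\le 0$ on $\{x\ge x^*\}$, with equality forcing $x_m=x^*_m$ and, by irreducibility together with the argmax condition, $x=x^*$. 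The mirror-image bound $\dot x_r\ge b_r\bigl(1-(x_r/x^*_r)^{k-1}\bigr)\ge 0$ follows identically for $r=\argmin$ on $\{0<x\le x^*\}$, using $x_i\ge(x_r/x^*_r)x^*_i$ and the same Metzler sign pattern.

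From here the argument follows the Theorem~\ref{thm:main1} template. Positive invariance of the lower order interval $\{0<x\le x^*\}$ follows from cooperativity of \eqref{eq:affine} in the positive orthant (the Jacobian $\partial F/\partial x$ is Metzler since $A$ is Metzler and $x\ge 0$), together with the boundary property $\dot x_i\ge b_i>0$ whenever $x_i=0$, which keeps trajectories strictly positive. Within each invariant set the appropriate Lyapunov function is monotone and bounded, so a LaSalle invariance argument identifies its limit with $\{x^*\}$. For the lower domain one additionally notes that $W$ is non-decreasing, so $W(t)\ge W(0)>0$, and the resulting estimate $\dot W\gtrsim W^{(k-2)/(k-1)}(1-W)$ drives $W$ strictly to $1$, which combined with $x\le x^*$ pins $x\to x^*$.

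The main technical obstacle, identical to the one met in Theorem~\ref{thm:main1}, is the non-smoothness of the max/min Lyapunov function at instants where the argmax or argmin jumps; this is handled by the Dini-derivative machinery of Lemmas~\ref{lem:i1}--\ref{lem:i2}, and the ``almost all'' qualifier in the second statement accommodates precisely the measure-zero set of initial conditions on which transitions between competing argmins would obstruct the straightforward Dini computation. Otherwise the proof is a rearrangement of the same Metzler upper/lower bound used in Theorem~\ref{thm:main1}, now combined with the equilibrium identity $Ax^{*,k-1}=-b$ in place of setting $b=0$.
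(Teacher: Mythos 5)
Your treatment of the first claim (attraction from $\{x\ge x^*\}$) is essentially the paper's own argument: the same function $\max_i(x_i/x^*_i)^{k-1}$, the same Metzler upper bound with the diagonal term handled exactly, and the equilibrium identity $Ax^{*\,k-1}=-b$ yielding $\dot V_m\le \frac{T_m}{(x^*_m)^{k-1}}(1-V_m)b_m\le 0$ on the invariant set where $V_m\ge 1$. For the second claim you take a genuinely different route. The paper introduces no min-type function: it notes that the Jacobian of \eqref{eq:affine} is an irreducible Metzler matrix, so the system is an irreducible monotone (cooperative) system; positive invariance of $\{\mathbf{0}<x<x^*\}$ follows from the order-preservation $\phi_t(x)<\phi_t(x^*)=x^*$ together with $\dot x_i\ge b_i>0$ when $x_i=0$, and the ``almost all'' convergence is then imported wholesale from generic-convergence theory for monotone systems (Lemma 2.3 of the reference the paper cites), using that $x^*$ is the only equilibrium there. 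Your alternative is an explicit Lyapunov argument: $W=\min_i(x_i/x^*_i)^{k-1}$, the mirror-image estimate $\dot x_r\ge b_r(1-W)$ at the argmin (valid because the off-diagonal entries of $A$ are nonnegative and the diagonal term is an exact identity, then $Ax^{*\,k-1}=-b$), and a Gronwall/LaSalle step forcing $W\to 1$, which together with the invariance bound $x\le x^*$ gives $x\to x^*$. Provided the argmin switching is handled by the Dini-derivative machinery of Lemmas \ref{lem:i1}--\ref{lem:i2} exactly as in Theorem \ref{thm:main1}, this is sound, and it in fact proves more than the statement: convergence from \emph{every} initial condition in $\{\mathbf{0}<x<x^*\}$, with an explicit exponential rate for $1-W$ once $W(0)>0$. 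The one inaccuracy is your closing gloss on the ``almost all'' qualifier: it is not there to excuse argmin transitions (the Dini argument covers all trajectories); in the paper it is simply inherited from the monotone-systems lemma, and under your argument it could be dropped. In short, the paper's proof is shorter because it outsources convergence to monotone-systems theory but only yields generic convergence on the lower order interval, whereas yours is self-contained, quantitative, and strictly stronger there.
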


\begin{proof}
First of all, we show that $\{x|\mathbf{0}<x< x^*\}$ is a positively invariant set of \eqref{eq:affine}. Since \eqref{eq:affine} is component-wise given by
    $\dot{x}_i=\sum_{i_2, \ldots, i_k=1}^n A_{i, I} x_{i_2} \cdots x_{i_k}+b_i.$

Then, for any $j$,
    $\frac{\partial\dot{x}_i}{\partial x_j}=\sum_{i_2=j} A_{i, I} x_{i_3} \cdots x_{i_k}+\sum_{i_3=j} A_{i, I} x_{i_2}x_{i_4} \cdots x_{i_k}
    +\cdots+\sum_{i_k=j} A_{i, I} x_{i_3} \cdots x_{i_{k-1}}.$
This shows that the Jacobian of the system \eqref{eq:affine} is an irreducible Metzler matrix because all off-diagonal entries are non-negative and the irreducibility of $A$ ensures the irreducibility of the Jacobian. Thus, \eqref{eq:affine} is an irreducible monotone system. From the definition of irreducible monotone systems (section 2.3 \cite{ye2022convergence}), we have $\phi_t(x)< \phi_t(x^*)$ if $x< x^*$. Furthermore, if $x_i=0$, then $\Dot{x}_i\geq b_i>0$. Combining both cases, we confirm that $\{x|\mathbf{0}<x< x^*\}$ is a positively invariant set of the system \eqref{eq:affine}.

Then, we see that in $\{x|\mathbf{0}<x< x^*\}$, there are only one equilibrium, the $x^*$ on the boundary. According to Lemma 2.3 of \cite{ye2022convergence}, the solution converges to the unique positive equilibrium $x^*$ from almost all initial conditions in $\{x|\mathbf{0}<x< x^*\}$.

\begin{figure}
        \centering
        \includegraphics[height=4cm]{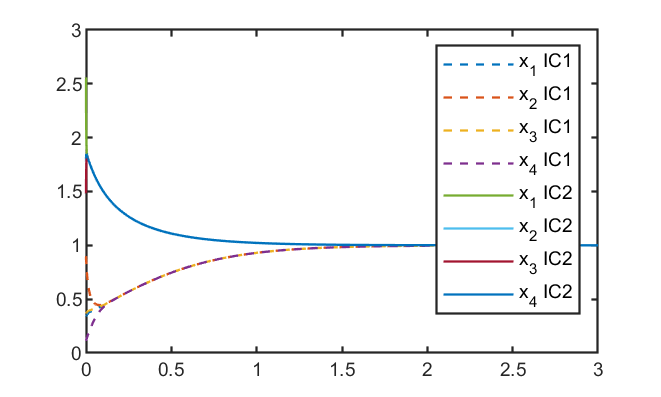}
        \caption{Simulation of system \eqref{eq:affine} for initial conditions in $\{x|\mathbf{0}<x< x^*\}$ (IC1) and in $\{x|x\geq x^*\}$ (IC2).}
        \label{fig:fig2}
    \end{figure}

Finally, we show that the positive equilibrium has another domain of attraction. We define $V_m=\max_i \left(\frac{x_i}{x^*_i}\right)^{k-1}$. Let $m=\arg \max_i\left(\frac{x_i}{\delta_i}\right)^{k-1}$. Let $T_i=(k-1) x_i^{k-2}$, then we get 
{\small\begin{equation}
    \begin{split}
        \dot{V}_m &= \frac{T_m\dot{x}_m}{(x^*_m)^{k-1}}=\frac{T_m}{(x^*_m)^{k-1}}\left(  \sum_{i_2, \ldots, i_k=1}^n A_{m, I} x^*_{i_2} \cdots x^*_{i_k}+b \right)\\
        & \leq \frac{T_m}{(x^*_m)^{k-1}} \left( \sum_{i_2, \ldots, i_k} A_{m, I} V_m^{\frac{1}{k-1}} x^*_{i_2} \cdots V_m^{\frac{1}{k-1}} x^*_{i_k} +b  
        \right)\\
        &= \frac{T_m}{(x^*_m)^{k-1}}(-V_m+1)b
    \end{split}
\end{equation}}
Consider the positively invariant set $\{x|x\geq x^*\}$, which implies $V_m\geq 1$. This further implies $-V_m+1\leq 0$. Since $V_m=0$ implies $x=x^*$, it follows that the domain of attraction is the set $\{x|x\geq x^*\}$.
\end{proof}

\begin{exa}\label{exa:2}
     Let the tensor $A$ be the tensor of Example \ref{exa:1}. Now consider \eqref{eq:affine} with $b=\mathbf{1}$. From figure \ref{fig:fig2}, we see that the solution of \eqref{eq:affine} converges to its unique positive equilibrium from a random initial condition in $\{x|\mathbf{0}<x< x^*\}$. We also see that the solution of the system \eqref{eq:affine} converges to the positive equilibrium from a random initial condition in $\{x|x\geq x^*\}$.
\end{exa}

\section{On a class of Metzler-tensor-based systems on non-uniform hypergraphs}

Usually, systems on a non-uniform hypergraph incorporate tensors of different orders. At the moment, it is mathematically challenging to deal with such systems in full generality. In this section, we investigate a special class of Metzler-tensor-based systems on a non-uniform hypergraph. We show that the techniques developed above are still applicable to this special case. The dynamical system we consider in this section is given by
\begin{equation}\label{eq:sysnon}
    \dot{x}=A_{k-1} x^{k-1}+A_{k-2} x^{k-2}+\cdots +A_{1} x,
\end{equation}
where all the tensor $A_{k-1},\cdots,A_1$ are irreducible Metzler tensors. 

\begin{lemma}
    The positive orthant $\mathbb R^n_+$ is positively invariant with respect to the flow of \eqref{eq:sysnon}.
\end{lemma}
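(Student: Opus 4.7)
The plan is to mimic exactly the short boundary-check argument that was used for the uniform case, and apply it term by term to each Metzler tensor in the sum. Since $\mathbb{R}^n_+$ is a convex set with boundary consisting of the coordinate hyperplanes $\{x_i=0\}$, by Nagumo's tangency condition it suffices to show that whenever $x \in \mathbb{R}^n_+$ satisfies $x_i = 0$ for some index $i$, one has $\dot{x}_i \geq 0$. Because the right-hand side of \eqref{eq:sysnon} is polynomial (hence smooth), no regularity issue arises.

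First I would expand the $i$-th component as
\begin{equation*}
\dot{x}_i \;=\; \sum_{s=1}^{k-1} \sum_{i_2,\ldots,i_s=1}^n (A_s)_{i,i_2,\ldots,i_s}\, x_{i_2}\cdots x_{i_s},
\end{equation*}
and then fix an arbitrary $i$ with $x_i=0$. In each inner sum, every term where some index $i_j$ equals $i$ contains the factor $x_i = 0$ and hence vanishes; in particular the diagonal contribution $(A_s)_{i\cdots i}\, x_i^{s-1}$, which is the only possibly negative entry of $A_s$ in row $i$, drops out. The surviving terms are those with $i_2,\ldots,i_s \neq i$, and for these the coefficients $(A_s)_{i,i_2,\ldots,i_s}$ are off-diagonal entries of a Metzler tensor, thus non-negative, while the factors $x_{i_2}\cdots x_{i_s}$ are non-negative since $x\in \mathbb{R}^n_+$.

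Consequently each inner sum is non-negative, so $\dot{x}_i \geq 0$ whenever $x_i=0$ and $x\in\mathbb{R}^n_+$. By Nagumo's theorem this yields positive invariance of $\mathbb{R}^n_+$. I do not expect any serious obstacle: the argument is essentially the one already used in the lemma for the uniform system \eqref{eq:sys1}, applied to each of the finitely many Metzler tensors $A_1,\ldots,A_{k-1}$ separately and then summed. The irreducibility hypothesis on the $A_s$ plays no role in positive invariance and is only used elsewhere for the Perron--Frobenius based stability arguments.
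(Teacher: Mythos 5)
Your proof is correct and is essentially the paper's own argument: the paper likewise checks that on the boundary face $x_i=0$ all terms containing $x_i$ (in particular the only possibly negative, diagonal entries of each $A_s$) vanish, leaving a sum of non-negative off-diagonal coefficients times non-negative monomials, so $\dot{x}_i\geq 0$. Your version merely makes the tangency (Nagumo) reasoning explicit, and apart from a harmless off-by-one in how you index the orders of the tensors $A_s$, it matches the paper.
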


\begin{proof}
    If $x_i=0$, then $\dot{x}_i=\sum_{i_2, \ldots, i_m\neq i} A_{i, I} x_{i_2} \cdots x_{i_k}+ \cdots \geq 0.$
\end{proof}

\begin{thm}\label{thm:nonuni}
    The origin is always an equilibrium of  \eqref{eq:sysnon}. Moreover, in $\mathbb R^n_+$, the origin is {a unique equilibrium and} globally asymptotically stable if  $\lambda(A_i)<0$ for all $i$ and all $\lambda(A_i)$ are associated with the same eigenvector $\delta$. 
\end{thm}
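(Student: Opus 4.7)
The plan is to extend the Lyapunov-like argument from the proof of Theorem \ref{thm:main1} by exploiting the common-eigenvector hypothesis across the tensors of different orders. With $k-1$ being the highest order present in \eqref{eq:sysnon}, I would define
\[
V_m(x) \;=\; \max_{i} \left(\frac{x_i}{\delta_i}\right)^{k-1},
\]
where $\delta > \mathbf{0}$ is the shared Perron-H-eigenvector. By positivity of $\delta$ and the positive invariance of $\mathbb{R}^n_+$ just established, $V_m$ is continuous, non-negative, radially unbounded on $\mathbb{R}^n_+$, and vanishes only at the origin.

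At a point $x \in \mathbb{R}^n_+$ where the maximizing index is unique, set $m = \argmax_i (x_i/\delta_i)^{k-1}$ and $W = x_m/\delta_m$, so that $V_m = W^{k-1}$, $x_i \leq W\delta_i$ for every $i$, and $x_m = W\delta_m$. For each order $j \in \{1,\ldots,k-1\}$, I would split the $m$-th component of $A_j x^j$ into the off-diagonal part, whose coefficients are non-negative by the Metzler structure and thus can be bounded by replacing each $x_{i_\ell}$ with $W\delta_{i_\ell}$, and the single diagonal term $(A_j)_{mm\cdots m}\, x_m^{j} = (A_j)_{mm\cdots m}\, W^{j}\delta_m^{j}$. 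The shared-eigenvector relation $A_j \delta^{j} = \lambda(A_j)\,\delta^{[j]}$, read at row $m$, collapses the two parts into
\[
(A_j x^{j})_m \;\leq\; \lambda(A_j)\, W^{j}\,\delta_m^{j}.
\]
Plugging this into $\dot{V}_m = (k-1)\,x_m^{k-2}\,\dot{x}_m/\delta_m^{k-1}$ and using $x_m^{k-2}=W^{k-2}\delta_m^{k-2}$ yields
\[
\dot{V}_m \;\leq\; (k-1) \sum_{j=1}^{k-1} \lambda(A_j)\, W^{\,k-2+j}\, \delta_m^{\,j-1},
\]
which is strictly negative whenever $W>0$, since every $\lambda(A_j)<0$ and $\delta>\mathbf{0}$.

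The non-smooth points where the maximizing index switches are treated exactly as in the proof of Theorem \ref{thm:main1} by passing to Dini derivatives and invoking Lemmas \ref{lem:i1} and \ref{lem:i2}. Global asymptotic stability of the origin in $\mathbb{R}^n_+$ then follows, and uniqueness of the equilibrium is obtained by noting that any equilibrium $x^*\in \mathbb{R}^n_+$ must satisfy $\dot{V}_m(x^*)=0$, which together with the strict sign of $\dot{V}_m$ off the origin forces $x^*=\mathbf{0}$.

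The main obstacle is precisely the step where the shared-eigenvector hypothesis is consumed: without it, the bound on $(A_j x^{j})_m$ cannot be reduced to $\lambda(A_j)\,W^{j}\delta_m^{j}$, because collapsing the combinatorial sum to a Perron-H-eigenvalue requires the \emph{same} positive vector $\delta$ to act as eigenvector for every order $j$. If the tensors carried distinct Perron-eigenvectors, the bound would involve the mismatched quantity $(A_j\delta^{j})_m$, which is not in general a negative multiple of $\delta_m^{j}$, and the sign of $\dot{V}_m$ could no longer be controlled term by term. The assumption of a common $\delta$ is exactly what decouples the orders and lets a single componentwise Lyapunov-like function close the argument.
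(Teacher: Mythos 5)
Your proposal is correct and follows essentially the same route as the paper: the same max-type Lyapunov-like function $V_m=\max_i (x_i/\delta_i)^{k-1}$, the same componentwise bound using the Metzler (non-negative off-diagonal) structure together with the common Perron-H-eigenvector to collapse each order-$j$ term to $\lambda(A_j)W^{j}\delta_m^{j}$, and the same handling of the non-smooth switching points via Dini derivatives and the appendix lemmas. Your explicit splitting into off-diagonal and diagonal parts just makes precise the inequality the paper writes more tersely, and your final bound is algebraically identical to the paper's.
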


\begin{proof}
    Define the Lyapunov function as $V_m=\max_i \left(\frac{x_i}{\delta_i}\right)^{k-1}$. Let $m=\arg \max_i\left(\frac{x_i}{\delta_i}\right)^{k-1}$. 
    Let $T_i=(k-1) x_i^{k-2}$, then we get 
\begin{equation}
    \begin{split}
        \dot{V}_m &= \frac{T_m\dot{x}_m}{\delta_m^{k-1}}=\frac{T_m}{\delta_m^{k-1}}\left(  \sum_{i_2, \ldots, i_k=1}^n (A_{k-1})_{ m, I} x_{i_2} \cdots x_{i_k} \right.\\
        &+\left. \sum_{i_2, \ldots, i_k=1}^n (A_{k-2})_{ m, i_2 \cdots i_{k-1}} x_{i_2} \cdots x_{i_{k-1}} +\cdots\right)\\
        &\leq \frac{T_m}{\delta_m^{k-1}} \left(  \sum_{i_2, \ldots, i_k=1}^n (A_{k-1})_{ m, I} \delta_{i_2} \cdots \delta_{i_k} V_m \right.\\
        &+\left. \sum_{i_2, \ldots, i_k=1}^n (A_{k-2})_{ m, i_2 \cdots i_{k-1}} \delta_{i_2} \cdots \delta_{i_{k-1}} V_m^{\frac{k-2}{k-1}}+\cdots\right)\\
        &= \frac{T_m}{\delta_m^{k-1}} \left(\sum_{i=1}^{k-1} V_m^{\frac{i}{k-1}} \lambda(A_{i})\delta_m^{i}\right)
    \end{split}
\end{equation}

The rest of the proof remains analogous to the proof of Theorem \ref{thm:main1}.
\end{proof}

\begin{remark}
    \csx{The condition of having the same Perron-eigenvector can be regarded in the following way: since a non-uniform hypergraph consists of several layers and each layer is a uniform hypergraph, then having the same Perron-eigenvector means that each node in each layer of a uniform hypergraph has the same centrality.}
\end{remark}

Recall the following properties of a nonsingular $\mathcal{M}-$tensor.
\begin{lemma}[\cite{ding2016solving}]\label{lem:posxmt}
    If $A$ is a non-singular $\mathcal{M}-$tensor, then there exists a positive vector $x$ such that $Ax^{k-1}>\mathbf{0}$.
\end{lemma}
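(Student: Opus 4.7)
The cleanest route is to reduce the statement to Lemma~\ref{lem:sol}, which has already been recorded in the paper. The plan is as follows: pick any strictly positive right-hand side, the simplest choice being $b=\mathbf{1}$. Since $A$ is assumed to be a non-singular $\mathcal{M}$-tensor, Lemma~\ref{lem:sol} guarantees the existence of a unique positive vector $x>\mathbf{0}$ solving the multilinear system $Ax^{k-1}=b$. But then $Ax^{k-1}=\mathbf{1}>\mathbf{0}$, which is exactly the desired conclusion. So if the author is happy to invoke Lemma~\ref{lem:sol}, there is essentially nothing to prove.

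If one wanted a self-contained argument that does not depend on the existence theorem for $\mathcal{M}$-tensor equations, I would argue directly through the Perron--Frobenius structure. Write $A=\eta\mathcal{I}-B$ with $B\geq 0$ and $\eta>\rho(B)$. In the irreducible case, Lemma~\ref{lem:perron} supplies a strictly positive eigenvector $v>\mathbf{0}$ with $Bv^{k-1}=\rho(B)\,v^{[k-1]}$, and then
\begin{equation*}
Av^{k-1}=\eta v^{[k-1]}-Bv^{k-1}=(\eta-\rho(B))\,v^{[k-1]}>\mathbf{0},
\end{equation*}
so $x=v$ works. This is the conceptually transparent step.

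The only genuine obstacle is the reducible case, where the Perron eigenvector is only guaranteed to be non-negative (and may have zero entries). To handle it, I would perturb: set $B_\epsilon=B+\epsilon\mathbf{1}$ (with $\mathbf{1}$ the all-ones tensor), so that $B_\epsilon$ is irreducible with strictly positive entries and has a strictly positive Perron eigenvector $v_\epsilon$ satisfying $B_\epsilon v_\epsilon^{k-1}=\rho(B_\epsilon)\,v_\epsilon^{[k-1]}$. By Lemma~\ref{lem:2} together with continuity of $\rho$ in the entries of the tensor, $\rho(B_\epsilon)\to\rho(B)$ as $\epsilon\downarrow 0$, so for all sufficiently small $\epsilon>0$ one still has $\eta>\rho(B_\epsilon)$. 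A short computation then gives
\begin{equation*}
Av_\epsilon^{k-1}=(\eta-\rho(B_\epsilon))\,v_\epsilon^{[k-1]}+\epsilon\,\mathbf{1}\,v_\epsilon^{k-1}>\mathbf{0},
\end{equation*}
where the second term is strictly positive because $v_\epsilon>\mathbf{0}$. Hence $x=v_\epsilon$ is a witness, and the statement follows in full generality. The subtlety I would need to verify carefully in a rigorous write-up is the continuity of the Perron-H-eigenvalue under the perturbation $B\mapsto B+\epsilon\mathbf{1}$; this is standard for tensors but worth citing explicitly rather than asserting.
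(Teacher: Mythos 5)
Your argument is correct, but note that the paper never proves this statement at all: Lemma~\ref{lem:posxmt} is imported verbatim from \cite{ding2016solving}, so there is no internal proof to compare against. Your first route---apply Lemma~\ref{lem:sol} with $b=\mathbf{1}$ and read off $Ax^{k-1}=\mathbf{1}>\mathbf{0}$---is logically valid within the paper, though it inverts the order in which these facts are usually established in the $\mathcal{M}$-tensor literature, where the semi-positivity property you are proving is the more elementary characterization and is used as a stepping stone toward the solvability theorem quoted as Lemma~\ref{lem:sol}; so as a derivation it is convenient rather than illuminating. Your second, self-contained route is essentially the standard proof: in the irreducible case $A v^{k-1}=(\eta-\rho(B))v^{[k-1]}>\mathbf{0}$ via Theorem~\ref{thm:perron}/Lemma~\ref{lem:perron}, and in the reducible case the perturbation $B_\epsilon=B+\epsilon\mathbf{1}$ with the identity $Av_\epsilon^{k-1}=(\eta-\rho(B_\epsilon))v_\epsilon^{[k-1]}+\epsilon\,\mathbf{1}v_\epsilon^{k-1}$ is exactly the right device. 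The one point you flagged---continuity of $\rho(B_\epsilon)$ as $\epsilon\downarrow 0$---is genuine but easy to close without a heavy citation: monotonicity (Lemma~\ref{lem:2}) gives $\rho(B_\epsilon)\geq\rho(B)$, while the Collatz--Wielandt-type bound $\rho(B_\epsilon)\leq\max_i \bigl(B_\epsilon x^{k-1}\bigr)_i/x_i^{k-1}$, valid for any fixed $x>\mathbf{0}$, gives $\limsup_{\epsilon\downarrow 0}\rho(B_\epsilon)\leq\rho(B)$ after optimizing over $x$; together these yield $\rho(B_\epsilon)\to\rho(B)$, so $\eta>\rho(B_\epsilon)$ for all small $\epsilon$ and your witness $x=v_\epsilon$ works. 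With that reference (or short argument) supplied, your write-up would stand as a complete proof, which is more than the paper itself provides.
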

Recall that if $A$ is an irreducible Metzler tensor and $\lambda(A)<0$, then $-A$ is a nonsigular $\mathcal{M}-$tensor. Thus, there exists a positive vector $x$ such that $Ax^{k-1}<\mathbf{0}$. Thus, we can further refine Theorem \ref{thm:nonuni}.

\begin{thm}\label{thm:nonuni2}
    Consider \eqref{eq:sysnon} {without the restriction that all tensors must be irreducible}. In $\mathbb R^n_+$, the origin is globally asymptotically stable if all $-A_i$ are non-singular $\mathcal{M}-$tensors and there exists one common positive vector $y$ such that $(-A_i) y^i>\mathbf{0}$ for all $i$.
\end{thm}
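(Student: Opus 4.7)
The plan is to adapt the max-based Lyapunov construction from Theorems \ref{thm:main1} and \ref{thm:nonuni}, replacing the common Perron eigenvector $\delta$ with the common positive witness vector $y$ guaranteed by the hypothesis. First I would verify positive invariance of $\mathbb{R}^n_+$ exactly as in the preceding lemma for \eqref{eq:sysnon}: whenever $x_i = 0$, the Metzler structure of every $A_j$ ensures that only non-negative terms remain in $\dot{x}_i$. This step carries over verbatim because Metzlerness of each tensor is what was used before, not irreducibility.

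Next I would define $V_m(x) = \max_i (x_i/y_i)^{k-1}$ with $m = \argmax_i (x_i/y_i)^{k-1}$, just as in Theorem \ref{thm:nonuni}. This function is continuous, radially unbounded on $\mathbb{R}^n_+$, positive definite in the sense that $V_m(x) = 0$ iff $x = \mathbf{0}$, and it yields the pointwise bound $x_i \leq V_m^{1/(k-1)} y_i$ for all $i$, with equality at $i=m$, hence also $x_m^{\,j} = V_m^{j/(k-1)} y_m^{\,j}$ for every $j = 1,\dots,k-1$.

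The crux is the per-layer bound $(A_j x^j)_m \leq V_m^{j/(k-1)} (A_j y^j)_m$ for each $j$. I would split the sum defining $(A_j x^j)_m$ into its off-diagonal part, where $(A_j)_{m,i_2,\dots,i_{j+1}} \geq 0$ by Metzlerness, and the single diagonal term $(A_j)_{m,m,\dots,m} x_m^{\,j}$. On the off-diagonal part, substituting $x_{i_\ell} \leq V_m^{1/(k-1)} y_{i_\ell}$ gives an upper bound scaled by $V_m^{j/(k-1)}$; on the diagonal part, the equality $x_m^{\,j} = V_m^{j/(k-1)} y_m^{\,j}$ yields the same scaling exactly. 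Adding these reconstructs $V_m^{j/(k-1)} (A_j y^j)_m$. Summing over $j$ and using the standing assumption $(A_j y^j)_m < 0$ for all $j$, I obtain
\begin{equation*}
\dot{x}_m \,\leq\, \sum_{j=1}^{k-1} V_m^{j/(k-1)} (A_j y^j)_m \,<\, 0 \quad \text{whenever } V_m > 0,
\end{equation*}
which, after multiplying by the non-negative factor $(k-1) x_m^{k-2}/y_m^{k-1}$, gives $\dot V_m \leq 0$ with equality only at the origin.

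The main obstacle is the non-smoothness of $V_m$ at points where the argmax switches; this is handled exactly as in Theorem \ref{thm:main1}, by working with upper Dini derivatives and invoking Lemmas \ref{lem:i1} and \ref{lem:i2} in the appendix to upgrade the pointwise estimate into a genuine Lyapunov decrease. Once this technicality is dispatched, the conclusion of global asymptotic stability on $\mathbb{R}^n_+$, and uniqueness of the origin as an equilibrium there (since $\dot V_m = 0$ forces $x_m = 0$ and hence $x = \mathbf{0}$), follow by the same LaSalle-type argument used in the stable case of Theorem \ref{thm:main1}. I would close with a short remark noting that Lemma \ref{lem:posxmt} guarantees a per-tensor positive witness, so the only substantive extra hypothesis in this theorem is the simultaneous validity of these witnesses at one common $y$.
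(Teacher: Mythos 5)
Your proposal is correct and follows essentially the same route as the paper: the paper proves Theorem \ref{thm:nonuni2} by repeating the max-type Lyapunov argument of Theorem \ref{thm:nonuni} (itself modeled on Theorem \ref{thm:main1}) with the Perron eigenvector $\delta$ replaced by the common positive vector $y$, which is exactly your per-layer estimate $(A_j x^j)_m \leq V_m^{j/(k-1)}(A_j y^j)_m$ combined with the Dini-derivative/LaSalle machinery from the appendix.
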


\begin{proof}
    The proof is similar to the proof of Theorem \ref{thm:nonuni}; one just needs to substitute $\delta$ with the positive vector $y$. 
\end{proof}

\begin{remark}
    Notice that Theorem \ref{thm:nonuni} requires that all $A_i$ have the same Perron-eigenvector, which is restrictive. However, Theorem \ref{thm:nonuni2}, just requires that there exists one common positive vector $y$ such that $(-A_i) y^i>\mathbf{0}$ for all $i$, which is considerably less restrictive. Indeed, according to Lemma \ref{lem:sol}, the number of positive vectors $y$ such that $(-A_i) y^i>\mathbf{0}$ is usually more than the number of Perron-eigenvectors of $A_i$. Thus, Theorem \ref{thm:nonuni2} relaxes the condition of Theorem \ref{thm:nonuni}. {Furthermore, regarding Theorem \ref{thm:nonuni2}, recall that we already assume that there is a common positive vector $y$ such that $(-A_i) y^i>\mathbf{0}$. Therefore, we do not need all tensors to be irreducible. Irreducibility is needed in Theorem \ref{thm:nonuni} to guarantee a positive Perron-eigenvector. In \cite{ding2016solving}, it is shown that if $-A_i$ is a non-singular $\mathcal{M}$-tensor, then $(-A_i) y^i>\mathbf{0}$ for some $y>\mathbf{0}$. Irreducibility is indeed not a need.
    Similarly, we can also derive the statement a) of Theorem \ref{thm:main1} by using this property of a nonsingular $\mathcal{M}$-tensor. This is the special case of Theorem \ref{thm:nonuni2} with one tensor. Finally, we want to emphasize that both Theorems \ref{thm:nonuni} and \ref{thm:nonuni2} have both advantages and disadvantages. Theorem \ref{thm:nonuni} shows that the spectrum of adjacency tensors of a hypergraph plays an important role in the stability of a system on the hypergraph. Tensor's eigenvalues and -vectors can be calculated numerically, which will be introduced later in this paper. The disadvantage is that Theorem \ref{thm:nonuni} requires a stricter condition. In contrast, Theorem \ref{thm:nonuni2} requires a less strict condition but there is no general way to search for a common positive vector $y$. In the later section \ref{sec:cpv}, we provide some results about the common positive vector under some special cases.}
\end{remark}

\begin{remark}
    \csx{The condition of having a common positive vector can be regarded in the following way: if we assume that all the tensors are irreducible, then for each Perron-eigenvector $\delta_i$ of $A_i$: $(-A_i) (\delta_i)^i>\mathbf{0}$ holds and $\delta_i$ denotes the centrality of the layer. For any $x=\delta_i+\Delta$ locally around $\delta_i$, where $\Delta$ is some small perturbation, the $A_i$: $(-A_i) x^i>\mathbf{0}$ should still hold. Thus, in order to have a common $x$, the centrality or the Perron-eigenvector on each layer should not be too different. The more similar the centrality is in each layer, the more chance there is to have a common positive vector required by Theorem \ref{thm:nonuni2}.}
\end{remark}
Furthermore, we show that a class of Metzler-tensor-based systems on a non-uniform hypergraph can be transformed into a Metzler-tensor-based system on a uniform hypergraph. Thus, all the aforementioned results apply also to such a class of Metzler-tensor-based systems on a non-uniform hypergraph. {The feedback control strategies are also applicable to the system after coordinate change.}

Next, we consider the following system:
\begin{equation}\label{eq:shift}
    \dot{x}=A(x-a)^{k-1},
\end{equation}
where $a$ is a positive vector.

\begin{lemma}
    The set $\{x\in\mathbb R^n\, |\, x\geq a\}$ is positively invariant with respect to the flow of \eqref{eq:shift}. 
\end{lemma}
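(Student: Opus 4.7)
The plan is to reduce this to the already-established positive invariance of $\mathbb{R}^n_+$ for the homogeneous system \eqref{eq:sys1}. I would introduce the shifted coordinate $y := x - a$. Since $a$ is a constant vector, $\dot{y} = \dot{x} = A y^{k-1}$, which is precisely \eqref{eq:sys1}. The set $\{x \,|\, x \geq a\}$ translates under this change of variables to $\{y \,|\, y \geq \mathbf{0}\} = \mathbb{R}^n_+$. Since the earlier lemma for \eqref{eq:sys1} showed that $\mathbb{R}^n_+$ is positively invariant, pulling back to the $x$-coordinates gives the result.

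For a direct, self-contained version I would apply a Nagumo-type boundary check. On the boundary of $\{x \geq a\}$ at least one component satisfies $x_i = a_i$, i.e. $(x-a)_i = 0$. Expanding
\begin{equation*}
\dot{x}_i = \sum_{i_2,\ldots,i_k=1}^n A_{i,I}\,(x-a)_{i_2}\cdots(x-a)_{i_k},
\end{equation*}
every term whose multi-index $I$ contains the value $i$ (in particular the diagonal term $A_{i i \cdots i}(x-a)_i^{k-1}$) carries a factor $(x-a)_i = 0$ and therefore vanishes. The surviving terms have all of $i_2,\ldots,i_k \neq i$, so each coefficient $A_{i,i_2,\ldots,i_k}$ is an off-diagonal entry of the Metzler tensor $A$ and is non-negative, while the product $(x-a)_{i_2}\cdots(x-a)_{i_k}$ is non-negative because $x \geq a$. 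Hence $\dot{x}_i \geq 0$, so the vector field does not push $x_i$ below $a_i$, and standard invariance arguments (e.g.\ Nagumo's theorem) deliver the claim.

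The only step deserving care is separating the two categories of terms in $(A(x-a)^{k-1})_i$ at a boundary point: those that vanish because they contain a factor $(x-a)_i$, and those that survive with non-negative sign because of the Metzler property. There is no serious obstacle here; the argument is essentially the shifted counterpart of the positive-invariance lemma already proven for \eqref{eq:sys1}, and the coordinate change makes this transparent without needing any additional tensor machinery.
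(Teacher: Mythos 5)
Your proposal is correct and matches the paper's own argument: the paper proves the lemma exactly by the boundary check you describe, noting that at $x_i = a_i$ all terms containing the factor $(x-a)_i$ vanish and the surviving off-diagonal Metzler coefficients multiply non-negative products, so $\dot{x}_i \geq 0$. Your alternative framing via the shift $y = x - a$ is just the same observation repackaged (and the paper itself uses that coordinate change later, in the corollary), so there is nothing substantive to add.
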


\begin{proof}
    If $x_i=a_i$, then $\dot{x}_i=\sum_{i_2, \ldots, i_m\neq i} A_{i, I} (x_{i_2}-a_{i_2}) \cdots (x_{i_k}-a_{i_k})\geq 0.$ Thus, we complete the proof.
\end{proof}
Component-wise, the system \eqref{eq:shift} is given by,
\begin{equation}
    \dot{x}_i=\sum_{i_2, \ldots, i_k=1}^n A_{i, I} (x_{i_2}-a_{i_2}) \cdots (x_{i_k}-a_{i_k}).
\end{equation}
Then, we define $T=(x_{i_2}-a_{i_2}) \cdots (x_{i_k}-a_{i_k})$ and we have
\begin{equation}
\begin{split}
    T&=x_{i_2} \cdots x_{i_k}+\sum_{p=1;i_2\cdots i_k \neq p}^n  a_{p} x_{i_2} \cdots x_{i_k}\\
    &+\sum_{p,q=1;i_2\cdots i_k \neq p;i_2\cdots i_k \neq q}^n a_{p} a_q x_{i_2} \cdots x_{i_k}\\
    &+\cdots+ a_{i_2} \cdots a_{i_k}.
\end{split}
\end{equation}
Thus, we can rewrite \eqref{eq:shift} as 
{\small\begin{equation}\label{eq:shift2}
    \dot{x}=Ax^{k-1}+B_{k-2}(a,A) x^{k-2}+B_{k-3}(a,A) x^{k-3}+\cdots  +Aa^{k-1}+b,
\end{equation}}\par
\noindent where the entries of $B_{i}$ (for any $i$) are uniquely determined by the vector $a$ and tensor $A$. Each $B_i$ is still a Metzler tensor. The system \eqref{eq:shift2} is based on a non-uniform hypergraph since it contains tensors of different orders. {Conversely, a system in the form of \eqref{eq:shift2} can not be always rewritten as \eqref{eq:shift} and one can always use the method of undetermined coefficients to check the possibility.} By Theorem \ref{thm:main1}, we have the following Corollary.

\begin{cor}\label{cor:main1}
    Consider \eqref{eq:shift}. Then $a$ is always a positive equilibrium of the system \eqref{eq:shift}. Moreover, $a$ is globally asymptotically stable, within the domain of attraction $\{x|x\geq a\}$, if  $\lambda(A)=\rho(B)-s<0$; 
    and unstable if $\lambda(A)=\rho(B)-s> 0$. 
\end{cor}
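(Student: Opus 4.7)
The plan is to reduce Corollary \ref{cor:main1} directly to Theorem \ref{thm:main1} via the translation of coordinates $y = x - a$. Under this change, the dynamics \eqref{eq:shift} becomes $\dot{y} = \dot{x} = A(x-a)^{k-1} = Ay^{k-1}$, which is precisely the homogeneous Metzler-tensor system \eqref{eq:sys1} studied in Theorem \ref{thm:main1}. Since $A$ is unchanged by the translation, its Perron-H-eigenvalue $\lambda(A)$ and associated eigenvector $\delta$ are also unchanged, so the sign condition on $\lambda(A)$ carries over verbatim.

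First I would confirm that $x=a$ is an equilibrium by substitution: $A(a-a)^{k-1} = A\mathbf{0}^{k-1} = \mathbf{0}$. Next I would observe that the translation $y=x-a$ is a bijection sending the shifted cone $\{x\in\mathbb{R}^n \mid x \geq a\}$ onto the nonnegative orthant $\mathbb{R}^n_+$, and that the positive invariance of $\{x\mid x\geq a\}$ already established in the preceding lemma matches the positive invariance of $\mathbb{R}^n_+$ under \eqref{eq:sys1}.

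For case (a) with $\lambda(A) < 0$, Theorem \ref{thm:main1} yields global asymptotic stability of the origin for $\dot{y}=Ay^{k-1}$ within $\mathbb{R}^n_+$. Translating back, every trajectory of \eqref{eq:shift} starting in $\{x\mid x\geq a\}$ converges to $x=a$, establishing global asymptotic stability within that domain of attraction. For case (b) with $\lambda(A) > 0$, Theorem \ref{thm:main1} gives instability of the origin for the $y$-system by the Chetaev argument, and the same translation transfers this instability to the equilibrium $x=a$ of \eqref{eq:shift}.

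There is no real obstacle here: the whole argument is a change-of-variables reduction to the previously proven theorem, and I would write it as a one-paragraph proof invoking Theorem \ref{thm:main1}. The only point worth emphasizing explicitly is that the translated system is still Metzler-tensor-based on a \emph{uniform} hypergraph of order $k$, so Theorem \ref{thm:main1} applies directly without needing the more intricate non-uniform machinery of Theorems \ref{thm:nonuni} and \ref{thm:nonuni2}, even though the original system, when expanded as in \eqref{eq:shift2}, looks non-uniform.
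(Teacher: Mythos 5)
Your proposal is correct and is essentially identical to the paper's own proof: the paper also applies the translation $W:\{x\mid x\geq a\}\to\mathbb{R}^n_+$, $x\mapsto x-a$, and then invokes Theorem \ref{thm:main1}. No gaps; your explicit remarks about invariance and the uniform-hypergraph structure are just a more detailed write-up of the same argument.
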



\begin{proof}
    We consider a change of coordinate:
     \begin{equation}
 \begin{split}
     W: \{x|x\geq a\} \rightarrow \mathbb R^n_+, 
     x \mapsto x-a.
 \end{split}
\end{equation}
By further using Theorem \ref{thm:main1}, we complete the proof.
\end{proof}

\begin{exa}
    Let the tensor $A$ be the tensor of Example \ref{exa:1}. Let the matrix $B$ be a matrix with all off-diagonal entries $1$ and all diagonal entries $-5$. The matrix $B$ is an irreducible Metzler matrix with Perron-eigenvalue $-2$ and Perron-H-eigenvector $\mathbf{1}$.
    Now consider the system $\Dot{x}=Ax^3+Bx$. From figure \ref{fig:fig4}, we see that the solution of the system $\Dot{x}=Ax^3+Bx$ converges to the origin from a random initial condition. 
    \begin{figure}
        \centering
        \includegraphics[height=4cm]{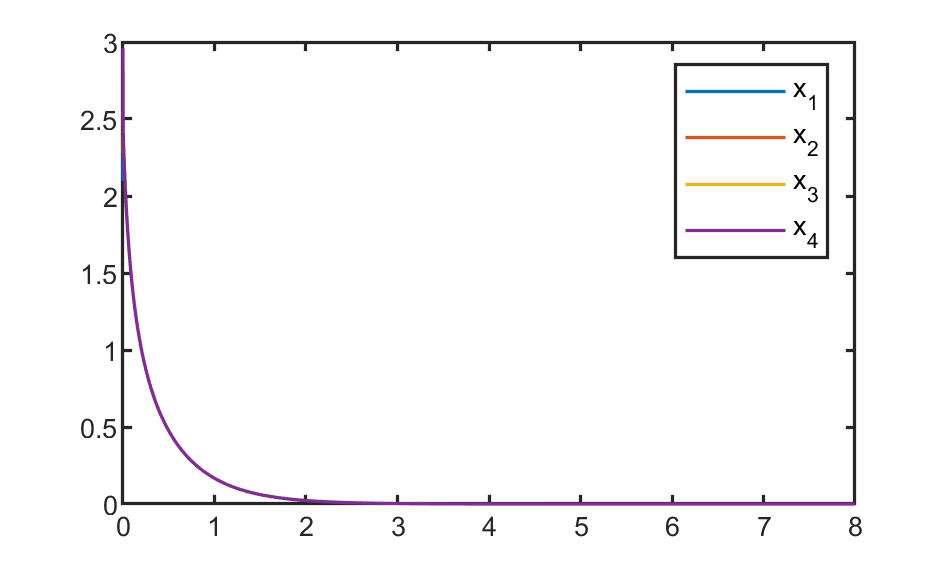}
        \caption{Simulation of the system $\Dot{x}=Ax^3+Bx$ with $\lambda(A)<0$, $\lambda(B)<0$ and $A,B$ having the same Perron-eigenvector.}
        \label{fig:fig4}
    \end{figure}
\end{exa}

\begin{exa}
    \csx{Now, consider a very simple example of Corollary \ref{cor:main1}. Consider a scalar system $\dot{x}=x^2-2x+1$ in the form of \eqref{eq:shift2}. Consider a change of coordinate $z=x-1$; the dynamics become $\dot{z}=z^2$. Thus, the origin of $\dot{z}=z^2$ and the equilibrium $1$ of $\dot{x}=x^2-2x+1$ are unstable. We further consider a feedback controller $g(u)=-2z^2$ as introduced in \ref{sec:controller}. Then, the closed loop system is $\dot{z}=-z^2$. The origin is then stabilized and the $1$ of the original system becomes also asymptotically stable.}
\end{exa}

\section{Application to an SIS epidemic model on a uniform \csx{and a non-uniform} hypergraph}\label{sec:sis}

The properties of Metzler tensors can be further used to study many other dynamical systems on a hypergraph. { Here, we use an SIS model as an example. Although the system is slightly different from the form of a coupled cell system \eqref{eq:cell}, the methods we develop in the paper are still applicable because the system is upper bounded by a coupled cell system \eqref{eq:cell} with Metzler structure. Otherwise, the system can be rewritten as a coupled cell system with some reformulation, though the first approach will be much easier.} In this section, we use an SIS epidemic model on a uniform hypergraph as an illustrative example.

The simplicial SIS model is proposed in \cite{cisneros2021multigroup}, {which can capture the spreading process over a higher-order social network}. It is assumed that $x=(x_1,\ldots,x_n)^\top \in[0,1]^n$. Let $\beta_1, \beta_2\geq 0$, and $\gamma_i>0$ $a_{ij}\geq 0$, ${c}_{ijk}$, $i,j,k \in\{1, \ldots, n\}$. Then, the simplicial SIS model is given by
\begin{equation}\label{eq:sis}
\begin{split}
        \dot{x}_i&=-\gamma_i x_i+\beta_1\left(1-x_i\right) \sum_{j=1}^n a_{i j} x_j\\
&\quad+\beta_2\left(1-x_i\right) \sum_{j, k=1}^n {c}_{i j k} x_j x_k.
\end{split}
\end{equation}
for $i \in\{1, \ldots, n\}$. Now, if we set $\beta_1=0, \beta_2>0$, then \eqref{eq:sis} is based on a uniform hypergraph of 3-body interactions, {which is usually a social network describing both pairwise and group-wise interactions. The social network is a hypergraph constructed through an adjacency matrix $[a_{ij}]$ and an adjacency tensor $[c_{ijk}]$.} Moreover,  \eqref{eq:sis} can be written into the tensor form
\begin{equation}\label{eq:SISt}
    \dot{x}=-Dx+(I-\Dg(x))(\beta_2 {C}x^2),
\end{equation}
where $D=\Dg(\gamma_1,\cdots,\gamma_n)$, ${C}$ is a tensor with entries $b_{ijk}$.

In \cite{cui2023general,cisneros2021multigroup}, only analysis for the case $\beta_1\neq 0$ is provided. In the next, we analyze the case when $\beta_1=0, \beta_2>0$ and consider the following assumption. {Here, we intentionally ignore the pairwise interactions at this moment, which can be regarded as an ideal experiment to see how the spreading process looks like when people can only interact in groups. \csx{Clearly, the real spreading process is a combination of the effect of group-wise interactions and the effect of pairwise interactions. However, one of the key problems of a networked spreading process is to identify the difference between the effect of group-wise interactions and the effect of pairwise interactions. For this purpose, it is meaningful to compare the system behavior under a purely pairwise interaction setting, a purely group-wise interaction setting, and a combination of both.}
Furthermore, since the system behavior for the case $\beta_1> 0, \beta_2=0$ (on a conventional graph), and the case $\beta_1> 0, \beta_2>0$ (on a non-uniform hypergraph) is studied in \cite{cui2023general,cisneros2021multigroup}, combined with the result provided in this section (on a uniform hypergraph), we can see how system's behavior changes with the change of network.}

\begin{assumption}\label{ass:1}
    The matrix $D$ is a positive diagonal matrix, $B$ is an irreducible nonnegative tensor, and $\beta_2>0$.
\end{assumption}

\begin{lemma}
    If Assumption \ref{ass:1} holds, then $[0,1]^n$ is a positively invariant set of the system \eqref{eq:sis}.
\end{lemma}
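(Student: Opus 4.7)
The plan is to verify positive invariance by a standard boundary analysis, checking Nagumo's tangentiality condition on each face of the hypercube $[0,1]^n$. The boundary of $[0,1]^n$ decomposes into faces of the form $\{x_i = 0\}$ and $\{x_i = 1\}$, so it suffices to show that along $\{x_i = 0\}$ we have $\dot{x}_i \geq 0$, and along $\{x_i = 1\}$ we have $\dot{x}_i \leq 0$, for any state $x$ lying in $[0,1]^n$ on that face.

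First I would consider the face $\{x_i = 0\}$. Substituting into the $i$th component of \eqref{eq:sis} (with $\beta_1 = 0$, $\beta_2 > 0$), the linear term $-\gamma_i x_i$ vanishes, and the remaining contribution is
\begin{equation*}
\dot{x}_i = \beta_2 \sum_{j,k=1}^n c_{ijk} x_j x_k.
\end{equation*}
Under Assumption \ref{ass:1}, $C$ is a nonnegative tensor, so $c_{ijk} \geq 0$, and $x_j, x_k \geq 0$ since $x \in [0,1]^n$. Hence each summand is nonnegative and $\dot{x}_i \geq 0$, meaning the vector field does not point outward through this face.

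Next I would consider the face $\{x_i = 1\}$. Substituting $x_i = 1$ makes the factor $(1 - x_i)$ vanish, leaving
\begin{equation*}
\dot{x}_i = -\gamma_i,
\end{equation*}
which is strictly negative because $D$ is a positive diagonal matrix, so $\gamma_i > 0$. Thus the vector field points strictly inward through this face.

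Since both boundary conditions are satisfied at every face of $[0,1]^n$, Nagumo's theorem (or an equivalent direct argument based on local Lipschitz continuity of the polynomial right-hand side) ensures that any trajectory starting in $[0,1]^n$ remains in $[0,1]^n$ for all forward time. There is no real obstacle here; the only subtle point is to observe that the entire sum on the face $\{x_i = 0\}$ is nonnegative without requiring that specific diagonal or cross terms carry a particular sign, which is immediate from nonnegativity of $C$.
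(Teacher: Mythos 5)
Your proof is correct and follows essentially the same route as the paper: checking that on the face $\{x_i=0\}$ the rate $\dot{x}_i=\beta_2\sum_{j,k}c_{ijk}x_jx_k\geq 0$ by nonnegativity of the tensor, and on the face $\{x_i=1\}$ the factor $(1-x_i)$ kills the infection term so $\dot{x}_i=-\gamma_i<0$. The only cosmetic difference is that you invoke Nagumo's theorem explicitly, while the paper leaves the boundary-check-to-invariance step implicit.
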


\begin{proof}
    If $x_i=0$, then $\dot{x}_i=\sum_{i_2, \ldots, i_m\neq i} \beta_2 {c}_{i, I} x_{i_2} \cdots x_{i_k}\geq 0.$ If $x_i=1$, then $\dot{x}_i=-\gamma_i x_i < 0.$
\end{proof}

Next, we show the stability of the healthy state by utilizing the properties of a Metzler tensor. Now, define a tensor $\hat{D}$ and let the diagonal entries $\hat{D}_{iii\cdots}=D_{ii}$ and all off-diagonal entries be zero.


{\begin{cor}\label{cor:sis}
    Consider system \eqref{eq:sis}. If Assumption \ref{ass:1} holds and $\lambda(\beta_2 {C}-\hat{D})<0$, then the healthy state (the origin) is globally asymptotically stable.
\end{cor}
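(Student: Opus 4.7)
The plan is to establish the claim by upper-bounding the SIS dynamics by a uniform Metzler-tensor-based system on a hypergraph and then invoking Theorem \ref{thm:main1}. The positively invariant set for \eqref{eq:sis} is $[0,1]^n$, and on this set one has $(1-x_i)\le 1$ and $x_i \ge x_i^{2}$ for every $i$. Substituting these two component-wise inequalities into \eqref{eq:sis} (with $\beta_1=0$, $\beta_2>0$) yields
\begin{equation*}
\dot x_i \;=\; -\gamma_i x_i + \beta_2(1-x_i)\sum_{j,k=1}^{n} c_{ijk}\, x_j x_k \;\le\; -\gamma_i x_i^{2} + \beta_2 \sum_{j,k=1}^{n} c_{ijk}\, x_j x_k \;=\; \bigl((\beta_2 C - \hat D)\,x^{2}\bigr)_i,
\end{equation*}
so that on $[0,1]^n$ the SIS vector field is dominated, component by component, by the polynomial vector field $(\beta_2 C-\hat D)x^{2}$ on the uniform $3$-uniform hypergraph with adjacency tensor $\beta_2 C-\hat D$.

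Next I would introduce the comparison system $\dot y = (\beta_2 C - \hat D)\, y^{2}$ with $y(0)=x(0)$. Because $C$ is nonnegative and irreducible by Assumption \ref{ass:1} and $\hat D$ is purely diagonal, the tensor $\beta_2 C - \hat D$ is an irreducible Metzler tensor of order $3$. The induced polynomial vector field has non-negative off-diagonal Jacobian entries on $\mathbb R^n_+$, so the comparison system is cooperative, and Kamke's comparison principle gives $0 \le x(t) \le y(t)$ for all $t\ge 0$ on which the comparison solution exists in $\mathbb R^n_+$.

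Then I would apply Theorem \ref{thm:main1} to the comparison system: the hypothesis $\lambda(\beta_2 C - \hat D)<0$ together with irreducibility of $\beta_2 C - \hat D$ implies that the origin of $\dot y = (\beta_2 C - \hat D)\,y^{2}$ is globally asymptotically stable in $\mathbb R^n_+$, hence $y(t)\to \mathbf 0$. Squeezing with $\mathbf 0 \le x(t)\le y(t)$ forces $x(t)\to \mathbf 0$, which combined with the positive invariance of $[0,1]^n$ already established yields global asymptotic stability of the healthy state.

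The main obstacle is the comparison step: one must verify that the cooperative (Kamke) comparison principle applies to the polynomial right-hand sides on the invariant domain. A self-contained alternative, avoiding importing a comparison theorem, is to mimic the Lyapunov argument from the proof of Theorem \ref{thm:main1} directly. Taking $V_m=\max_i (x_i/\delta_i)^{2}$, with $\delta$ the Perron H-eigenvector of $\beta_2 C-\hat D$, the same two inequalities $(1-x_i)\le 1$ and $x_i\ge x_i^{2}$ on $[0,1]^n$ allow one to bound $\dot V_m \le 2 x_m V_m \lambda(\beta_2 C-\hat D) \le 0$, with equality only when $x_m=0$, and then conclude via the same LaSalle-type argument used for Theorem \ref{thm:main1}.
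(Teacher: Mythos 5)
Your proposal is correct and takes essentially the same route as the paper: the paper's proof also bounds the SIS field on $[0,1]^n$ by $(\beta_2 C-\hat D)x^{2}$ (using $0\le x_i\le 1$) and then concludes via the comparison principle of ODEs together with Theorem \ref{thm:main1}. You merely make explicit the cooperativity (Kamke) justification for the comparison step, and your alternative direct Lyapunov argument is a valid unpacking of the same machinery.
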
}

\begin{proof}
{We notice that $\dot{x}=-Dx+(I-\Dg(x))(\beta_2 {C}x^2) \leq (\beta_2 {C}-\hat{D})x^2$ as $0\leq x_i \leq 1.$ Notice that $\beta_2 {C}-\hat{D}$ is a Metzler tensor with $\lambda(\beta_2 {C}-\hat{D})<0$. We complete the proof from the comparison principle of ODEs and Theorem \ref{thm:main1}. }

\end{proof}

\begin{exa}
    Let the tensor ${C}$ be the tensor of order $3$ and dimension $4$ with all off-diagonal entries $0.01$ and all diagonal entries $0$. Now set $\beta_2=1$ and $D=\Dg[(0.9,0.9,0.9,0.9)^\top]$. One can confirm that $\beta_2{C}-\hat{D}$ is an irreducible Metzler tensor with Perron-H-eigenvalue $-0.75$ and Perron-H-eigenvector $\mathbf{1}$. From Figure \ref{fig:sis}, we see that the solution of \eqref{eq:sis} converges to the origin from a random initial condition.
    \begin{figure}
        \centering
        \includegraphics[height=4cm]{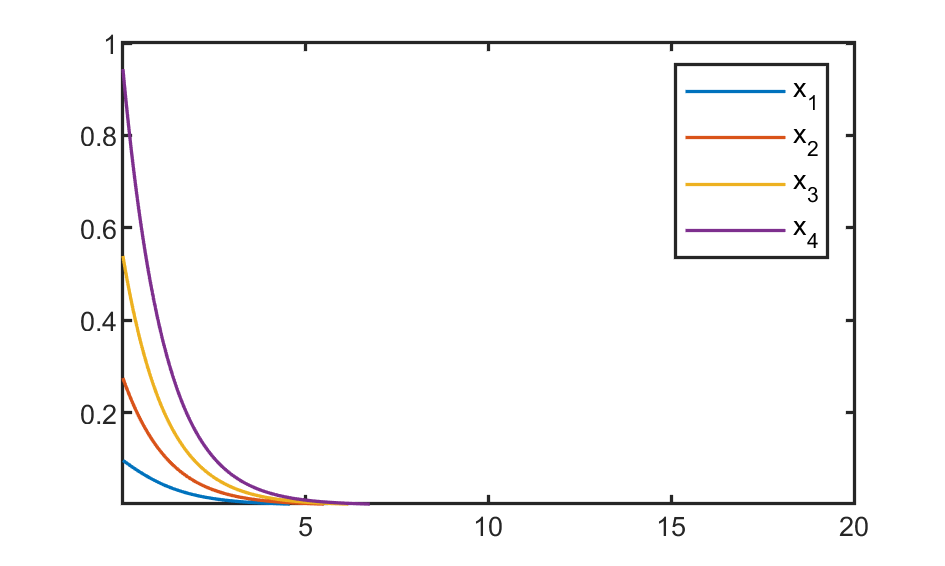}
        \caption{Simulation of an SIS epidemic model on a uniform hypergraph \eqref{eq:SISt}}
        \label{fig:sis}
    \end{figure}
\end{exa}

{Now, we continue to study \eqref{eq:sis} with $\beta_1> 0, \beta_2>0$ (on a non-uniform hypergraph). Some results have already been developed by \cite{cui2023general,cisneros2021multigroup}. The whole dynamics can be written as 
\begin{equation}\label{eq:sisnon}
    \dot{x}=-Dx+(I-\Dg(x))(\beta_1 Ax+\beta_2 {C}x^2),
\end{equation}
where $A$ is a matrix with entries $a_{ij}$.}

\begin{assumption}\label{ass:2}
    {The matrix $D$ is a positive diagonal matrix, $A,{C}$ are nonnegative tensors, and $\beta_1>0, \beta_2>0$.}
\end{assumption}

\begin{lemma}[\cite{cisneros2021multigroup}]
    {If Assumption \ref{ass:2} holds, then $[0,1]^n$ is a positively invariant set of the system \eqref{eq:sisnon}.}
\end{lemma}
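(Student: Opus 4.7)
The plan is to invoke the standard Nagumo-type boundary condition: since the right-hand side of \eqref{eq:sisnon} is a polynomial and hence locally Lipschitz, the closed box $[0,1]^n$ is positively invariant if and only if the vector field lies in the tangent cone to $[0,1]^n$ at every boundary point. The boundary of $[0,1]^n$ decomposes into the $2n$ coordinate faces $\{x_i=0\}$ and $\{x_i=1\}$ together with all their lower-dimensional intersections, so the work reduces to checking the sign of $\dot x_i$ on each of these faces under Assumption \ref{ass:2}.

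On the lower face $\{x_i=0\}$, I would substitute $x_i=0$ into the $i$-th component of \eqref{eq:sisnon}: the recovery contribution $-D_{ii}x_i$ vanishes, the factor $(1-x_i)$ becomes $1$, and what remains is $\dot x_i=\beta_1\sum_j a_{ij}x_j+\beta_2\sum_{j,k} c_{ijk} x_j x_k$. Because $A$ and $C$ have nonnegative entries, $\beta_1,\beta_2>0$, and the remaining coordinates lie in $[0,1]$, every summand is nonnegative, so $\dot x_i\geq 0$ and the trajectory cannot escape through this face.

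On the upper face $\{x_i=1\}$, the diagonal multiplier $(I-\Dg(x))$ annihilates the entire infection term in the $i$-th component. Only the recovery term survives, so $\dot x_i=-D_{ii}<0$ by positivity of the diagonal matrix $D$. Hence the vector field points strictly inward on every upper face, and $x_i$ cannot cross $1$ from below.

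The only step that is not entirely mechanical is lifting these per-coordinate sign checks to lower-dimensional faces (edges, corners) where several constraints $x_i=0$ or $x_j=1$ hold simultaneously. This is automatic from the observation that the tangent cone to $[0,1]^n$ at any such boundary point is the intersection of the half-space tangent cones associated with each individually active constraint; since the appropriate inequality for $\dot x_i$ has been verified face by face, the field lies in that intersection. I do not anticipate a genuine obstacle here: the conclusion is a direct structural consequence of the nonnegativity of $A$ and $C$ combined with the positivity of the diagonal recovery matrix $D$.
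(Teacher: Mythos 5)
Your proof is correct and follows essentially the same route as the paper: the lemma is quoted from \cite{cisneros2021multigroup}, and both that reference and the paper's own proof of the analogous uniform-hypergraph lemma rest on exactly your face-by-face sign check, namely $\dot{x}_i\geq 0$ on $\{x_i=0\}$ by nonnegativity of $A$, $C$ and $\dot{x}_i=-\gamma_i<0$ on $\{x_i=1\}$ since $(1-x_i)$ annihilates the infection terms. Your additional Nagumo/tangent-cone remark for the lower-dimensional faces is a harmless formalization of the same argument.
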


\begin{cor}\label{cor:sis2}
    {Consider system \eqref{eq:sis}. If Assumption \ref{ass:1} holds and $\gamma_i> \beta_1 \sum_{j=1}^n a_{i j} + \beta_2 \sum_{j, k=1}^n {c}_{i j k}$ for all $i$, then the healthy state (the origin) is globally asymptotically stable.}
\end{cor}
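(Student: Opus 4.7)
I would mimic the proof of Theorem \ref{thm:main1} by constructing a Lyapunov-like function aligned with the common positive vector $\mathbf{1}$, which is precisely the vector for which the hypothesis yields strict diagonal dominance. Concretely, define
\begin{equation*}
V(x) = \max_i x_i,
\end{equation*}
a continuous, non-negative function on the invariant set $[0,1]^n$ that vanishes only at the origin. The plan is to bound the upper-right Dini derivative of $V$ along trajectories of \eqref{eq:sisnon} by $-\alpha V$ for a suitable $\alpha > 0$, and then conclude exponential convergence to zero on $[0,1]^n$.

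Fix an index $m \in \argmax_i x_i$. Using positive invariance, $x_j \in [0,1]$ for every $j$, so
\begin{equation*}
(1-x_m) \leq 1, \qquad x_j \leq x_m, \qquad x_j x_k \leq x_m^2 \leq x_m.
\end{equation*}
Plugging these into the $m$-th component of \eqref{eq:sisnon} gives
\begin{align*}
\dot{x}_m &\leq -\gamma_m x_m + \beta_1 \sum_{j=1}^n a_{mj} x_j + \beta_2 \sum_{j,k=1}^n c_{mjk} x_j x_k \\
&\leq x_m \left[-\gamma_m + \beta_1 \sum_{j=1}^n a_{mj} + \beta_2 \sum_{j,k=1}^n c_{mjk}\right] \leq -\alpha x_m,
\end{align*}
where $\alpha := \min_i \bigl(\gamma_i - \beta_1 \sum_j a_{ij} - \beta_2 \sum_{j,k} c_{ijk}\bigr) > 0$ by hypothesis. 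Hence $\dot{x}_m \leq -\alpha V$ at every smooth point.

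The non-smoothness of $V$ at configurations where several coordinates tie for the maximum is handled exactly as in the proof of Theorem \ref{thm:main1}, by invoking Lemmas \ref{lem:i1} and \ref{lem:i2} to pass the inequality to the upper-right Dini derivative $D^+ V$. Then $D^+ V \leq -\alpha V$ yields $V(x(t)) \leq V(x(0)) e^{-\alpha t}$, so the solution converges exponentially to the origin from every initial condition in $[0,1]^n$. I do not expect a serious obstacle: the only subtlety is the bookkeeping for the nonsmooth Dini-derivative step, which is essentially a copy of the one already carried out in the paper. As a backup route, one could instead use a linear-matrix comparison: since $x_j x_k \leq x_k$, the dynamics satisfy $\dot{x} \leq Mx$ with $M := -D + \beta_1 A + \beta_2 \bar{C}$ and $\bar{c}_{ik} := \sum_j c_{ijk}$; the matrix $M$ is Metzler with $M\mathbf{1} < \mathbf{0}$, hence Hurwitz, and a standard cooperative comparison with $\dot{z} = Mz$ reaches the same conclusion without invoking Dini derivatives.
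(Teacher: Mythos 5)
Your proposal is correct, and it reaches the conclusion by a route that is different in packaging, though the same inequality does the work in both. The paper splits the recovery rate $\gamma_i$ into $\hat{\gamma}_i+\tilde{\gamma}_i$, upper-bounds \eqref{eq:sisnon} on $[0,1]^n$ by the non-uniform Metzler-tensor system $(\beta_2 C-\tilde{D})x^2+(\beta_1 A-\hat{D})x$, and then invokes Theorem \ref{thm:nonuni2} with the common positive vector $\mathbf{1}$ together with the comparison principle; your hypothesis $\gamma_i>\beta_1\sum_j a_{ij}+\beta_2\sum_{j,k}c_{ijk}$ is exactly what makes $\mathbf{1}$ admissible there. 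You instead unroll that argument by hand: $V(x)=\max_i x_i$ is precisely the Lyapunov-like function inside Theorems \ref{thm:nonuni} and \ref{thm:nonuni2} specialized to $y=\mathbf{1}$ (up to the monotone power $k-1$), and your estimate $\dot{x}_m\le -\alpha x_m$ is the row-sum inequality applied at the maximizing index, using $1-x_m\le 1$, $x_j\le x_m$ and $x_jx_k\le x_m$ on the invariant cube (this tacitly uses nonnegativity of $a_{ij}$ and $c_{ijk}$, which is part of the model assumptions). What your route buys is an explicit exponential rate $\alpha=\min_i(\gamma_i-\beta_1\sum_j a_{ij}-\beta_2\sum_{j,k}c_{ijk})$, which the paper's citation of Theorem \ref{thm:nonuni2} does not state; what the paper's route buys is reuse of the general tensor result without redoing the nonsmooth analysis. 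Two minor remarks: the exponential bound $V(x(t))\le V(x(0))e^{-\alpha t}$ follows from the standard comparison lemma for the Dini derivative of a max-type function (namely $D^+V\le\max_{m\in\argmax_i x_i}\dot{x}_m$), rather than from Lemmas \ref{lem:i1} and \ref{lem:i2}, which only give LaSalle-type convergence; and your backup argument via $\dot{x}\le Mx$ with the Metzler matrix $M=-D+\beta_1 A+\beta_2\bar{C}$, $\bar{c}_{ik}=\sum_j c_{ijk}$, and $M\mathbf{1}<\mathbf{0}$ is also sound and is arguably the most elementary proof of all, at the cost of discarding the tensor/hypergraph structure entirely.
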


\begin{proof}
    {As $\gamma_i> \beta_1 \sum_{j=1}^n a_{i j} + \beta_2 \sum_{j, k=1}^n {c}_{i j k}$, there must exist $\hat{\gamma}_i$ and $\Tilde{\gamma}_i$ such that $\hat{\gamma}_i> \beta_1 \sum_{j=1}^n a_{i j}$ and $\Tilde{\gamma}_i> \beta_2 \sum_{j, k=1}^n {c}_{i j k}$. Then, construct the tensors $\hat{D}=\Dg(\hat{\gamma})$ and $\Tilde{D}=\Dg(\Tilde{\gamma})$. 
    We notice that $\dot{x}=-Dx+(I-\Dg(x))(\beta_1 Ax+\beta_2 {C}x^2) \leq (\beta_2 {C}-\Tilde{D})x^2 + (\beta_1 A-\hat{D})x$ as $0\leq x_i \leq 1.$ Notice that the tensors $\beta_2 {C}-\Tilde{D}$ and $\beta_1 A-\hat{D}$ have a common positive vector $\mathbf{1}$, i.e. $(\beta_2 {C}-\Tilde{D})\mathbf{1}^2>\mathbf{0}$ and $(\beta_1 A-\hat{D})\mathbf{1}>\mathbf{0}$. According to Theorem \ref{thm:nonuni2}, one completes the proof.}
\end{proof}

\begin{remark}
    {In \cite[Theorem 5.1]{cisneros2021multigroup},  a sufficient condition for global convergence with respect to the healthy state is also provided, namely \csx{$A$ is irreducible (thus must persist) and $\rho(\beta_1 D^{-1}A+ \beta_2 D^{-1}(\mathbf{1}^\top C_1,\cdots, \mathbf{1}^\top C_n))<1$, where $C_i$ is a matrix with $(C_i)_{jk}=c_{ijk}$.} However, the provided condition is complicated and doesn't have a clear physical meaning. Our result of Corollary \ref{cor:sis2} is different from \cite[Theorem 5.1]{cisneros2021multigroup} since the physical interpretation is straightforward: the origin is globally stable when the healing rate of a node is larger than the sum of all first- and second-order infection rates. \csx{Furthermore, the condition of \cite[Theorem 5.1]{cisneros2021multigroup} cannot be applied to the case when $A$ doesn't persist and our Corollary \ref{cor:sis} fills this gap.}
    Recall that for an SIS on a graph, the condition for a globally stable healthy state is $\lambda(\beta_1 W-D)<0$ where $W,D$ are matrices of infection rates and recovery rates respectively \cite{liu2019analysis}. Combined with Corollary \ref{cor:sis} and Corollary \ref{cor:sis2}, one sees how the healthy domain is changed with the network structure.}
\end{remark}

\section{Application on a cooperative Lotka-Volterra model on a uniform hypergraph}\label{sec:lv}
In this section, we consider a cooperative Lotka-Volterra model on a uniform hypergraph given by
\begin{equation}\label{eq:lv}
    \dot{x}= \Dg(x)(Ax^{k-1}+b),
\end{equation}
where $x\in\mathbb R^n$, $A$ is an irreducible Metzler tensor and $b$ is a positive vector. The model \eqref{eq:lv} is inspired by \cite{singh2021higher} and only takes $k$-body interaction into account. Furthermore, the physical meaning of  $A$ being an irreducible Metzler tensor is that all species cooperate with each other.

It is easy to check the system \eqref{eq:lv} is still a positive system. Since \eqref{eq:lv} is similar to  \eqref{eq:affine}. It follows that \eqref{eq:lv} has a unique positive equilibrium $x^*$, which is the direct consequence of Lemma \ref{lem:sol}. The following proposition confirms its stability.

{\begin{cor}\label{prop:lv}
    Consider system \eqref{eq:affine} with $-A$ a non-singular $\mathcal{M}$-tensor. The unique positive equilibrium $x^*$ is asymptotically stable with a domain of attraction $\{x\,|\,V= \max_i \left(\frac{x_i}{x^*_i}\right)^{k-1}>1\}$. From almost all initial conditions in $\{x\,|\,\mathbf{0}<x< x^*\}$, the solution converges to the unique positive equilibrium $x^*$.
\end{cor}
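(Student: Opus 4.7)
The plan is to mirror the proofs of Theorem~\ref{thm:main2} and Theorem~\ref{thm:main1}, adapted to the multiplicative $\Dg(x)$ factor appearing in \eqref{eq:lv}. First, I would establish the unique positive equilibrium: setting $\dot{x}=\mathbf{0}$ in the interior of $\mathbb{R}^n_+$ makes $\Dg(x)$ invertible, reducing the equilibrium condition to $-Ax^{k-1}=b$, which by Lemma~\ref{lem:sol} has a unique positive solution $x^{*}$ (since $-A$ is a non-singular $\mathcal{M}$-tensor and $b>\mathbf{0}$).

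For the Lyapunov-based part, I would reuse $V_m=\max_i (x_i/x_i^{*})^{k-1}$ from Theorem~\ref{thm:main2}. With $m=\argmax_i(x_i/x_i^{*})^{k-1}$ and $T_m=(k-1)x_m^{k-2}$, the extra $\Dg(x)$ factor only contributes an $x_m$ in the derivative:
\[
\dot{V}_m=\frac{T_m x_m}{(x_m^{*})^{k-1}}\Big((Ax^{k-1})_m+b_m\Big).
\]
The critical estimate exploits that $A$ is Metzler: bounding the non-negative off-diagonal terms via $x_{i_j}\le V_m^{1/(k-1)} x_{i_j}^{*}$ and using the exact equality $x_m^{k-1}=V_m (x_m^{*})^{k-1}$ at the argmax (which handles the negative diagonal entry $A_{m\cdots m}$ without sign problems) yields $(Ax^{k-1})_m\le V_m(A(x^{*})^{k-1})_m=-V_m b_m$. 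Substituting gives the clean bound
\[
\dot{V}_m\le \frac{T_m x_m b_m}{(x_m^{*})^{k-1}}(1-V_m),
\]
which is strictly negative on $\{V>1\}\cap\{x_m>0\}$. The non-smoothness of $V_m$ at index-swapping times is handled via Dini derivatives exactly as in Theorem~\ref{thm:main1}, invoking Lemmas~\ref{lem:i1} and \ref{lem:i2}. Each sublevel set $\{V\le c\}$ with $c\ge 1$ is thus positively invariant and bounded, so a trajectory starting in $\{V>1\}$ stays confined, $V$ decreases strictly, and by a LaSalle-type argument the $\omega$-limit set lies in $\{V=1\}$; combined with uniqueness of the positive equilibrium this limit must be $x^{*}$.

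For the convergence claim from almost all initial conditions in $\{\mathbf{0}<x<x^{*}\}$, I would reuse the monotone-systems argument from Theorem~\ref{thm:main2}. The Jacobian of \eqref{eq:lv} has off-diagonal entries $\partial \dot{x}_i/\partial x_j=x_i\cdot\partial(Ax^{k-1})_i/\partial x_j\ge 0$ for $j\ne i$ because $A$ is Metzler, and the irreducibility of $A$ transfers to the Jacobian in the open positive orthant, so \eqref{eq:lv} is an irreducible cooperative system there. The set $\{\mathbf{0}<x<x^{*}\}$ is positively invariant: order-preservation gives $\phi_t(x)\le\phi_t(x^{*})=x^{*}$, and $\dot{x}_i\ge x_i b_i>0$ near $x_i=0$ prevents exit through the coordinate hyperplanes. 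Since $x^{*}$ is the only interior equilibrium on the closure of this set, Lemma~2.3 of \cite{ye2022convergence} yields convergence from almost all initial conditions.

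The principal obstacle I anticipate is the LaSalle-style reduction from $\dot{V}_m\le 0$ to actual convergence to the single point $x^{*}$ rather than to the larger level set $\{V=1\}$; I expect this to be resolved by combining the decrease of $V$ with the order-preservation of the cooperative flow (any putative non-$x^{*}$ limit point in $\{V=1\}$ would contradict irreducible monotonicity together with the uniqueness of the positive equilibrium provided by Lemma~\ref{lem:sol}).
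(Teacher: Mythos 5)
Your overall route is the same as the paper's: the paper's proof is literally ``similar to the proof of Theorem~\ref{thm:main2}'' (together with the remark that, on the open positive orthant, the flow of $\dot{x}=\Dg(x)f(x)$ is orbitally equivalent to that of $\dot{x}=f(x)$, so Theorem~\ref{thm:main2} transfers directly), and your Lyapunov estimate, the uniqueness of $x^*$ via Lemma~\ref{lem:sol}, and the monotone-systems argument on $\{\mathbf{0}<x<x^*\}$ all reproduce that proof with the extra factor $x_m$ carried through correctly; you are in fact more careful than the paper about the negative diagonal entry, using the exact equality $x_m^{k-1}=V_m(x_m^*)^{k-1}$ at the argmax.

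The genuine gap is in the concluding attractivity step. You apply Lemma~\ref{lem:i1} on the sublevel sets $\{V\le c\}$, but your bound $\dot{V}_m\le \frac{T_m x_m b_m}{(x_m^*)^{k-1}}(1-V_m)$ gives $\dot{V}\le 0$ only where $V\ge 1$; on the portion $\{V<1\}=\{x<x^*\}$ of the sublevel set $V$ can increase, so the invariance-principle hypothesis fails there and the argument does not confine the $\omega$-limit set as claimed. Moreover, even granting $\dot V\le0$, concluding convergence to $x^*$ rather than to the large set $\{V=1\}$ (all $x\le x^*$ with equality in some coordinate) is exactly the step you flag as an ``obstacle'' and only sketch; the promised fix via ``irreducible monotonicity plus uniqueness'' is not carried out. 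The paper's route avoids this entirely: it first proves (lemma preceding Theorem~\ref{thm:main2}) that $\{x\,|\,x\ge x^*\}$ is positively invariant, and on that set $\{V=1\}$ reduces to the single point $x^*$, so the LaSalle argument closes. The same works here: if $x\ge x^*$ and $x_i=x_i^*$, then $\dot{x}_i=x_i\bigl((Ax^{k-1})_i+b_i\bigr)\ge x_i\bigl((A(x^*)^{k-1})_i+b_i\bigr)=0$ because the off-diagonal entries are nonnegative and the diagonal term is unchanged, so $\{x\ge x^*\}$ is positively invariant for \eqref{eq:lv} and your $V$ estimate on that set yields the conclusion. Note also that the literal set $\{x\,|\,V>1\}$ cannot be a domain of attraction for \eqref{eq:lv} as your argument would need (it contains points with some $x_i=0$, which are frozen under $\dot x_i=x_i(\cdot)$, and points that cross into $\{x<x^*\}$, where only almost-everywhere convergence is available); what the Theorem~\ref{thm:main2}-style argument actually delivers, and what the paper's one-line proof implicitly asserts, is attraction from $\{x\,|\,x\ge x^*\}$. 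Finally, your inequality $\dot x_i\ge x_i b_i$ near $x_i=0$ is not valid (the diagonal term of $A$ is negative); for \eqref{eq:lv} the coordinate hyperplanes are invariant, which is the correct reason trajectories in the open orthant cannot exit through them.
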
}

\begin{proof}
    The proof is similar to the proof of Theorem \ref{thm:main2}.
\end{proof}

The numerical example of the Lotka-Volterra model is analog to the case of the example \ref{exa:2} and thus omitted {(notice that for a positive system, the flow of $\Dot{x}=\Dg(x)f(x)$ is topologically equivalent to that of $\Dot{x}=f(x)$). The local stability of a generalized Lotka-Volterra model on a non-uniform hypergraph is studied in \cite{cui2023species}. Potentially, tensors are a suitable tool to investigate the global stability of such a system, and it is an interesting future research direction. \csx{Notice that the techniques introduced in this paper will serve as a tool for stability analysis. However, as we do not know whether there is a positive solution in the non-homogenous case (the homogeneous case is guaranteed by Lemma \ref{lem:sol}), the existence of a positive equilibrium must be further investigated.}}

\section{Further discussion}
\subsection{Computation and estimation of H-eigenvalues}\label{sec:heig}

Now, one question remains: how to calculate the H-eigenvalues numerically? Or, more precisely, how to determine whether the Perron-H-eigenvalue is negative?

First of all, if a tensor is supersymmetric, one can check its H-eigenvalues by using the Gershgorin circle theorem for a real supersymmetric tensor (Theorem 6 \cite{qi2005eigenvalues}).

\begin{lemma}[Theorem 6 \cite{qi2005eigenvalues}]\label{lem:circle}
    The eigenvalues of a supersymmetric tensor $A$ lie in the union of $n$ disks in $\mathbb{C}$. These $n$ disks have the diagonal elements of the supersymmetric tensor as their centers, and the sums of the absolute values of the off-diagonal elements as their radii.
\end{lemma}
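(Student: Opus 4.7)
The plan is to adapt the standard matrix Gershgorin disk argument to the tensor eigen-equation $A x^{m-1} = \lambda x^{[m-1]}$. I take an arbitrary eigenvalue $\lambda \in \mathbb{C}$ of the supersymmetric tensor $A$ together with an associated nonzero eigenvector $x = (x_1, \ldots, x_n)^\top$, and I pick an index $i \in \{1, \ldots, n\}$ for which $|x_i| = \max_{j} |x_j|$. Since $x \neq \mathbf{0}$, this maximum is strictly positive, which will be crucial for the final division step.

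Next I would write out the $i$-th component of the eigenproblem,
$$\sum_{i_2, \ldots, i_m = 1}^{n} A_{i\, i_2 \cdots i_m}\, x_{i_2} \cdots x_{i_m} \;=\; \lambda\, x_i^{m-1},$$
and split off the diagonal entry $A_{ii\cdots i} x_i^{m-1}$ on the left-hand side to obtain
$$\bigl(\lambda - A_{ii\cdots i}\bigr)\, x_i^{m-1} \;=\; \sum_{(i_2,\ldots,i_m) \neq (i,\ldots,i)} A_{i\, i_2 \cdots i_m}\, x_{i_2} \cdots x_{i_m}.$$
Taking moduli on both sides, applying the triangle inequality, and using the bound $|x_{i_j}| \leq |x_i|$ in each factor on the right, I get the estimate
$$|\lambda - A_{ii\cdots i}|\, |x_i|^{m-1} \;\leq\; |x_i|^{m-1} \sum_{(i_2,\ldots,i_m) \neq (i,\ldots,i)} |A_{i\, i_2 \cdots i_m}|.$$
Dividing by the strictly positive scalar $|x_i|^{m-1}$ then yields precisely the statement that $\lambda$ lies in the $i$-th Gershgorin disk. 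Because $\lambda$ was arbitrary and the index $i$ that emerges depends on its eigenvector but lies in $\{1, \ldots, n\}$, every eigenvalue of $A$ belongs to at least one such disk, i.e., to their union.

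I do not anticipate a serious obstacle; the proof is essentially the classical matrix one with $m-1$ factors $x_{i_j}$ on the right-hand side instead of a single factor. The only mild care required is that, in the generalized setting, $\lambda$ and $x$ may be complex, so I consistently use the complex modulus and rely on the identity $|x_i^{m-1}| = |x_i|^{m-1}$. It is worth noting that supersymmetry is not actually used anywhere in the argument; what is needed is only the row-indexed structure of the tensor-vector product. Supersymmetry is included in the hypothesis because this is the setting in which the H-eigenvalue framework of \cite{qi2005eigenvalues} is formulated.
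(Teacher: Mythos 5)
Your proof is correct: the paper states this lemma as an imported result (Theorem 6 of \cite{qi2005eigenvalues}) without proof, and your maximal-modulus-component argument is exactly the classical Gershgorin adaptation by which that cited theorem is established. Your closing observation that supersymmetry is never used — it merely reflects the setting in which \cite{qi2005eigenvalues} develops the H-eigenvalue theory — is also accurate.
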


This leads to the following result.

\begin{lemma}\label{lem:check}
    If $A$ is a supersymmetric and strictly diagonally dominant Metzler tensor with all negative diagonal elements, then all the H-eigenvalues of $A$ are negative.
\end{lemma}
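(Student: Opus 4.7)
The plan is to derive this as a direct consequence of the Gershgorin-type localization in Lemma \ref{lem:circle}, combined with the fact that H-eigenvalues are real by definition. The key observation is that strict diagonal dominance together with negativity of the diagonal forces each Gershgorin disk to sit strictly in the open left half-plane, so any real point in the union of disks is automatically negative.

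More precisely, I would argue as follows. Let $\lambda$ be an H-eigenvalue of $A$. By the definition of an H-eigenvalue given in Section 2, $\lambda \in \mathbb{R}$. Applying Lemma \ref{lem:circle} to the supersymmetric tensor $A$, there exists an index $i \in \{1,\ldots,n\}$ such that
\begin{equation*}
\bigl|\lambda - A_{ii\ldots i}\bigr| \;\leq\; r_i \;:=\; \sum_{(i_2,\ldots,i_m)\neq (i,\ldots,i)} \bigl|A_{i\, i_2 \ldots i_m}\bigr|.
\end{equation*}
Since $\lambda$ is real, this bound gives in particular $\lambda \leq A_{ii\ldots i} + r_i$. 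Using $A_{ii\ldots i} < 0$ we rewrite this as $\lambda \leq -|A_{ii\ldots i}| + r_i$, and strict diagonal dominance yields $|A_{ii\ldots i}| > r_i$, hence $\lambda < 0$.

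This is really just a one-line consequence once Lemma \ref{lem:circle} is invoked; no genuine obstacle arises. The only subtlety worth flagging explicitly in the write-up is that Lemma \ref{lem:circle} localizes \emph{all} eigenvalues (possibly complex) of a supersymmetric tensor in a union of complex disks, and we restrict attention to the real intersection since H-eigenvalues, as real solutions of \eqref{eq:eigenproblem}, lie on the real axis. Intersecting each disk $\{z : |z - A_{ii\ldots i}| \leq r_i\}$ with $\mathbb{R}$ gives the interval $[A_{ii\ldots i} - r_i,\, A_{ii\ldots i} + r_i]$, whose right endpoint is negative by the diagonal dominance/sign hypotheses, completing the argument.
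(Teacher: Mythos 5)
Your proposal is correct and follows essentially the same route as the paper: both invoke the Gershgorin-type localization of Lemma \ref{lem:circle}, use strict diagonal dominance to bound the radius by $|A_{ii\ldots i}|$, and conclude negativity from the sign of the diagonal entries. Your version is, if anything, slightly more carefully stated (existence of an index $i$ from the union of disks, and the explicit remark that H-eigenvalues are real), whereas the paper writes the bound "for all $i$", but the substance is identical.
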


\begin{proof}
Let $\lambda$ be any H-eigenvalue of $A$.
    According to the Gershgorin circle theorem for a supersymmetric tensor (Lemma \ref{lem:circle}) and considering that $A$ is strictly diagonally dominant, we have that, for all $i=1,2, \ldots, n,$
    \begin{equation*}
        \left|\lambda-A_{i i \ldots i}\right| \leq \sum_{\left(i_2, \ldots, i_m\right) \neq(i, \ldots, i)}\left|A_{i i_2 \ldots i_m}\right|<  \left|A_{i i \ldots i}\right|.
    \end{equation*}
    Since all diagonal elements are negative, we obtain $2A_{i i \ldots i}<\lambda<0$.
\end{proof}

Lemma \ref{lem:check} serves as a simple criterion to check whether the Perron-H-eigenvalue is negative. 

In case this simple lemma doesn't apply, we can still use some numerical methods to compute the H-eigenvalues of a tensor, {although this problem is NP-hard}. For example, TenEig is a MATLAB toolbox to find the eigenpairs of a tensor \cite{teneig}. The toolbox is based on the techniques introduced in \cite{chen2016computing}. {However, this technique may not be numerically stable when the order or the dimension of the tensor is extremely large. It remains an open question to develop efficient and stable algorithms.}

\subsection{Common positive vectors}\label{sec:cpv}

{The following lemma further provides a criterion to check when the Theorem \ref{thm:nonuni2} is applicable.}
\begin{lemma}\label{lem:dom}
    {If $A\in\mathbb R^{[k,n]}$ is a strictly diagonally dominant Metzler tensor with all negative diagonal elements, then it holds that $-A\mathbf{1}^{k-1}>\mathbf{0}$.}
\end{lemma}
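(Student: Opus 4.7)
The plan is to prove the inequality entrywise by direct expansion of the tensor--vector product and then exploit the three structural hypotheses (Metzler, negative diagonal, strict diagonal dominance) in sequence. Since $\mathbf{1}$ has every component equal to one, the $i$-th entry of $A\mathbf{1}^{k-1}$ reduces to a plain sum of the tensor slice indexed by $i$, namely
\begin{equation*}
(A\mathbf{1}^{k-1})_i \;=\; \sum_{i_2,\ldots,i_k=1}^{n} A_{i,i_2\cdots i_k}.
\end{equation*}
The first step is to split this sum into its single diagonal contribution $A_{ii\cdots i}$ and the remaining off-diagonal contributions indexed by $(i_2,\ldots,i_k)\neq(i,\ldots,i)$.

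Next, I would invoke the Metzler property: every off-diagonal entry is non-negative, so $A_{i,i_2\cdots i_k}=|A_{i,i_2\cdots i_k}|$ whenever the multi-index is not $(i,\ldots,i)$. Simultaneously, since $A_{ii\cdots i}<0$ by hypothesis, we may write $A_{ii\cdots i}=-|A_{ii\cdots i}|$. Substituting these two identities gives
\begin{equation*}
(A\mathbf{1}^{k-1})_i \;=\; -|A_{ii\cdots i}| \;+\; \sum_{(i_2,\ldots,i_k)\neq(i,\ldots,i)} |A_{i,i_2\cdots i_k}|.
\end{equation*}
Applying the strict diagonal dominance assumption, which says exactly that the trailing sum is strictly smaller than $|A_{ii\cdots i}|$, we conclude $(A\mathbf{1}^{k-1})_i<0$ for every $i$, equivalently $-A\mathbf{1}^{k-1}>\mathbf{0}$.

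There is no real obstacle here: the lemma is essentially a one-line consequence of the definitions once the absolute-value bookkeeping is done correctly. The only point deserving minor care is keeping the Metzler sign convention straight (non-negative off-diagonals, negative diagonals), so that the triangle-inequality-style rearrangement that underlies strict diagonal dominance collapses cleanly into the strict inequality we need. No appeal to the Perron--Frobenius machinery (Lemma \ref{lem:perron} or Theorem \ref{thm:perron}) is required; the result is purely algebraic and can be used downstream as the sufficient condition that makes Theorem \ref{thm:nonuni2} directly applicable with the common positive vector $y=\mathbf{1}$.
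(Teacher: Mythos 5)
Your proof is correct and follows essentially the same route as the paper: expand $(A\mathbf{1}^{k-1})_i$, use the Metzler sign convention and the negative diagonal, and apply strict diagonal dominance to obtain strict negativity of each entry. The paper's proof is just a terser version of this same entrywise argument.
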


\begin{proof}
  {  From the definition of diagonal dominance we have that $
\left|A_{i i \ldots i}\right| > \sum_{\left(i_2, \ldots, i_m\right) \neq(i, \ldots, i)}\left|A_{i i_2 \ldots i_m}\right| \quad \text { for all } i=1,2, \ldots, n
$. This readily leads to $-A\mathbf{1}^{k-1}>\mathbf{0}$.}
\end{proof}

{Lemma \ref{lem:dom} shows that if all tensors in Theorem \ref{thm:nonuni2} are strictly diagonally dominant Metzler tensors with all diagonal elements negative, then $\mathbf{1}\in\mathbb R^n$ is a common positive vector required by Theorem \ref{thm:nonuni2}. In this setting, Theorem \ref{thm:nonuni2} is directly applicable and there is no need to calculate the H-eigenvalues. }


\section{Conclusions}
In this paper, we study a series of positive systems on hypergraphs constructed from a Metzler tensor. The concept of a Metzler tensor is a generalization of the concept of a Metzler matrix. The main properties of a Metzler tensor are given by the Perron–Frobenius Theorem \ref{thm:perron}. We further use this Theorem to study the stability of a class of positive systems on both uniform and non-uniform hypergraphs, especially for two typical systems (a higher-order Lotka Volterra system and a higher-order SIS process). Moreover, we develop some feedback control strategies to stabilize the origin and investigate a system with a constant control input. All theoretical results are illustrated by some numerical examples. Finally, we illustrate some criteria and numerical methods to check the H-eigenvalues of a tensor.

There are several questions still open for future work. One is how to generalize the results for systems on a more general non-uniform hypergraph. Another is how to investigate the dynamical system on a uniform hypergraph, where the adjacency tensor is not a Metzler tensor. 

\bibliographystyle{IEEEtran}
\bibliography{bib}

\begin{appendix}
We consider discontinuous dynamical systems governed by differential equations of the form
$
\dot{x}=f(x),
$
where $f: \mathbb{R}^n \rightarrow \mathbb{R}^n$ is a piecewise continuous function that undergoes discontinuities on a set $N$ of measure zero. 

\begin{lemma}[\cite{alvarez2000invariance}]\label{lem:i1}
    Suppose that there exists a positive definite, Lipschitz-continuous function $V(x)$ such that
\begin{equation}\label{eq:cond}
\frac{d}{d t} V(x(t))=\left.\frac{d}{d h} V(x(t)+h \dot{x}(t))\right|_{h=0} \leqslant 0
\end{equation}
almost everywhere. Let $M$ be the largest invariant subset of the manifold where the strict equality \eqref{eq:cond} holds, i.e., the set $\left\{x\in\mathbb R^n\,:\,  \frac{d}{d t} V(x(t))=0\right\}$, and let $V(x) \rightarrow \infty$ as $\operatorname{dist}(x, M) \rightarrow \infty$. Then all the trajectories $x(t)$ of $(1)$ converge to $M$, that is,
$$
\lim _{t \rightarrow \infty} \operatorname{dist}(x(t), M)=0.
$$
\end{lemma}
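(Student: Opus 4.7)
The plan is to follow the classical LaSalle invariance principle, adapted to the discontinuous setting, by studying the $\omega$-limit set of a trajectory and exploiting that $V$ must be constant there.

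\textbf{Step 1 (absolute continuity and monotonicity).} First I would observe that since $V$ is Lipschitz and $x(\cdot)$ is a (Carathéodory/Filippov) solution of $\dot{x}=f(x)$, the composition $t \mapsto V(x(t))$ is absolutely continuous, so the fundamental theorem of calculus gives
\begin{equation*}
V(x(t)) - V(x(0)) \;=\; \int_{0}^{t} \frac{d}{ds}V(x(s))\, ds.
\end{equation*}
The hypothesis $\frac{d}{ds}V(x(s))\leq 0$ almost everywhere then yields that $V(x(\cdot))$ is non-increasing. Combined with positive definiteness $V\geq 0$, there exists a limit $V^{\ast}=\lim_{t\to\infty} V(x(t))\geq 0$.

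\textbf{Step 2 (boundedness of the trajectory relative to $M$).} From Step 1 we get $V(x(t))\leq V(x(0))$ for all $t\geq 0$. The radial-unboundedness hypothesis $V(x)\to\infty$ as $\operatorname{dist}(x,M)\to\infty$ implies that the sublevel set $\{x : V(x)\leq V(x(0))\}$ is contained in some tubular neighbourhood $\{x:\operatorname{dist}(x,M)\leq R\}$ of $M$. Consequently $\operatorname{dist}(x(t),M)$ stays bounded for all $t\geq 0$, and any sequence $x(t_n)$ with $t_n\to\infty$ admits a subsequence converging to some point $p$ at finite distance from $M$. This guarantees that the $\omega$-limit set $\Omega(x_0)$ of the trajectory starting at $x_0=x(0)$ is non-empty.

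\textbf{Step 3 (invariance of $\Omega(x_0)$ and identification with $M$).} By continuity of $V$, on $\Omega(x_0)$ the function $V$ takes the constant value $V^{\ast}$. Invoking the invariance of $\omega$-limit sets under the flow (which is the main technical point in the discontinuous case, and where I expect to lean on the solution-concept machinery of the cited paper), any trajectory starting in $\Omega(x_0)$ remains in $\Omega(x_0)$, hence $V\equiv V^{\ast}$ along it, so $\frac{d}{dt}V \equiv 0$ on $\Omega(x_0)$. Therefore $\Omega(x_0)$ is an invariant subset of $\{x:\frac{d}{dt}V=0\}$, and by maximality of $M$, $\Omega(x_0)\subset M$. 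Finally, the convergence $\operatorname{dist}(x(t),\Omega(x_0))\to 0$, inherent to the $\omega$-limit construction on bounded forward orbits, gives $\operatorname{dist}(x(t),M)\to 0$ as claimed.

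\textbf{Main obstacle.} The delicate point is \emph{not} the Lyapunov calculation but the infrastructure behind it in the discontinuous setting: ensuring that solutions are defined globally in forward time, that the $\omega$-limit set is non-empty and invariant under the (set-valued) flow, and that $V(x(\cdot))$ is genuinely absolutely continuous with derivative given by the one-sided directional derivative $\left.\tfrac{d}{dh}V(x+h\dot{x})\right|_{h=0}$ almost everywhere. These are exactly the ingredients developed in \cite{alvarez2000invariance}; once they are in place, the LaSalle-type argument above closes the proof.
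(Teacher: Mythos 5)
First, a point of comparison: the paper does not prove this lemma at all. It is quoted from \cite{alvarez2000invariance} in the appendix and used as an off-the-shelf tool, so there is no in-paper proof to measure your attempt against; your argument has to stand on its own. Your outline is the standard LaSalle route, and Steps 1 and 3 are essentially sound modulo the machinery you explicitly defer to the cited reference (absolute continuity of $V\circ x$ and weak invariance of $\omega$-limit sets for solutions of discontinuous systems, where ``largest invariant subset'' must be read in the weak sense because solutions need not be unique).

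The genuine gap is in Step 2. The hypothesis $V(x)\to\infty$ as $\operatorname{dist}(x,M)\to\infty$ only confines the forward orbit to a tube $\{x:\operatorname{dist}(x,M)\le R\}$; it does not make the orbit bounded unless $M$ itself is bounded. If $M$ is unbounded, the trajectory can drift to infinity inside that tube, a sequence $x(t_n)$ need not have any convergent subsequence, and both of your key claims fail: that $\Omega(x_0)\neq\emptyset$, and that $\operatorname{dist}(x(t),\Omega(x_0))\to 0$ (which is the property you use to convert $\Omega(x_0)\subset M$ into the desired conclusion). Convergence of a trajectory to its $\omega$-limit set is guaranteed for precompact forward orbits, which is exactly what you have not established. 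To close the argument you must either justify boundedness of the forward orbit (e.g.\ $M$ compact, or $V$ radially unbounded in the usual sense --- which happens to be the situation in this paper, where the functions $V_m$ are radially unbounded and the relevant zero-derivative set reduces to the origin), or supply a separate argument for the case of unbounded $M$; as written, the chain ``bounded distance to $M$ $\Rightarrow$ nonempty $\omega$-limit set $\Rightarrow$ convergence to it'' is broken at its first and last links.
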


$ $

\begin{lemma}[\cite{alvarez2000invariance}]\label{lem:i2}
     Condition \eqref{eq:cond} of Theorem 1 is fulfilled if $\frac{d}{d t} V(x(t))=\left.\frac{d}{d h} V(x(t)+h \dot{x}(t))\right|_{h=0}$ is nonpositive at the points of the set $N_V$ where the gradient $\nabla V$ of the function $V(x)$ does not exist, and in the continuity domains of the function $f(x)$ where (4) is expressed in the standard form
$$
\frac{d}{d t} V(x)=\nabla V(x) \cdot f(x), \quad x \in \mathbb{R}^n \backslash\left(N \cup N_V\right).
$$
\end{lemma}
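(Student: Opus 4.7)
The plan is to verify that the combined hypothesis (nonpositivity of the directional derivative on $N_V$, together with the standard inequality $\nabla V(x)\cdot f(x)\le 0$ on the complement of $N\cup N_V$) is enough to conclude $\tfrac{d}{dt}V(x(t))\le 0$ almost everywhere. The key observation is that the ``bad'' sets are both negligible: $N$ has measure zero by assumption on $f$, and $N_V$ has measure zero by Rademacher's theorem applied to the Lipschitz function $V$. Hence $N\cup N_V$ is a Lebesgue null set in $\mathbb R^n$.

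First I would note that along any Carathéodory solution $x(t)$, the composition $V(x(t))$ is Lipschitz in $t$ (because $V$ is Lipschitz and $x(\cdot)$ is absolutely continuous), so $\tfrac{d}{dt}V(x(t))$ exists for a.e.\ $t$. I would then split the relevant times into three groups: (i) times at which $x(t)\notin N\cup N_V$ and $f$ is continuous at $x(t)$; (ii) times at which $x(t)\in N_V$; and (iii) times at which $x(t)\in N$. On group (i), the classical chain rule applies, so
\begin{equation*}
\tfrac{d}{dt}V(x(t))=\nabla V(x(t))\cdot \dot x(t)=\nabla V(x(t))\cdot f(x(t))\le 0
\end{equation*}
by the hypothesis stated in the display of the lemma.

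On group (ii), the gradient is unavailable, but the one-sided directional derivative $\left.\tfrac{d}{dh}V(x(t)+h\dot x(t))\right|_{h=0}$ is still defined (Lipschitzness of $V$ guarantees finiteness of the upper and lower Dini derivatives), and the hypothesis on $N_V$ gives nonpositivity there; a short argument identifies this directional derivative with $\tfrac{d}{dt}V(x(t))$ whenever the latter exists. On group (iii), one uses the fact that $N$ is of measure zero in $\mathbb R^n$, together with the standard ``occupation measure'' observation (following Alvarez) that the set of times $t$ with $x(t)$ in a fixed measure-zero subset of $\mathbb R^n$ is itself of Lebesgue measure zero, provided $\dot x$ is essentially bounded on compact time intervals—which holds because $f$ is piecewise continuous. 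Thus groups (ii) and (iii) contribute only on null time sets, and condition \eqref{eq:cond} follows almost everywhere.

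The main obstacle I anticipate is the occupation-measure argument in group (iii): one must rule out pathological trajectories that linger on the discontinuity set $N$ for a positive-measure set of times. I would address this by invoking an area/coarea-type estimate or, more concretely, by using that on any interval where $x(t)$ stays in $N$ one still has $\dot x(t)=f(x(t))$ in the Filippov/Carathéodory sense, so that $V(x(t))$ is differentiable there with derivative expressible as a directional derivative handled by either (i) or (ii). Once this technicality is dispatched, assembling the three cases yields $\tfrac{d}{dt}V(x(t))\le 0$ a.e., which is exactly condition \eqref{eq:cond}.
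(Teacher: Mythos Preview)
The paper does not prove this lemma; it is quoted from \cite{alvarez2000invariance} and placed in the appendix without argument, so there is no in-paper proof to compare your proposal against.

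On its own merits, your decomposition into groups (i)--(iii), the appeal to Rademacher for $N_V$, and the chain-rule computation on the good set are all standard and correct. The weak point is exactly the one you flag: group (iii). Your occupation-measure claim---that the set of times with $x(t)\in N$ has Lebesgue measure zero because $N$ is null in $\mathbb R^n$---is false in general. Filippov solutions can slide along a measure-zero discontinuity surface for a positive-measure set of times; this is precisely the sliding-mode phenomenon, and no coarea-type estimate rescues it. Your fallback (``handled by either (i) or (ii)'') also fails as written: a point in $N\setminus N_V$ belongs to neither group by definition, so neither hypothesis of the lemma applies there directly.

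What actually closes the gap in Alvarez's setting is the Filippov convention for $\dot x$ on $N$: at such points $\dot x(t)\in F(x(t))=\overline{\mathrm{co}}\{\lim f(y_k):y_k\to x(t),\,y_k\notin N\}$, so when $\nabla V(x(t))$ exists one has $\nabla V(x(t))\cdot\dot x(t)$ expressible as a limit of convex combinations of values $\nabla V(x(t))\cdot f(y_k)$ with each $y_k$ in a continuity domain. Passing the inequality $\nabla V\cdot f\le 0$ through this limit requires a short approximation argument, and that is the missing ingredient in your proposal. For the application in this paper the issue is moot, since $f$ is continuous and $N=\emptyset$; but for the lemma as stated your group-(iii) treatment is incomplete.
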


$ $

Although these two Lemma are designed for a discontinuous dynamical system, we can still use them for a continuous dynamical system if we want to adopt a not continuously differentiable Lyapunov function.
\end{appendix}



\end{document}